\documentclass[11pt]{article}

\usepackage{authblk}

\usepackage[T1]{fontenc}
\usepackage[a4paper,margin=30mm]{geometry}
\usepackage{amsmath,amssymb,amsthm,mathtools}
\usepackage{tikz}
\usepackage{tikz-cd}
\usepackage[most]{tcolorbox}
\usepackage[hidelinks]{hyperref}
\usepackage{quiver}
\usepackage{cleveref}
\usetikzlibrary{patterns,arrows.meta}

\usepackage[colorinlistoftodos,textsize=tiny,linecolor=gray!40]{todonotes}

\newtcolorbox{constructionbox}[1][]{
  enhanced,
  breakable,
  colback=cyan!3,
  colframe=cyan!45!black,
  boxrule=.6pt,
  arc=1.5pt,
  left=7pt,
  right=7pt,
  top=6pt,
  bottom=6pt,
  fonttitle=\bfseries,
  title={#1}
}

\newcommand{\Z}{\mathbb Z}
\newcommand{\F}{\mathbb F}

\newcommand{\cX}{\mathcal X}
\newcommand{\Po}{\mathfrak{P}}

\newcommand{\op}[1]{\operatorname{#1}}

\newtheorem{definition}{Definition}[section]
\newtheorem{example}[definition]{Example}
\newtheorem{remark}[definition]{Remark}
\newtheorem{theorem}[definition]{Theorem}
\newtheorem{proposition}[definition]{Proposition}
\newtheorem{corollary}[definition]{Corollary}
\newtheorem{lemma}[definition]{Lemma}

\newcommand{\ket}[1]{\lvert#1\rangle}

\title{Cups and Gates II:\\ Operads, twists, and addressable
intra-code gates}
\author[1]{Nikolas P. Breuckmann}
\author[2]{Margarita Davydova}
\author[3]{Jens Niklas Eberhardt}
\author[3]{Lucas~Ferdinand Hallwirth}
\author[4]{Nathanan Tantivasadakarn}

\affil[1]{School of Mathematics, University of Bristol, Fry Building,
Woodland Road, Bristol BS8 1UG, United Kingdom}
\affil[2]{Walter Burke Institute for Theoretical Physics, Department of Physics,
and Institute for Quantum Information and Matter, California Institute of Technology,
1200 E California Blvd, Pasadena, CA 91125, USA}
\affil[3]{Institut für Mathematik, Johannes Gutenberg University of Mainz,
Staudingerweg 9, Mainz, 55128, Germany}
\affil[4]{C. N. Yang Institute for Theoretical Physics, Stony Brook University,
Stony Brook, NY 11794, USA}
\date{}

\begin{document}

\maketitle

\begin{abstract}
In Part I, we introduced a general framework for constructing logical gates in quantum CSS codes from cohomology invariants, focusing on \emph{cup products} and their generalizations from topological to qLDPC codes.
In particular, we produced a family of $C^{k-1}Z$ logical gates that we called copy-cup on $k$ copies of the same quantum code implemented by finite-depth circuits.  

Here, we vastly extend this framework by introducing \emph{twists} incorporating code automorphisms, \emph{higher} twisted cohomology operations, and \emph{folding} operations in relative cohomology. The framework includes operations lifting coefficients from $\Z/2$ to $\Z/2^k$, enabling all diagonal logical gates such as single-qubit rotations $\op{R}(2\pi/2^k)$.
Composing these operations with maps that select cohomology classes yields \emph{addressable} logical gates. 
Examples include addressable $T$ gates on three- and four-tori and gates in qLDPC codes beyond topological ones.

We introduce cohomology operations, such as the higher twisted Pontryagin powers, that to our knowledge have not previously appeared in the mathematics literature.
These operations arise from $E_\infty$-structures on cochains, with explicit constructions using cyclic resolutions and the surjection operad.
We expect these results to be of independent interest in homotopy theory. 
Our formalism applies much more broadly to chain complexes carrying suitable operadic structures, and we expect this operadic perspective to uncover further connections between homotopy theory and
fault-tolerant quantum gates.

\end{abstract}

\tableofcontents

\newpage

\newtheorem*{thmA}{Theorem A}
\newtheorem*{thmB}{Theorem B}
\newtheorem*{thmC}{Theorem C}
\newtheorem*{thmD}{Theorem D}

\section{Introduction}
\label{sec:introduction}

Fault-tolerant quantum computing requires the construction of \emph{fault-tolerant logical gates}, which are implemented by a physical circuit that limits the spread of errors.
Logical gates implemented by constant-depth circuits consisting of bounded-size gates are arguably the most efficient way of achieving this. 
Finding a systematic framework for constructing such circuits is the subject of this paper and its first part, Cups and Gates I~\cite{CupsAndGates}.

Early constructions of fault-tolerant logical gates were tied to specific codes, such as the transversal Clifford and $T$ gates in the family of color codes~\cite{Bombin_2007,KubicaYoshidaPastawski15} or families of toric codes in $d$ dimensions~\cite{VasmerBrown3D,4DTCnonClifford}. 
The theorem of Bravyi and K\"onig~\cite{BravyiKoenig2013} bounds the level of the Clifford hierarchy reachable by constant-depth \emph{local} circuits in $D$-dimensional topological codes.
However, it is far less well understood what becomes possible with the long-range connectivity of more general qLDPC codes~\cite{burton2020limitations,FuZhengLiLiu25}. 
With the advent of high-rate qLDPC codes~\cite{BreuckmannEberhardt2021} the question has become pressing, since few of the long-established techniques apply to them directly.\footnote{Other complementary techniques include gate teleportation and magic-state distillation~\cite{GottesmanChuang99,BravyiKitaev05}, code switching~\cite{PaetznickReichardt2013,AndersonDuclosCianciPoulin2014,BombinGaugeCC}, lattice and code surgery~\cite{HorsmanSurgery,CohenSurgery,CowtanSurgery}, pieceable fault tolerance~\cite{YoderPartitioning}, automorphism and fold-transversal gates~\cite{GrasslRoetteler2013,Moussa2016,foldtransversalgates,Quintavalle_2023,EberhardtSteffan2024,SayginelEtAl2024}, and computer search~\cite{WebsterSearch}.}

\paragraph{Cohomology invariants are logical gates.}
In Part I~\cite{CupsAndGates}, we built a framework to construct fault-tolerant logical gates based on \emph{cohomology invariants}, which we recall now.
In a CSS code, physical computational basis states $\ket{a}$ are labeled by cochains $a\in A^q$, while logical states $\ket{[a]}$ arise from the superposition of all representatives of a cohomology class $[a]\in H^q(A)$. A cohomology invariant is a function $\Phi\colon A^q\to\Z/2^s$ whose value on a cocycle depends only on its cohomology class. 
Hence, the diagonal unitary
\[
 U_\Phi\ket a=\exp\bigl(2\pi i\,\Phi(a)/2^s\bigr)\ket a
\]
preserves the code space and acts on the logical basis state $\ket{[a]}$ by applying the corresponding phase. 
The unitary $U_\Phi$ is realized by a circuit that is the product of (multi-controlled) phase rotations. 
The cohomology operations $\Phi$ that we describe here are all \emph{local} themselves in the sense that they are a sum of terms each of which depends on a bounded number of bits, with every bit occurring in a bounded number of terms. 
This in turn implies that they yield a constant-depth physical circuit. 
Cohomology thus connects a global feature, the logical action, to a local one, the circuit, and all computations of logical actions can be carried out at the level of cohomology. 
The resulting circuits are often quite unintuitive, and it is unlikely that many of them could have been found by hand or by brute-force search.

Common sources of cohomology invariants are \emph{cohomology operations} such as the \emph{cup product}, together with an integral. For degree-one cocycles $x_1,\ldots,x_r$ with $\F_2$ coefficients on a closed triangulated $r$-manifold $X$, the integral
\[
 \int_X x_1\cup x_2\cup\cdots\cup x_r
\]
is a cohomology invariant that counts the parity of the multilinear (degree-$r$) intersection of the dual cycles $\{x_i^*\}$.
In topological codes, this immediately gives finite-depth logical gates, such as the $CZ$ gate between two copies of the two-dimensional toric code and the $CCZ$ gate between three copies of the three-dimensional toric code~\cite{Yoshida15,Yoshida16,Yoshida17,KubicaYoshidaPastawski15,Barkeshli23,Barkeshli24_2,Chen2023higher,Wang2023,Zhu2023}.

In Part I~\cite{CupsAndGates} of this work, and related concurrent papers \cite{lin2024transversal,golowich2024quantum}, the notion of cup product was generalized to qLDPC codes. To achieve this, we first defined cup products for classical codes which we then showed to induce cup products on their tensor and balanced products. We identified a structure weaker than a differential graded algebra that is sufficient for the new operation to be a cohomology invariant.\footnote{The condition for invariance that we found was that the product need not be associative, and the Leibniz rule only needs to hold after integration. This condition can be checked locally on the parity-check matrices, in a way reminiscent of triorthogonality~\cite{BravyiHaah2012}, and does not require an underlying manifold.}
As a result, we constructed the \emph{copy-cup} gates: 
given $r$ copies of a qLDPC quantum code equipped with an $r$-fold cup product, the phase $(-1)^{\int x_1\cup\cdots\cup x_r}$ is implemented by a constant-depth circuit of physical $C^{r-1}Z$ gates between the blocks. When the invariant is nontrivial, it implements a logical gate consisting of the $C^{r-1}Z$ acting on various combinations of logical qubits, which is at the $r$-th level of Clifford hierarchy, up to lower-level corrections. 

On the constructive side, cup products beyond manifolds were used to find qLDPC codes with constant-depth non-Clifford gates in finite-length tricycle codes and balanced product codes~\cite{MenonMagicTricycles2026,JacobMcLauchlanBrowne2025,LiPreskillXu25,TiewBreuckmann2026}, as well as in families of almost-good and good qLDPC codes which are also locally testable~\cite{LiLiLiu26almostgood,LiLiLiu26good}, and codes with constant-weight stabilizers whose transversal $C^{r-1}Z$ gates surpass the previously known trade-off between length, distance and number of logical gates~\cite{GolowichTamoZhu2026}.
On the conceptual side, Poincar\'e duality and cup and cap products have been established for sheaf codes~\cite{LiLiLiuNguyen25}, together with a general theory of cohomological invariants of hypergraph-product and sheaf codes~\cite{LiShaoWeiLiLiu26}, and no-go theorems constrain which gates product codes can support~\cite{FuZhengLiLiu25}. 

\paragraph{What was missing, and the new ideas.}
Despite this progress, some of the most basic desiderata remained out of reach. 
We want non-Clifford gates \emph{within a single code block}, and ideally, we also want to \emph{address} individual logical qubits. 
The desired framework needs to capture rotation gates, such as $T=\op{diag}(1,e^{2\pi i/8})$. 
Since the phase associated to an integral of products of $\Z/2$-cocycles is a sign, ordinary cup products are not sufficient for producing the eighth roots of unity that $T$ requires. 
Finally, we want constructions which apply beyond topological codes. 
This paper addresses all of these points by introducing novel cohomology invariants, which we describe now.

One of the ways to obtain new operations is to incorporate symmetries of the cochain complex\footnote{For example, in the case of topological codes, the symmetries can be simply spatial symmetries. More generally, these are automorphisms of the cochain complex of the code.} into cohomology operations.
Additionally, we can \emph{lift coefficients}, say from $\Z/2$ to~$\Z/8$, which forces correction terms that use the higher homotopies expressing the fact that the cup product is only commutative up to homotopy.
These homotopies can be organized in an $E_\infty$-structure and the language of operads which allows us to systematically construct \emph{higher twisted cohomology operations} generalizing (higher) Pontryagin squares.

We introduce further ideas in order to make the resulting gates addressable. 
One of them is that we select the logical inputs by composing the cohomology operation with cochain maps that act non-trivially on different cohomology classes including removing some of them from the expression while retaining cohomology invariance.

Lastly, another idea is to use symmetries of the cochain complex to ``fold'' it in a nontrivial way and consider relative cohomology operations on the ``folded'' complex. 

Together, these ideas produce many addressable classes of new operations. In particular, we obtain individually addressable $T$ gates on toric codes in three and four dimensions, as well as addressable non-Clifford gates on further manifolds, and on qLDPC codes which are not associated with manifolds.
To our knowledge, many of the operations here have not been studied in the mathematics literature before, and we expect them to be of independent interest in homotopy theory. 

\subsection{Main contributions}
\label{sec:main-contributions}

We now explain the ideas of the paper in more detail, in the order in which they build on each other.

\paragraph{Twisting by a symmetry.}
Assume that the cochain complex has automorphism group~$G$, so the action of elements $g\in G$ descends to cohomology. Then, the \emph{twisted cup product}
\begin{equation}\label{eq:intro-twisted-power}
 \int_X x\cup g_2x\cup\cdots\cup g_rx,
 \qquad g_i\in G,
\end{equation}
is a cohomology invariant of the single class $x$. Intuitively, this yields non-trivial outcomes for classes $x$ that intersect with elements in their $G$-orbit.
Nothing in this argument requires $g$ to be invertible. In fact, any endomorphism $F$ of the code complex can be substituted into a multilinear invariant, $\int F_1(x)\cup\cdots\cup F_r(x)$, and the result is again a cohomology invariant. 
This gives single-block gates of $CCZ$ type on the three-torus, on products of a surface with a circle, and on the tricycle and Golowich--Tamo--Zhu codes discussed in \Cref{sec:non-clifford-fold-transversal-gates}.

\paragraph{Lifting coefficients and $E_\infty$-structure.}
Twisted cup products over $\Z/2$ can only produce gates of $C^{r-1}Z$ type. For other phase gates, such as the $T$ gate, we need invariants valued in $\Z/2^k$. The naive attempt is to read the bits of a cocycle $x\in Z^q(X;\Z/2)$ as the integers $0$ and $1$, which defines a non-linear inclusion map $\iota$ with values in $C^q(X;\Z)$. Denoting $\widetilde x= \iota x$, we can form $\int\widetilde x\cup g_2\widetilde x\cup\cdots$, say, modulo $8$. The problem is that $\widetilde x$ need not be a cocycle. Its coboundary vanishes modulo $2$, but need not vanish over $\Z$. As a consequence, such a naive expression is not a cohomology operation.

The \emph{Pontryagin square} is the simplest example of repairing this in the untwisted case with only two inputs. The expression
\[
 \Po^{(2)}(x)=\widetilde x\cup\widetilde x+\widetilde x\cup_1 d\widetilde x
 \pmod 4
\]
is a well-defined operation $H^q(X;\Z/2)\to H^{2q}(X;\Z/4)$ for even $q$. 
The \emph{correction term} $\widetilde x\cup_1d\widetilde x$ involves Steenrod's cup-$1$ product $\cup_1 $ and vanishes when $\widetilde x$ is already closed modulo four. 
It makes the expression a cocycle whose cohomology class is independent of the chosen lift and cocycle representative (\Cref{sec:integral-lifts}).
The Pontryagin square is the prototype of the general form of cohomology operations in this paper, whose leading term is typically an $r$-fold cup power, plus correction terms built from coboundaries and higher cup products, such that the sum is a cohomology invariant with finer coefficients.

The cup-$1$ product supplies the homotopy between the cup product and its graded opposite. The cancellation modulo four uses both its cochain identity and the divisibility of $d\widetilde x$ by two. The same mechanism in higher arity requires homotopies between the $r$-fold cup product and its permutations, homotopies between these homotopies, and so on. Such coherent systems of higher homotopies form an \emph{$E_\infty$-structure} on cochains~\cite[Section~3]{May1972}. To make these operations explicit, we use the surjection operad $\cX$, constructed by McClure--Smith and Berger--Fresse~\cite{McClureSmith2003,BergerFresse2004}. Its elements are words whose action is defined by cutting simplices into overlapping faces. The word $(1,2)$ gives the cup product, $(1,2,1)$ gives its commutativity homotopy, and $(1,2,\ldots,r)$ gives the $r$-fold cup product. The word $(1,2,3,4,1)$ gives a homotopy between the fourfold cup product and its cyclic permutation. The differential of the operad records the identities relating these operations.

Since we want to twist by symmetries, say by a subgroup $G$ of the symmetric group $\Sigma_r$, it is particularly useful to organize the corresponding homotopies using a free resolution~$W$ of~$\Z$ over~$\Z[G]$. 
A $G$-equivariant chain map $W\to\cX(r)$~\cite{KMM2021} turns each generator~$e_{i,j}$ into an $r$-input cochain operation~$\mu^G_{i,j}$ lowering the total degree by~$i$. 
Here $\mu^G_{0,0}$ is the $r$-fold cup product, the $\mu^G_{1,j}$ give homotopies between its permutations, and the $\mu^G_{2,j}$ give homotopies between these homotopies. Their identities are encoded by the differential of~$W$.
For the simple case $G=C_2$ this recovers Steenrod's higher cup products. We explain the construction in \Cref{sec:cohomology-operations}.

\paragraph{Twisting and lifting at the same time.}
These operations can be combined to obtain \emph{twisted higher cohomology operations} which are the mathematical heart of this paper. 

Let $G$ be a finite abelian group acting on the cochain complex $C^*(X;\Z/\theta)$ and permuting the $r=|G|$ inputs of a cochain operation. Let $\omega\colon G\to\{\pm1\}$ record whether $g$ preserves or reverses the orientation, and let $M=M(\theta)[-q]$ be Steenrod's universal complex, which is freely generated by $u$ and $v$ with $du=\theta v$. A cocycle $x$ with $\Z/\theta$ coefficients together with its lift to a ring $R$, say  $\widetilde x$, can be packaged as a chain map $f_x\colon M\to C^*(X;R)$ with $f_x(u)=\widetilde x$ and $f_x(v)=\beta_\theta(x)=(d\widetilde x)/\theta$. The whole construction is contained in the following diagram of chain maps:
\begin{equation*}\label{eq:intro-evaluation-diagram}
\begin{tikzcd}[column sep=3.4em]
 \operatorname{ev}_x: W\otimes_{\Z[G]}\bigl(R_\omega\otimes M^{\otimes r}\bigr)
   \arrow[r,"\mathrm{id}\otimes f_x^{\otimes r}"]
 & W\otimes_{\Z[G]}\bigl(R_\omega\otimes C^*(X;R)^{\otimes r}\bigr)
   \arrow[r,"E"]
 & R.
\end{tikzcd}
\end{equation*}
The object on the left is \emph{universal}: it depends only on the group $G$, the resolution $W$, the coefficients $\theta$ and $R$, the degree $q$ and the character $\omega$, but \emph{not} on the specific cochain complex. 
Only finitely many terms contribute in each total degree, so its homology, the universal twisted homology $H_n(G;R_\omega\otimes M^{\otimes r})$, can be computed by finite linear algebra.
A cycle representative $z=\sum e_{i,j}\otimes m_{i,j}$ of a homology class can be thought of as a recipe for constructing a cohomology invariant: 
$m_{i,j}$ is a linear combination of tensor words in $u$ and $v$, and $e_{i,j}$ says which generalized cup-$i$ product to apply to it.

The first map in the diagram, $\mathrm{id}\otimes f_x^{\otimes r}$, substitutes $\widetilde x$ and $\beta_\theta(x)$ for $u$ and $v$, and the second map $E$ applies the operation $\mu^G_{i,j}$ to the inputs \emph{translated along the $G$-orbit}, $(a_1,\dots,a_r)\mapsto(g_1a_1,\dots,g_ra_r)$, and integrates over $X$.  
Since all maps are chain maps, a cycle $z$ on the left defines a cohomology invariant
\[
 \Phi_{[z]}\colon H^q(X;\Z/\theta)\longrightarrow R,
 \qquad [x]\longmapsto\operatorname{ev}_x(z).
\]
To obtain a physical circuit, we choose a representative satisfying a divisibility condition that makes its formula defined on arbitrary cochains, as explained in \Cref{sec:lifts-to-cochain-operations}.

For $G=C_2$ acting trivially on $X$, $\theta=2$ and $R=\Z/4$, the cycle $z=e_0\otimes u\otimes u+2e_1\otimes u\otimes v$ recovers the (integrated) Pontryagin square.

For $G=C_4$, we construct an operation lifting coefficients from $\Z/2$ to $\Z/8$, and for $G=C_3$ one lifting coefficients from $\Z/3$ to $\Z/9$. Combined with the cochain maps described below, these give individually addressable qubit $T$ gates on the four-dimensional toric code and qutrit $T$ gates on the three-dimensional toric code, respectively (\Cref{sec:the-four-torus,sec:qutrits-and-the-three-torus}).

\paragraph{Addressability by cochain maps.}
The logical gates arising from these twisted operations typically act on all logical qubits at once. To isolate an operation on a single cohomology class, we use the following simple observation. 
If $\Phi$ is a cohomology invariant on a complex $D$ and $F\colon C\to D$ is a cochain map, then $\Phi\circ F$ is a cohomology invariant on $C$, with logical action $[a]\mapsto\Phi([Fa])$. 
The map $F$ is substituted into the phase, not implemented as a unitary, so it need neither be invertible nor commute with the symmetry defining $\Phi$. 
Whenever $F$ removes support on a cohomology class, the gate acts as the identity on the corresponding logical qubits, which gives addressability.

Concrete bounded-weight maps come from \emph{circle folds}. Folding a circle onto a semicircle, $s\colon S^1\to S^1$, satisfies $s^2=s$, so $s^*$ and $1-s^*$ are complementary cochain projections. On cohomology, $s^*$ retains $H^0$ and kills $H^1$, while $1-s^*$ does the reverse. On products of circles, these maps can be combined with coordinate permutations and sums to select individual logical directions or parities (\Cref{sec:circle-fold-projections}).

To summarize, the toolbox for addressing gates includes:
\begin{itemize}
 \item \emph{Pullbacks} along cellular maps, and cochain maps between code complexes in general, transport a cohomology invariant from one complex to another: $\Phi\mapsto\Phi\circ F$.
 \item \emph{Folds and cochain projections}, such as the circle fold $s^*$ and its complement $1-s^*$, select part of the logical information, making a gate act on chosen logical qubits or parities.
 \item \emph{Relative cochains} $C^*(X,Y)$, vanishing on a subcomplex $Y$, incorporate boundary conditions, generalizing the smooth and rough boundaries of surface codes. Folds naturally produce relative cochains, for example, $1-s^*$ vanishes on the fixed semicircle.
\end{itemize}

\paragraph{Relative cohomology and folds.}
Symmetries enter in another way through folding and relative cohomology. The starting point is a logical $T$ gate on three copies of a surface code on a tetrahedron with coupled boundary conditions~\cite{HsinKobayashiZhu2024,Bauer_2025,Davydova_2026}. Its $\Z/8$-valued phase combines a triple cup product in the bulk with correction terms on a face and an edge.

Our observation is that all three inputs can be supplied by a single surface code on a cube with relative boundary conditions on two opposite faces. Coordinate permutations fold three tetrahedral regions onto one tetrahedron, and suitable combinations of the folded cochains satisfy the required boundary conditions. This gives a constant-depth $T$ gate on the single logical qubit of the three-dimensional surface code (\Cref{sec:relative-t-cube}). Circle folds and pullbacks then transport this operation to individually addressable $T$ gates on the three-dimensional toric code (\Cref{sec:relative-t-three-torus}).

The construction extends to suitable manifold codes and, through bounded-weight cochain maps, to our tricycle and three-dimensional tile-code examples (\Cref{sec:relative-t-further-examples,sec:relative-t-beyond-manifolds,sec:addressable-tt-example}). Higher-dimensional versions give higher rotation gates whenever the corresponding simplex operation is available.

\paragraph{Examples.}
\Cref{tab:intro-examples} collects the resulting gates on topological and non-topological codes.

\begin{table}[p]
\centering
\small
\setlength{\tabcolsep}{5pt}
\renewcommand{\arraystretch}{1.25}
\makeatletter
\let\intro@parboxrestore\@arrayparboxrestore
\def\@arrayparboxrestore{\intro@parboxrestore\raggedright\let\\\tabularnewline}
\makeatother
\resizebox{\textwidth}{!}{%
\begin{tabular}{p{0.25\textwidth}p{0.28\textwidth}p{0.38\textwidth}p{0.25\textwidth}}
\hline
\textbf{Complex / code} & \textbf{Logical gate (up to Clifford frame)} &
\textbf{Cohomology operation and its twist} & \textbf{Comments}\\
\hline
\multicolumn{4}{l}{\emph{Ordinary $\mathrm{mod} \ 2$ cup products twisted by automorphisms}}\\
\hline
$3$D toric code, $T^3$ 
 & $\overline{CCZ}_{123}$ 
 &$C_3$ (coordinate permutations)-twisted operation $\int x\cup\sigma x\cup\sigma^2x$
 & \Cref{sec:folded-three-torus}; known~\cite{SahayLongKhemani26}\\
$\Sigma\times S^1$, where $\Sigma$ is a 2D surface, e.g. the Bolza surface
 & $\overline{CCZ}_{u,\gamma_1+\gamma_3,\gamma_2+\gamma_4}$
 & $\int(1+s^*)x\cup s^*x\cup g\,s^*x$; automorphism $g$ of $\Sigma$; circle fold $s$
 & \Cref{sec:bolza-surface}; new\\
Magic-tricycle codes, e.g. $[[108,15,(12,6)]]$ of~\cite{MenonMagicTricycles2026}
 & a product of three $\overline{CCZ}$ gates in one block
 & $\int x\cup F(x)\cup F'(x)$; weight-six cochain endomorphisms $F,F'$ 
 & \Cref{sec:folded-nontopological-codes}; new\\
Golowich--Tamo--Zhu,  $[[N, N^{1-\varepsilon}, N^{(1-\varepsilon)/3}]]$
 & $\Omega(N^{1-\varepsilon})$ addressable $\overline{CCZ}$ gates in one block
 & triple cup product $\int x\cup\sigma x\cup\sigma^2x$; $C_3$ cyclically permuting the three tensor factors
 & \Cref{sec:folded-nontopological-codes}; new \\
\hline
\multicolumn{4}{l}{\emph{Higher twisted Pontryagin operations}}\\
\hline
4D toric code with qubits on edges, $T^4$
 & individually addressable $\overline T_i$, $\overline{CS}_{ij}$, $\overline{CCZ}_{ijk}$
 & $C_4$ (coordinate permutations with sign reversal)-twisted Pontryagin power $\Po^{(4)}_\rho$, with circle folds $F_i$
 & \Cref{sec:the-four-torus}; new\\
Qutrit 3D toric code, $T^3$
 & addressable qutrit $\overline T=\op{diag}(1,\zeta_9,\zeta_9^{-1})$
 & $C_3$ (coord. permutation)-twisted Pontryagin cube $\Po^{(3)}_\rho$; with circle folds $F_i$
 & \Cref{sec:qutrits-and-the-three-torus}; new\\
$\Sigma\times\Sigma$, e.g.\ the Bolza square
 & products of $\overline T^{\pm1}$, $\overline{CS}$, $\overline{CZ}$, $\overline{CCZ}$, $\overline{CCCZ}$ 
 & $\Po^{(4)}_\rho$ with $\rho(p,q)=(\tau q,p)$, and $\Po^{(4)}_{\sigma,\tau}$ with the Klein four-group generated by $\tau\times\tau$ and the swap
 & \Cref{sec:squares-of-surfaces}; new\\
\hline
\multicolumn{4}{l}{\emph{Relative cohomology and folds onto boundaries}}\\
\hline
Three surface codes on a 3D tetrahedron
 & $\overline T$ 
 & relative cohomology version of the triple cup product, $\Phi_{\mathcal D}$ 
 & \Cref{sec:relative-t-tetrahedron}; known~\cite{HsinKobayashiZhu2024,Bauer_2025,Davydova_2026}\\
Surface code on a 3D cube $Q$
 & $\overline T$
 & $\Phi_{\mathcal D}$ folded by the coordinate permutations of the cube
 & \Cref{sec:relative-t-cube}; new\\
3D toric code, $T^3$
 & addressable $\overline T_i$; $\overline{CS}_{ij}$, $\overline{CCZ}_{ijk}$ 
 & $\Phi_{\mathcal D}$ with circle fold, quotient map, fold onto the tetrahedron
 & \Cref{sec:relative-t-three-torus}; new\\
$d$-torus $T^d$, qubits on edges
 &addressable $\overline{\op{R}}_i(2\pi/2^d)$
 & $\Phi_{\mathcal D_d}$ folded by coordinate permutations and a circle fold
 & \Cref{sec:relative-t-further-examples}; new\\
 any closed oriented $M^n$, $n\geq3$, with $H_1(M;\Z/2)\neq0$
 & $\overline T$ on the circle qubit
 & generalization of $\Phi_{\mathcal D}$, then circle fold
 & \Cref{sec:relative-t-further-examples}; new\\
3D tile codes of~\cite{BreuckmannChoeEberhardtPereiraSteffan2026}, e.g. $[[545,16]]$
 & addressable $\overline T_i$ on all $16$ logical qubits
 & $\Phi_Q$ composed with specific cochain maps
 & \Cref{sec:relative-t-beyond-manifolds}; new\\
Magic-tricycle codes of \cite{MenonMagicTricycles2026}, e.g. $[[108,15,(12,6)]]$
 & addressable $\overline T_i$ on all $15$ logical qubits
 & $\Phi_{T^3}$ composed with specific cochain maps
 & \Cref{sec:addressable-tt-example}; new\\
\hline
\end{tabular}}
\caption{The examples of this paper. 
All gates act within a single code block and are implemented by
constant-depth circuits of controlled phase rotations.}
\label{tab:intro-examples}
\end{table}

In our examples we did nothing to optimize the depth of the circuits, and there is a lot of room to do so. First, every operation lives in an equivalence class: cohomologous universal cocycles $z$, different resolutions $W$, and different comparison maps $W\to\cX(r)$ all give the same logical gate but have different cochain-level expressions. Searching within an equivalence class for a representative of small depth is a concrete optimization problem. Second, coefficients may be lifted before or after applying the cochain maps which make a gate addressable, and these maps are themselves far from unique. Third, the symmetry~$G$, the automorphisms, and the folds can all be chosen differently, and for a given target gate set some choices will lead to much shallower circuits than others. We leave such optimizations to future work.

\subsection{Comparison with the literature}
\label{sec:comparison-with-literature}

\paragraph{Topological codes and the symmetries picture.}
For topological codes, constant-depth logical gates have long been understood through symmetry-protected topological phases and the cohomology operations describing their responses, in particular cup products and their higher versions~\cite{Yoshida15,Yoshida16,Yoshida17,Barkeshli23,Barkeshli24_1,Barkeshli24_2,Chen2023higher,Chen23,Wang2023,Zhu2023,HsinKobayashiZhu2024}. The twisted operations of this paper differ in that a symmetry of the space enters the operation itself; a few examples of this have been known as gates on folded triangles and on tetrahedra~\cite{HsinKobayashiZhu2024}, but this paper goes far beyond that.

\paragraph{Fold-transversal gates.}
Fold-transversal Clifford gates~\cite{Moussa2016,foldtransversalgates,Quintavalle_2023,EberhardtSteffan2024,SayginelEtAl2024} use an automorphism of the code together with the intersection form, and more generally, code automorphisms give Clifford gates~\cite{GrasslRoetteler2013}. Our results show the non-Clifford extensions that are somewhat similar in spirit: automorphisms are substituted into a multilinear cohomology invariant; they also enter inside the homotopies of the $E_\infty$-structure, which is what allows the coefficients to be lifted.

\paragraph{Constructions of qLDPC codes with non-Clifford gates.}
Cup products on sheaf codes and products of algebraic codes~\cite{lin2024transversal,golowich2024quantum}, on homological product codes~\cite{Zhu25topological,Zhu25systolic}, on tricycle and related product codes~\cite{MenonMagicTricycles2026,JacobMcLauchlanBrowne2025,TiewBreuckmann2026,LiPreskillXu25}, on almost-good and good locally testable codes~\cite{LiLiLiu26almostgood,LiLiLiu26good} and on the codes of Golowich, Tamo and Zhu~\cite{GolowichTamoZhu2026} all produce multi-block $C^{r-1}Z$ gates of copy-cup type. 

Our new contribution shows how to fold such gates into a single block and also how to make them addressable. 

\paragraph{Homological frameworks for logical gates.}
Several recent works develop general homological descriptions of logical gates on qLDPC codes. Li, Li, Liu and Nguyen~\cite{LiLiLiuNguyen25} establish Poincar\'e duality and cup and cap products for sheaf codes. Li, Shao, Wei, Li and Liu~\cite{LiShaoWeiLiLiu26} develop a theory of (co)homological invariants of hypergraph-product and sheaf codes. These multiplicative structures provide starting points for our universal construction, provided they extend to generalized cup-$i$ products with a compatible symmetry and integral.

Sahay, Long and Khemani~\cite{SahayLongKhemani26} develop a concurrent framework for logical gates and addressability based on general chain maps between code complexes, without restricting to maps arising from cup products and their higher homotopies. It organizes diagonal gates of $C^{\ell-1}Z$ type between $\ell$ code blocks by the homology of the tensor product of the $\ell$ code complexes; in the language of \Cref{sec:logical-gates-from-invariants} these are exactly the $\Z/2$-multilinear cohomology invariants, and the classification is an instance of the K\"unneth theorem. 
Every such logical action is realized by some cycle, in general a dense one.
Gates are found by sparsifying it with boundaries, when possible, and reading off the circuit. 
A multi-block pattern applied within one block yields the symmetrized action. Their examples are the full Clifford group on the two-dimensional toric code, addressable $CCZ$ gates between and within blocks of the three-dimensional toric code, and addressable Clifford and $CCZ$ gates on fracton codes. They obtain diagonal Clifford gates such as $S$ from symmetric chain maps on a suitable lift of the whole code complex to $\Z/4$ coefficients, giving explicit quadratic, non-multilinear phase functions, and point to quadratic homological maps. A full treatment of intra-block non-Clifford gates is left to future work.

Our results and discussion are complementary. Their multilinear diagonal construction produces products of $C^kZ$ gates (including intra-block ones), and they also construct $S$ gates using $\Z/4$-valued quadratic phase functions. 
Our construction here focuses on intra-code gates and supplies explicit coefficient-lifting operations for single-block $T$, $CS$ and higher phase gates.
The universal lifts use the higher homotopies of the $E_\infty$-structure, organized by a resolution of the symmetry group, to construct the correction terms needed for these lifted expressions to descend to cohomology, while the relative construction also uses boundary terms. Sparsity follows automatically from bounded incidence and bounded-weight cochain maps.
Their framework also treats non-diagonal gates and measurement gadgets. We expect the two approaches to combine well to yield much stronger results in the future.

\paragraph{Mathematics.}
Steenrod constructed the cup-$i$ products~\cite{Steenrod1947} and the cyclic reduced powers~\cite{Steenrod1953} and gave the universal description of the Pontryagin square~\cite{Steenrod1957}, Thomas the higher Pontryagin powers~\cite{Thomas1957}, and May gave a general algebraic framework for Steenrod operations~\cite{May1970}. The surjection operad is due to McClure and Smith~\cite{McClureSmith2003} and Berger and Fresse~\cite{BergerFresse2004}, and Kaufmann and Medina-Mardones~\cite{KMM2021} gave the explicit cochain-level formulas for Steenrod operations which we build on. The twisted operations constructed here, which evaluate the higher homotopies along the orbit of a group acting on the space, appear to be novel.

\subsection{Outlook}
\label{sec:outlook}

The universal construction of \Cref{sec:constructing-twisted-cohomology-operations} uses only three things about the space $X$: an action of $G$ on its cochains, an $E_\infty$-type family of generalized cup-$i$ products, and an integral compatible with $G$. Nothing requires $X$ to be a manifold, and the most important open direction is to systematically study these operations on qLDPC codes beyond manifolds. Tensor and balanced products of classical codes with cup products~\cite{CupsAndGates} and the multiplicative structures of~\cite{LiLiLiuNguyen25,LiShaoWeiLiLiu26} are natural candidates for constructing concrete operations.

Another question on the coding theory side is how far addressability reaches. Which gates at a given level of the Clifford hierarchy can be individually addressed, and which logical bases can be reached by these methods? For codes associated with $d$-dimensional spaces, what constraints does the geometry impose on constant-depth circuits with nonlocal connectivity? Such constraints would provide a nonlocal analogue of the Bravyi--K\"onig bound. Finally, we have only begun to exploit $E_\infty$-structures and operads for fault-tolerant quantum computation; we expect this operadic perspective to uncover further connections between homotopy theory and logical gates.

The cup product is a relatively common tool in condensed-matter and high-energy physics. It is used to write down topological actions and exactly solvable lattice models, such as Dijkgraaf--Witten gauge theories~\cite{DijkgraafWitten}, and to describe symmetry-protected topological (SPT) phases~\cite{Chen13} and anomalies of ordinary and higher-form symmetries~\cite{KapustinThorngren14anomalies,kapustin2015higher,Gaiottoetal2015,Barkeshli24_1}. Steenrod's higher cup products~\cite{Steenrod1947} similarly enter lattice descriptions of fermionic phases and bosonization~\cite{GaiottoKapustin16,ChenKapustinRadicevic18,Chen2023higher}.
Cup and higher-cup products have also recently been used to construct
finite-depth non-Clifford logical gates in topological and qLDPC
codes~\cite{HsinKobayashiZhu25,KobayashiZhuHsin26,SahayLongKhemani26},
while closely related group-cohomological constructions and protocols
using non-Abelian topological codes yield phase gates throughout the
Clifford hierarchy and beyond~\cite{HuangWarmanSchaferNamekiChen25,
WarmanSchaferNameki26}; the new invariants developed here may provide
further logical gates in these settings.
More generally, it would be interesting to understand the role of newly found invariants in physics.

On the mathematical side, the universal twisted homology groups $H_n(G;R_\omega\otimes M^{\otimes r})$ and their liftable parts are computable by finite linear algebra, and we have only evaluated them in the cases needed for our examples. Computing them for all finite abelian groups, degrees and coefficients would give a complete classification of twisted cohomology operations of this type, and we expect that each class can be found constructively. The operations themselves seem interesting in homotopy theory in their own right, as cohomology operations for spaces with a group action, and we expect them to admit an interpretation as topological (albeit geometrically non-local) responses of systems with a spatial symmetry.

\subsection{Structure of the paper}
\label{sec:structure}

The remainder of the paper is structured as follows. \Cref{sec:cochain-complexes} contains preliminary material such as cochain complexes, their relation to CSS codes, gates from cohomology operations, as well as some mathematical preliminaries needed later. \Cref{sec:cohomology-operations} introduces the cup-$i$ products and Steenrod squares, the surjection operad and the $E_\infty$-structure on cochains, and derives the generalized cup-$i$ products and the lifting of coefficients.
\Cref{sec:constructing-twisted-cohomology-operations} contains the main mathematical construction: universal twisted elements and their evaluation on a space with symmetry.
\Cref{sec:selecting-cohomology-operations} explains addressability by composing with cochain maps.

The remaining sections are devoted to examples: \Cref{sec:non-clifford-fold-transversal-gates} twists ordinary cup products into single-block gates and shows some addressable examples of them; \Cref{sec:non-clifford-gates-via-twisted-cohomology-operations} shows several applications of lifted-coefficient higher twisted Pontryagin powers; and \Cref{sec:addressable-relative-cup-gates} discusses the relative cohomology operations building on the example of a phase gate on a 3-dimensional tetrahedron that gives a $T$ gate on a cube and a torus; this is further generalized to higher dimensions and to some non-manifold examples.

\paragraph{Acknowledgements.}
We gratefully acknowledge discussions with Andreas Bauer, Shankar Balasubramanian, Louis Golowich, Vedika Khemani, David Lin, David Long, Rahul Sahay.
LLM assistance was used for extensive reference searches and working out some specific examples. 
M.D. was supported by the Walter Burke Institute for Theoretical Physics at Caltech and by the Institute for Quantum Information and Matter, an NSF Physics Frontiers Center (NSF Grant PHY-2317110). 
N.P.B.~acknowledges support through the Quantum Research Pod, CIQC and the Simons Institute for the Theory of Computing.
This work was done in part while N.P.B.~was visiting the Simons Institute for the Theory of Computing, supported by NSF QLCI Award OMA-2016245.
N.P.B., M.D. and N.T.\ acknowledge the support of the Microsoft Quantum Pioneers Program.
When working on the first part \emph{Cups and Gates I}, we spent many sleepless nights chasing signs in correction terms for higher twisted Pontryagin squares and ultimately gave up. All coauthors are incredibly grateful to J.N.E.'s master's student and one of the paper's authors, L.F.H., who tamed the beast called surjection operad to fix all these problems so elegantly.

\section{Cochain complexes, CSS codes, and logical gates}
\label{sec:cochain-complexes}
We explain how based cochain complexes define CSS codes and how
cohomology invariants give logical gates and circuits. We then recall
cochains of spaces, tensor products, hom complexes, and resolutions.

\subsection{Cochain complexes}
\label{sec:based-cochain-complexes}

\begin{definition}
A \emph{cochain complex} over a commutative ring $R$ is a sequence of
$R$-modules
\[
 \cdots\longrightarrow A^{q-1}\xrightarrow{d}A^q
 \xrightarrow{d}A^{q+1}\longrightarrow\cdots
\]
with $R$-linear differentials satisfying $d^2=0$.
\end{definition}
For quantum codes we work with \emph{based} complexes: each $A^q$ is a
finite-rank free module with a chosen basis.
The cocycles, coboundaries, and cohomology are
\[
 Z^q(A)=\ker d,\qquad
 B^q(A)=\operatorname{im}d,\qquad
 H^q(A)=Z^q(A)/B^q(A).
\]
A \emph{cochain map} $F\colon A\to A'$ is a degree-preserving linear map
satisfying $dF=Fd$.  It induces a map on cohomology.  Two cochain maps
$F,G$ are \emph{homotopic} if $$F-G=dh+hd$$ for a map $h$ of degree $-1$,
in which case their induced maps on cohomology agree.

It is sometimes useful to work instead with chain complexes, indexed by
$A_n$, whose differential lowers degree.
This is only a change of convention: setting $A^n=A_{-n}$ translates
between chain and cochain notation.

Our formulas will often use the coefficientwise lift
\[
 \iota\colon\F_2^n\longrightarrow(\Z/2^s)^n
\]
which reads each bit as $0$ or $1$.  When a
construction also requires a differential on the lifted coefficients, we
will specify a compatible complex over $\Z$ or $\Z/2^s$.

\subsection{CSS codes}
\label{sec:css-codes}

Let $A$ be a based cochain complex over $\F_2$.  Three consecutive terms
\[
 A^{q-1}\xrightarrow{d}A^q\xrightarrow{d}A^{q+1}
\]
define a CSS code. There is one physical qubit for each basis element $e$ of $A^q$.  Every
$a\in A^q$ has a unique expansion $a=\sum_e a(e)\,e$ with coefficients
$a(e)\in\F_2$, and we write $\ket a=\bigotimes_e\ket{a(e)}$ for the
corresponding computational basis state, so that $a(e)$ is the value of
qubit $e$ in the state $\ket a$.

The Pauli operators $X_e$ and $Z_e$ on qubit $e$ act on these states by
\begin{equation}
    X_e\ket a=\ket{a+e},\qquad Z_e\ket a=(-1)^{a(e)}\ket a.
\end{equation}
We use the same symbols for basis elements and their duals.
More generally, for $c\in A^q$ and $c^*\in(A^q)^*$, we write the support of the operator as its superscript:
\begin{equation}
X^c=\prod_e X_e^{c(e)},\qquad Z^{c^*}=\prod_e Z_e^{c^*(e)},
\end{equation}
so that $X_e=X^e$, $Z_e=Z^e$, and
\begin{equation}
 X^c\ket a=\ket{a+c},\qquad Z^{c^*}\ket a=(-1)^{\langle c^*,a\rangle}\ket a,
\end{equation}
where $\langle c^*,a\rangle\in\F_2$ is the standard
chain-cochain pairing.  In particular, $X^c$ and $Z^{c^*}$ commute if
$\langle c^*,c\rangle=0$ and anticommute otherwise.

For basis elements $v\in A^{q-1}$ and $p\in(A^{q+1})^*$,
the checks are $X^{dv}$ and $Z^{d^Tp}$. They commute since $d^2=0$.

The code space is the common $+1$ eigenspace of all the checks.  The
$Z$-checks are diagonal in the computational basis:
$$Z^{d^Tp}\ket a=(-1)^{\langle p,da\rangle}\ket a,$$ and $\langle p,da\rangle$
is the coefficient of $p$ in $da$.  Therefore, a state is fixed by every
$Z$-check exactly when it is a superposition of basis states $\ket a$ with
$da=0$, that is, with $a$ a cocycle.  The $X$-checks permute these basis
states, $X^{dv}\ket a=\ket{a+dv}$, so such a superposition is fixed by every
$X$-check exactly when its amplitudes are constant on each coset
$a+B^q(A)$ of the space of coboundaries
$B^q(A)=\operatorname{im}(d\colon A^{q-1}\to A^q)$.  These cosets are
precisely the elements of the cohomology group
$H^q(A)=\ker(d\colon A^q\to A^{q+1})/B^q(A)$, so the code space has the
orthonormal basis
\[
 \ket{[a]}=\frac{1}{\sqrt{|B^q(A)|}}\sum_{b\in B^q(A)}\ket{a+b},
 \qquad [a]\in H^q(A),
\]
which does not depend on the choice of cocycle $a$ representing the class
$[a]$.

The cohomology classes in $H^q(A)$ thus label the logical basis states, and
choosing a basis of $H^q(A)$ identifies these states with the computational
basis states of $k=\dim_{\F_2}H^q(A)$ logical qubits.  The code therefore
encodes $k$ logical qubits into $n=\dim_{\F_2}A^q$ physical qubits.

\subsection{Logical gates from cohomology invariants}
\label{sec:logical-gates-from-invariants}

Since cohomology classes label the logical basis states, a phase
assigned to each class defines a diagonal logical gate. We therefore
consider a function, not necessarily linear,
\[
 \overline\Phi\colon H^q(A)\longrightarrow\Z/2^s,
\]
which we call a cohomology invariant.

A physical implementation requires a function on all cochains,
\[
 \Phi\colon A^q\longrightarrow\Z/2^s,
\]
describing a unitary diagonal in the physical $Z$-basis.
For $\Phi$ to induce $\overline\Phi$, we require
$\Phi(a)=\overline\Phi([a])$ for every cocycle $a$.  In particular, the
value must be independent of the representative:
\[
 \Phi(a+b)=\Phi(a),
 \qquad a\in Z^q(A),\quad b\in B^q(A).
\]
In this case we say that $\Phi$ \emph{descends to cohomology}.

\begin{definition}\label{def:unitary-from-operation}
The diagonal unitary associated to $\Phi$ is
\[
 U_\Phi\ket a
 =\exp\!\left(\frac{2\pi i}{2^s}\Phi(a)\right)\ket a.
\]
\end{definition}

\begin{proposition}\label{prop:diagonal-logical-gate}
If $\Phi$ descends to cohomology, $U_\Phi$ preserves the code space and
has logical action
\[
 U_\Phi\ket{[a]}
 =\exp\!\left(\frac{2\pi i}{2^s}\overline\Phi([a])\right)\ket{[a]}.
\]
\end{proposition}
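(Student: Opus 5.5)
The argument is a direct computation on the logical basis states. Fix a class $[a]\in H^q(A)$ and a representative cocycle $a$. By the description of the code space in \Cref{sec:css-codes}, the state $\ket{[a]}$ is the uniform superposition
\[
 \ket{[a]}=\frac{1}{\sqrt{|B^q(A)|}}\sum_{b\in B^q(A)}\ket{a+b}
\]
over the coset $a+B^q(A)$. We apply $U_\Phi$ term by term, using linearity and \Cref{def:unitary-from-operation}. Each summand then picks up the phase $\exp(2\pi i\,\Phi(a+b)/2^s)$.

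The key step uses the hypothesis that $\Phi$ descends to cohomology. Every $a+b$ with $b\in B^q(A)$ is again a cocycle in the class $[a]$, so $\Phi(a+b)=\Phi(a)=\overline\Phi([a])$. Hence all summands acquire the same phase, which factors out of the sum:
\[
 U_\Phi\ket{[a]}=\exp\!\left(\tfrac{2\pi i}{2^s}\overline\Phi([a])\right)\ket{[a]}.
\]
This is exactly the claimed logical action.

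For preservation of the code space, note that the states $\ket{[a]}$ form a basis of the code space. We have just shown that each of them is an eigenvector of $U_\Phi$. By linearity, $U_\Phi$ therefore maps the code space into itself, and it is invertible there because the eigenvalues are phases.

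There is no genuine obstacle in this argument. The only point requiring care is that $\Phi$ is defined on all cochains but is only evaluated here on the cocycles $a+b$. The hypothesis is stated precisely for cocycles shifted by coboundaries, so it applies directly and gives the independence of representative.
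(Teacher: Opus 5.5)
Your proof is correct and is the same argument as the paper's. The paper's proof is the single observation that every summand $\ket{a+b}$ of $\ket{[a]}$ receives the same phase; you spell this out and add the deduction that the code space is preserved because the basis states $\ket{[a]}$ are eigenvectors.
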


\begin{proof}
Every summand $\ket{a+b}$ of the code state $\ket{[a]}$ receives the same
phase.
\end{proof}

The logical action depends only on the values of $\Phi$ on cocycles.
Its values on other cochains determine the remaining physical action and
therefore matter when constructing the circuit.

\subsection{Circuits and their logical action}
\label{sec:circuits-and-logical-action}

We now express the physical and logical diagonal unitaries of the
previous section as circuits of controlled rotation gates.
We use the following lemma.
\begin{lemma}[Phase polynomials]\label{lem:phase-polynomial}
For $x = (x_1,\ldots,x_n)\in\F_2^n$ and a subset $I$ of $[n]=\{1,\ldots,n\}$, we write
\begin{equation}
  x^I = \prod_{i \in I} \iota(x_i) 
\end{equation}
where $x^I\in\Z/2^s$ and $x^\emptyset=1$. Thus $x^I=1$ if
$x_i=1$ for all $i\in I$ and $0$ otherwise. Every function
$f\colon\F_2^n\to\Z/2^s$ has a unique expansion in these monomials
\[
 f(x_1,\ldots,x_n)
 =\sum_{I\subseteq\{1,\ldots,n\}}c_I x^I
\]
where $c_I\in \Z/2^s$.   Its coefficients are
\[
 c_I=\sum_{J\subseteq I}(-1)^{|I|-|J|}f(1_J),
\]
with $1_J$ the indicator vector of $J$.
\end{lemma}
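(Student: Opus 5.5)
We prove this by Möbius inversion on the Boolean lattice of subsets of $[n]$. The key observation is that every $x\in\F_2^n$ is the indicator vector $1_S$ of its support $S=\{i: x_i=1\}$. The monomial evaluates on it as $x^I = [I\subseteq S]$, meaning $1$ if $I\subseteq S$ and $0$ otherwise. Therefore an expansion $f=\sum_I c_I x^I$ holds if and only if
\[
 f(1_S)=\sum_{I\subseteq S} c_I \qquad\text{for all } S\subseteq[n].
\]
The lemma thus reduces to showing that this triangular system over the poset of subsets has exactly one solution $(c_I)$ with entries in $\Z/2^s$, and that the solution is the stated formula.

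\textbf{Existence.} Define $c_I$ by the stated formula and verify the system by substitution:
\[
 \sum_{I\subseteq S} c_I=\sum_{I\subseteq S}\sum_{J\subseteq I}(-1)^{|I|-|J|}f(1_J)=\sum_{J\subseteq S} f(1_J)\sum_{J\subseteq I\subseteq S}(-1)^{|I\setminus J|}.
\]
The inner sum runs over subsets $K=I\setminus J$ of $S\setminus J$, so it equals $(1-1)^{|S\setminus J|}$. This is $1$ when $J=S$ and $0$ otherwise. The whole expression therefore collapses to $f(1_S)$. Only integer coefficients $\pm1$ appear, so the identity holds in any commutative ring, in particular in $\Z/2^s$.

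\textbf{Uniqueness.} Order subsets by cardinality and argue by induction. Assuming $c_J$ is determined for all $J\subsetneq I$, the equation for $S=I$ gives
\[
 c_I=f(1_I)-\sum_{J\subsetneq I}c_J ,
\]
so $c_I$ is forced. Equivalently, the $2^n\times 2^n$ zeta matrix $\zeta(S,I)=[I\subseteq S]$ is unitriangular and hence invertible over $\Z/2^s$.

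There is no real obstacle here. The only point needing care is that the binomial cancellation $(1-1)^m=\delta_{m,0}$ is valid over $\Z/2^s$; it is, because it is an identity over $\Z$ pushed forward along $\Z\to\Z/2^s$.
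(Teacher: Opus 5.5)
Your proof is correct and complete. You reduce the claim to the unitriangular system $f(1_S)=\sum_{I\subseteq S}c_I$ and invert it using $(1-1)^{|S\setminus J|}=\delta_{J,S}$; this is the Möbius inversion on the Boolean lattice that the displayed coefficient formula encodes. The paper states this lemma without proof, treating it as standard, so there is no argument of its own to compare against.
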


Let $\op{R}(\varphi)=\operatorname{diag}(1,e^{i\varphi})$, with $\varphi$
measured in radians. For a set
$I$ of qubits, let $\op{R}_I(\varphi)$ be the diagonal gate
\[
 \op{R}_I(\varphi)\ket x=e^{i\varphi\,x^I}\ket x ,
\]
which multiplies $\ket x$ by $e^{i\varphi}$ if all qubits in $I$ are
in the state $\ket1$ and leaves it unchanged otherwise.  On one qubit this
is $\op{R}(\varphi)$, on two qubits it is the controlled-$\op{R}(\varphi)$
gate, and in general it is $\op{R}(\varphi)$ on one qubit of $I$ controlled
by the other $|I|-1$ qubits; it does not matter which qubit is the target.
For example,
\[
 \op{R}_i(\pi)=Z_i,\quad
 \op{R}_i(\pi/2)=S_i,\quad
 \op{R}_i(\pi/4)=T_i,\quad
 \op{R}_{ij}(\pi)=CZ_{ij},\quad
 \op{R}_{ijk}(\pi)=CCZ_{ijk},
\]
and $\op{R}_\emptyset(\varphi)=e^{i\varphi}$ is a global phase.

Applying \Cref{lem:phase-polynomial} to $\Phi$, the circuit
\[
 C_\Phi=\prod_{I\subseteq[n]}\op{R}_I\!\left(\frac{2\pi c_I}{2^s}\right)
\]
implements $U_\Phi$.

\begin{example}\label{ex:lifting-creates-controls}
Let $\iota\colon\F_2\to\Z/4$ be the $0/1$-lift.  For $x,y\in\F_2$,
the phase
\[
 \Phi(x,y)=\iota(x)+\iota(y)\pmod4
\]
gives $C_\Phi=S_1S_2$, where
$S=\op{R}(\pi/2)=\operatorname{diag}(1,i)$.
In contrast, lifting the sum $x+y$ taken in $\F_2$ gives
\[
 \iota(x+y)=\iota(x)+\iota(y)-2\iota(x)\iota(y)\pmod4,
\]
whose circuit is $S_1S_2CZ_{12}$.  The extra $CZ$ gate comes from the
nonadditivity of the lift.
\end{example}

When $\Phi$ descends to cohomology, the same construction applied to
$\overline\Phi$ gives its logical circuit $C_{\overline\Phi}$.  Choose
cocycles $\gamma_1,\ldots,\gamma_k$ representing a logical basis and
expand
\[
 \overline\Phi\!\left(\sum_{j=1}^k\epsilon_j[\gamma_j]\right)
 =\Phi\!\left(\sum_{j=1}^k\epsilon_j\gamma_j\right).
\]
The resulting circuit on the $k$ logical qubits describes the action of
$C_\Phi$ on the code space.

\begin{remark}[Locality and depth]
To obtain bounded depth, we need a cochain formula with bounded
interaction size and participation.  Suppose
\[
 \Phi(a)=\sum_\alpha\Phi_\alpha(a),
\]
where each summand depends on at most $w$ physical bits and each bit
occurs in at most $\nu$ summands.  The corresponding bounded-size diagonal
unitaries can be scheduled in at most
$
 w(\nu-1)+1$
layers since each gate conflicts with at most $w(\nu-1)$ others.  Expanding
each local unitary into controlled rotations adds only bounded overhead
for fixed $w$.  For a family of codes, uniform bounds on $w$ and $\nu$
therefore give bounded depth. No geometric locality is assumed here.
\end{remark}

\begin{remark}
The CSS construction and the criterion for a diagonal logical gate work
equally over $\F_p$, for prime $p$, with $p$-dimensional qudits.  A
function
\[
 \Phi\colon A^q\longrightarrow\Z/p^s
\]
on a complex over $\F_p$, descending to cohomology, gives a logical gate
with phases $e^{2\pi i\Phi(a)/p^s}$.  Coefficient lifts use the
representatives $0,\ldots,p-1$.
\end{remark}

\begin{example}[A qutrit phase]
For one qutrit, let $t\in\F_3$ and consider
\[
 \Phi(t)=\iota(t)^3\pmod9,
\]
where $\iota(t)\in\{0,1,2\}$.  Its values are $0,1,-1$ modulo nine,
so
\[
 U_\Phi=T=\operatorname{diag}(1,\zeta_9,\zeta_9^{-1}),
 \qquad \zeta_9=e^{2\pi i/9}.
\]
This is the qutrit $T$ gate used in
\Cref{sec:qutrits-and-the-three-torus}.
\end{example}

\subsection{Cochain complexes associated to spaces}
\label{sec:cochains-of-spaces}
Many of the CSS codes we will consider here are \emph{topological}, that is, they come from the cochain complex associated to a topological space. We use simplicial complexes as a convenient model for these spaces.
\begin{definition}
A \emph{finite locally ordered simplicial complex} $X$ is a collection of
nonempty subsets of a finite vertex set, closed under taking nonempty
subsets, together with a total order on the vertices of every simplex such
that every face carries the restricted order.  A simplex with $p+1$ vertices
is a $p$-simplex, written $[v_0,\ldots,v_p]$ with $v_0<\cdots<v_p$.  We
write $X_p$ for the set of $p$-simplices.
\end{definition}

A total order on all vertices induces a local order.  A non-trivial
automorphism never preserves a total order of all vertices, but it may
preserve a local one, as needed in
\Cref{sec:constructing-twisted-cohomology-operations}.

The integral chain groups of $X$ and their boundary are
\[
 C_p(X;\Z)=\bigoplus_{\sigma\in X_p}\Z\sigma\text{ where }
 d[v_0,\ldots,v_p]
 =\sum_{j=0}^p(-1)^j[v_0,\ldots,\widehat v_j,\ldots,v_p]
\]
with $d=0$ in degree zero.  The cochain complex with coefficients in $R$
is
\[
 C^p(X;R)=\operatorname{Hom}_{\Z}(C_p(X;\Z),R)\text{ where }
 (da)(\sigma)=a(d\sigma).
\]
By abuse of notation, we will sometimes denote both a simplex and its dual basis cochain by
$[v_0,\ldots,v_p]$, so that $X_p$ labels the bases of both $C_p(X;\Z)$ and $C^p(X;R)$.

Applying the CSS construction to $C^\bullet(X;\F_2)$ gives a code with
qubits on the $q$-simplices.  For $q=1$, edges give qubits, vertices give
$X$-checks, and triangles give $Z$-checks.  Relative cochains
$C^\bullet(X,Y;R)$, consisting of cochains vanishing on a subcomplex
$Y$, give the corresponding construction with relative boundary
conditions.

\begin{remark}
More generally, one can use cubical or cellular chain complexes, with
one basis element per cell.  The passage from a complex and a cohomology
invariant to a code and a logical gate is unchanged.
\end{remark}

\subsection{Tensor products and hom complexes}
\label{sec:hom-complex}

In this section, we will use homological grading, as this is customary for the operads that we will use below, so a cochain complex is reindexed by
$A_n=A^{-n}$.

The \emph{tensor product} of chain complexes over $R$ is
\[
 (A\otimes_R B)_n=\bigoplus_{i+j=n}A_i\otimes_R B_j,
 \qquad
 d(a\otimes b)=da\otimes b+(-1)^{|a|}a\otimes db.
\]
Here $|a|$ denotes the degree of a homogeneous element $a$.  Interchanging
the factors introduces the Koszul sign
$a\otimes b\mapsto(-1)^{|a||b|}b\otimes a$.

If $A$ has a right action of a group $G$ by chain maps and $B$ has a left
action, their \emph{balanced product} is
\[
 A\otimes_{R[G]}B
 =\frac{A\otimes_R B}
 {\langle(a\cdot g)\otimes b-a\otimes(g\cdot b)\rangle_R}.
\]
It inherits the tensor-product grading and differential.  This construction
underlies balanced product quantum codes~\cite{balancedproductcodes}.  For abelian
$G$, a left action can also be read as a right action by $a\cdot g=ga$ and the balancing relation then becomes $ga\otimes b=a\otimes gb$, as in
the universal complexes used later.

The hom complex records chain maps and their homotopies.

\begin{definition}\label{def:hom-complex}
The \emph{hom complex} of two complexes $A$ and $B$ over $R$ consists of
graded linear maps:
\[
 \operatorname{hom}(A,B)_n
 =\prod_j\operatorname{Hom}_R(A_j,B_{j+n}),
 \qquad
 d_{\mathrm{hom}}f=d_B\circ f-(-1)^n f\circ d_A.
\]
In cohomological notation we write
$\operatorname{hom}^q(A,B)=\operatorname{hom}(A,B)_{-q}$; these are the
maps which raise the cohomological degree by $q$.
\end{definition}
A cycle of degree $0$
is a map with $d\circ f=f\circ d$, that is, a chain map.  More generally the
cycles of degree $n$ are the \emph{chain maps of degree $n$}, which satisfy
$d\circ f=(-1)^nf\circ d$.  Two such maps $f$ and $g$ are \emph{homotopic} if
\[
 f-g=d_{\mathrm{hom}}h=d\circ h+(-1)^nh\circ d
 \qquad\text{for some } h\in\operatorname{hom}(A,B)_{n+1}.
\]
Homotopic chain maps induce the same
map in homology.
Hence $H_n(\operatorname{hom}(A,B))$ is the set of chain maps of degree $n$
up to homotopy, and in particular $H_0(\operatorname{hom}(A,B))$ consists of
chain maps up to homotopy.

\begin{example}[Cocycles are chain maps]\label{ex:cocycles-as-chain-maps}
Let $R$ also denote the complex with $R$ in degree $0$ and zero otherwise.
A map $f\colon R\to A$ of degree $n$ is determined by $a=f(1)\in A_n$, and
$d_{\mathrm{hom}}f=d\circ f$ corresponds to $da$, so
$\operatorname{hom}(R,A)\cong A$.  Thus the cocycles in $A^q$ are the
chain maps in $\operatorname{hom}^q(R,A)$, and two cocycles are cohomologous
if and only if the corresponding maps are homotopic.
\end{example}

One important property of the tensor product and the hom complex is
that they are \emph{adjoint} to each other.

\begin{proposition}[Hom-tensor adjunction]\label{prop:hom-tensor-adjunction}
Let $A$ be a complex of right $R[G]$-modules, $B$ a complex of left
$R[G]$-modules and $C$ a complex of $R$-modules.  Let $G$ act on
$\operatorname{hom}(B,C)$ from the right by
$(\varphi\cdot g)(b)=\varphi(g\cdot b)$, and write
$\operatorname{hom}_{R[G]}$ for the subcomplex of $G$-equivariant maps.
Then
\[
 \operatorname{hom}_{R[G]}\bigl(A,\operatorname{hom}(B,C)\bigr)
 \xrightarrow{\ \cong\ }
 \operatorname{hom}\bigl(A\otimes_{R[G]}B,\,C\bigr),
 \qquad
 f\longmapsto\bigl(a\otimes b\mapsto f(a)(b)\bigr),
\]
is an isomorphism of chain complexes.  In particular, $G$-equivariant chain
maps $A\to\operatorname{hom}(B,C)$ are the same as chain maps
$A\otimes_{R[G]}B\to C$.  For the trivial group this is the ordinary
isomorphism
$\operatorname{hom}(A,\operatorname{hom}(B,C))\cong\operatorname{hom}(A\otimes B,C)$.
\end{proposition}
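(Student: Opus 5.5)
The plan is to reduce the statement to the classical ungraded hom-tensor adjunction, applied one degree at a time, and then check two compatibilities: with the $G$-action and with the differentials, including Koszul signs. Let $\Phi$ denote the map in the statement, $\Phi(f)(a\otimes b)=f(a)(b)$.

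\emph{Degreewise bijection.} A degree-$n$ element of $\operatorname{hom}(A\otimes_R B,C)$ is a family of $R$-linear maps $A_i\otimes_R B_j\to C_{i+j+n}$. By the ordinary adjunction for modules, this is the same as a family of maps $A_i\to\operatorname{Hom}_R(B_j,C_{i+j+n})$, that is, a degree-$n$ map $A\to\operatorname{hom}(B,C)$. Both sides are products over the bidegrees $(i,j)$, so the bijection holds degreewise.

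\emph{Equivariance versus balancing.} A map $A\otimes_R B\to C$ factors through $A\otimes_{R[G]}B$ exactly when it kills $(a\cdot g)\otimes b-a\otimes(g\cdot b)$. Under $\Phi$ this condition reads $f(a\cdot g)(b)=f(a)(g\cdot b)=(f(a)\cdot g)(b)$ for all $b$. That is precisely the condition that $f$ is $G$-equivariant for the stated right action on $\operatorname{hom}(B,C)$. Hence $\Phi$ restricts to an isomorphism of graded $R$-modules
\[
\operatorname{hom}_{R[G]}\bigl(A,\operatorname{hom}(B,C)\bigr)\cong\operatorname{hom}\bigl(A\otimes_{R[G]}B,C\bigr).
\]
Two small checks are needed here. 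First, the right action on $\operatorname{hom}(B,C)$ really is a right action: $(\varphi\cdot g)\cdot h=\varphi\cdot(gh)$. Second, it commutes with $d_{\mathrm{hom}}$, because $G$ acts on $B$ by chain maps. This second check is what makes the equivariant maps a subcomplex, so that the statement makes sense.

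\emph{Compatibility with differentials (the main obstacle).} The remaining task is to show $\Phi(d f)=d\,\Phi(f)$, which is sign bookkeeping. Take $f$ of degree $n$, with $a$ of degree $|a|$. Then $f(a)$ has degree $|a|+n$ as an element of $\operatorname{hom}(B,C)$. On the left-hand side,
\[
(df)(a)=d_{\mathrm{hom}}(f(a))-(-1)^n f(da),
\]
and evaluating at $b$ gives
\[
d_C f(a)(b)-(-1)^{|a|+n}f(a)(d_Bb)-(-1)^n f(da)(b).
\]
On the right-hand side, $d_{\mathrm{hom}}\Phi(f)=d_C\Phi(f)-(-1)^n\Phi(f)\circ d_{A\otimes B}$. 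Applying this to $a\otimes b$ and using $d(a\otimes b)=da\otimes b+(-1)^{|a|}a\otimes db$ gives
\[
d_Cf(a)(b)-(-1)^n f(da)(b)-(-1)^{n+|a|}f(a)(db).
\]
The two expressions agree term by term. This argument assumes the convention $\Phi(f)(a\otimes b)=f(a)(b)$ with no additional Koszul sign, and in that convention the signs match on the nose.

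If some other convention were being used, one would instead insert the sign $(-1)^{|b|n}$ or similar. Since the computation above closes without it, I expect no correction is needed. The trivial-group case is the special case $G=\{1\}$, where the balancing and equivariance conditions are both vacuous.
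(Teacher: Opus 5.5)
Your proof is correct. The paper states this proposition without proof, treating it as the standard hom--tensor adjunction, so your argument supplies exactly the omitted verification: degreewise currying, the equivalence between the balancing condition and equivariance, and the sign check. That sign check indeed closes with no extra Koszul sign in the paper's convention $d_{\mathrm{hom}}f=d\circ f-(-1)^n f\circ d$.

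Two small completions. First, showing that the equivariant maps form a subcomplex also uses that $G$ acts on $A$ by chain maps, not only on $B$. Second, the ``in particular'' clause follows because chain maps are exactly the degree-zero cycles, and a chain isomorphism matches cycles.
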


For abelian $G$, with left actions read as right actions as above, the
proposition identifies the $G$-equivariant chain maps
$\mu\colon W\to\operatorname{hom}(A^{\otimes r},A)$ of
\Cref{sec:resolutions-and-generalized-cup-i} with chain maps
$W\otimes_{R[G]}A^{\otimes r}\to A$, such as Steenrod's
map in \Cref{eq:steenrods-map}.

The multilinear cochain operations of \Cref{sec:cohomology-operations} are
elements of hom complexes, namely of the endomorphism operad
$\operatorname{hom}(A^{\otimes r},A)$ of
\Cref{ex:endomorphism-operad}.  Its homotopies and higher homotopies,
such as the generalized cup-$i$ products, will play the role of correction
terms which make a coefficient lift possible in the universal twisted
construction of \Cref{sec:constructing-twisted-cohomology-operations}.

\subsection{Resolutions}
\label{sec:resolutions}

Let $G$ be a finite group.  In \Cref{sec:cohomology-operations} the
homotopies between an operation and its permutations by $G$ are organized by
a free resolution of $G$.  We write $\Z[G]$ for the group ring.  A $\Z[G]$-module is an
abelian group with a $G$-action, and $\Z$ denotes the trivial module.
Resolutions are graded homologically, with $W_i$ in degree $i$.

\begin{definition}[Free resolution]\label{def:permutation-group-resolution}
A \emph{free resolution} of $\Z$ over $\Z[G]$, or simply of $G$, is a chain
complex $W$ of free $\Z[G]$-modules with $W_i=0$ for $i<0$, together with a
$\Z[G]$-linear augmentation $\varepsilon\colon W_0\to\Z$, such that
\[
 \cdots\longrightarrow W_2\longrightarrow W_1\longrightarrow W_0
 \xrightarrow{\ \varepsilon\ }\Z\longrightarrow0
\]
is exact.
\end{definition}
The following theorem says that free resolutions are unique up to
$G$-equivariant homotopy equivalence.
\begin{theorem}\label{thm:comparison}
Let $W$ be a chain complex of free $\Z[G]$-modules with $W_i=0$ for $i<0$
and a $\Z[G]$-linear augmentation $\varepsilon_W\colon W_0\to\Z$ with
$\varepsilon_W\circ d=0$.  Let $V$ be a complex of $\Z[G]$-modules with
$V_i=0$ for $i<0$ and a $\Z[G]$-linear augmentation $\varepsilon_V$ for which
$\cdots\to V_1\to V_0\to\Z\to0$ is exact.  Then there is a $G$-equivariant
chain map $\psi\colon W\to V$ lifting the identity of $\Z$, that is
$\varepsilon_V\circ\psi=\varepsilon_W$, and any two such maps are
$G$-equivariantly homotopic.
\end{theorem}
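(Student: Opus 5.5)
We construct $\psi$ degree by degree by induction, using only that each $W_i$ is free over $\Z[G]$ and that the augmented complex $V\to\Z\to0$ is exact. The uniqueness statement then follows by running the same induction on the difference of two lifts.

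\emph{Existence.} Choose a $\Z[G]$-basis $\{w_\alpha\}$ of $W_0$. Since $\varepsilon_V\colon V_0\to\Z$ is surjective, we pick $v_\alpha\in V_0$ with $\varepsilon_V(v_\alpha)=\varepsilon_W(w_\alpha)$ and define $\psi_0(w_\alpha)=v_\alpha$, extended $\Z[G]$-linearly. Then $\varepsilon_V\psi_0=\varepsilon_W$, because both sides are $\Z[G]$-linear and agree on a basis.

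Now suppose $\psi_0,\dots,\psi_{n}$ have been constructed, equivariant and satisfying $d\psi_k=\psi_{k-1}d$ for $1\le k\le n$. Let $w$ be a basis element of $W_{n+1}$; we need $\psi_n(dw)$ to be a boundary. For $n=0$ we compute $\varepsilon_V\psi_0(dw)=\varepsilon_W(dw)=0$, so exactness at $V_0$ gives $\psi_0(dw)\in\operatorname{im}(V_1\to V_0)$. For $n\ge1$ we compute $d\psi_n(dw)=\psi_{n-1}(d^2w)=0$, so exactness at $V_n$ gives $\psi_n(dw)=d v$ for some $v\in V_{n+1}$. We set $\psi_{n+1}(w)=v$ and extend $\Z[G]$-linearly. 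Freeness is exactly what makes this extension well defined and equivariant, and the chain map identity on all of $W_{n+1}$ follows from its validity on basis elements by linearity.

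\emph{Uniqueness up to homotopy.} Let $\psi,\psi'$ be two lifts and set $\delta=\psi-\psi'$. This is an equivariant chain map with $\varepsilon_V\delta_0=0$. We construct equivariant $h_n\colon W_n\to V_{n+1}$ with
\[
 \delta_n=dh_n+h_{n-1}d ,
\]
where $h_{-1}=0$, again defining each $h_n$ on a basis and extending linearly.

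For $n=0$: since $\varepsilon_V\delta_0(w)=0$, exactness at $V_0$ gives $\delta_0(w)=dh_0(w)$ for a suitable choice of $h_0(w)\in V_1$.

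For $n\ge1$ and $w$ a basis element of $W_n$, consider $c=\delta_n(w)-h_{n-1}(dw)$. Using the inductive identity in degree $n-1$,
\[
 dc=\delta_{n-1}(dw)-dh_{n-1}(dw)=h_{n-2}(d^2w)=0 .
\]
Exactness at $V_n$ (with $n\ge1$) then gives $c=dh_n(w)$ for some $h_n(w)\in V_{n+1}$, which completes the induction.

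Sign conventions are harmless here: $h$ has degree $+1$ and the maps have degree $0$, so we get $\psi-\psi'=dh+hd$, matching $d_{\mathrm{hom}}$ in \Cref{def:hom-complex}.

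The only delicate point is the base of each induction. Exactness of $V$ is only assumed as an augmented complex, so in degree $0$ the relevant condition is $\ker\varepsilon_V=\operatorname{im}d$ rather than $\ker d=\operatorname{im}d$. This is why both inductions treat $n=0$ separately using the augmentation. Everything else is the standard projectivity argument; equivariance is automatic from choosing values on a $\Z[G]$-basis, and no freeness of $V$ is needed.
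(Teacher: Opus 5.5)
Your proof is correct. It is the standard projectivity argument for the comparison theorem, and it handles degree $0$ correctly in both inductions by using $\ker\varepsilon_V=\operatorname{im}d$. The paper states the theorem without proof; its explicit recursion \Cref{eq:general-comparison-recursion} is your existence step with the preimage of $\psi(de_{i,j})$ chosen as $h\psi(de_{i,j})$ for a contraction $h$ of the target, so the two approaches agree.
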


Note that only the source has to be free and only the target has to be
exact.

\begin{corollary}\label{cor:resolutions-homotopy-equivalent}
Any two free resolutions of $\Z$ over $\Z[G]$ are $G$-equivariantly homotopy
equivalent, by maps lifting the identity of $\Z$ which are unique up to
equivariant homotopy.
\end{corollary}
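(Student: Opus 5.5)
The plan is to apply \Cref{thm:comparison} twice in each direction, once for existence of the maps and once for uniqueness up to homotopy. Let $W$ and $V$ be two free resolutions of $\Z$ over $\Z[G]$, with augmentations $\varepsilon_W$ and $\varepsilon_V$. Each is free, and $\varepsilon\circ d=0$ holds in each because the augmented complex is exact at $W_0$. Each is also exact after augmentation. So each resolution satisfies both the source hypotheses and the target hypotheses of \Cref{thm:comparison}.

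First, apply \Cref{thm:comparison} with source $W$ and target $V$. This gives a $G$-equivariant chain map $\psi\colon W\to V$ with $\varepsilon_V\circ\psi=\varepsilon_W$, and $\psi$ is unique up to equivariant homotopy. Symmetrically, taking source $V$ and target $W$ gives $\phi\colon V\to W$ with $\varepsilon_W\circ\phi=\varepsilon_V$.

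Next, consider the composite $\phi\circ\psi\colon W\to W$. It is a $G$-equivariant chain map, and
\[
\varepsilon_W\circ\phi\circ\psi=\varepsilon_V\circ\psi=\varepsilon_W,
\]
so it lifts the identity of $\Z$. The identity $\mathrm{id}_W$ also lifts the identity of $\Z$. Applying the uniqueness part of \Cref{thm:comparison} with source and target both equal to $W$ shows that $\phi\circ\psi\simeq\mathrm{id}_W$ equivariantly. The same argument with the roles of $W$ and $V$ exchanged gives $\psi\circ\phi\simeq\mathrm{id}_V$. Hence $\psi$ and $\phi$ are mutually inverse $G$-equivariant homotopy equivalences, and uniqueness up to equivariant homotopy is exactly the uniqueness clause of the theorem.

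There is no real obstacle here. The only point to check is that each resolution meets both sets of hypotheses, and this follows directly from \Cref{def:permutation-group-resolution}: freeness for the source role, and exactness together with $\varepsilon\circ d=0$ for the target role.
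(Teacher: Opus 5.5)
Your proof is correct and is the standard argument the paper intends, since the paper states the corollary as an immediate consequence of \Cref{thm:comparison} without writing out a proof. The argument builds comparison maps in both directions and applies the uniqueness clause, with source and target the same resolution, to get that the composites are equivariantly homotopic to the identities; your check that each free resolution meets both the source hypotheses and the target hypotheses is the one point that needs verifying.
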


For a complex $N$ of $\Z[G]$-modules, an equivariant homotopy equivalence
$W\to W'$ induces a homotopy equivalence of the balanced tensor products
$W\otimes_{\Z[G]}N\to W'\otimes_{\Z[G]}N$ of \Cref{sec:hom-complex}.
So by \Cref{cor:resolutions-homotopy-equivalent} the homology of
$W\otimes_{\Z[G]}N$ does not depend on the choice of $W$.  In
\Cref{sec:resolutions-and-generalized-cup-i}, the comparison theorem gives
an equivariant chain map $\psi\colon W\to\cX(r)$ and its uniqueness up to
equivariant homotopy.

\section{Cohomology Operations}
\label{sec:cohomology-operations}
For topological CSS codes, that is, codes arising from the cochain complexes of spaces, \emph{cohomology operations} can be used to obtain cohomology
invariants used in \Cref{sec:logical-gates-from-invariants} and hence logical gates.

There are two sources of interesting cochain operations. The first is the
failure of the cup product to be graded commutative on cochains, although
it is graded commutative on cohomology. The homotopies repairing this
failure, the homotopies between these homotopies, and so on, are new
cochain operations. Suitable combinations of them descend to cohomology.
We use operads and free resolutions to organize these homotopies.

The second is the lifting of coefficients. A cocycle with
$\Z/\theta$ coefficients can be lifted to a cochain with finer
coefficients such as $\Z/\theta^k$ or $\Z$. The lift need not be a
cocycle, but its coboundary is divisible by $\theta$. This can be used
to build operations whose output has finer coefficients than their input.

Steenrod's construction of the Pontryagin square combines both
ingredients and serves as our worked example.
We recall these classical constructions in preparation for the twisted
cohomology operations of
\Cref{sec:constructing-twisted-cohomology-operations}.

Throughout this section, $X$ denotes a locally ordered simplicial
complex.

\subsection{Cochain and cohomology operations}
\label{sec:cochain-and-cohomology-operations}
\begin{definition}\label{def:cochain-operation}
Fix input degrees $q_1,\ldots,q_r$ and an output degree $m$.  A
\emph{cochain operation} is a family of maps
\[
 \Phi_X\colon
 C^{q_1}(X;R_1)\times\cdots\times C^{q_r}(X;R_r)
 \longrightarrow C^m(X;R),
\]
not necessarily linear, which is natural under order-preserving simplicial
maps.  It is a \emph{cohomology operation} if it descends to cohomology,
that is, if it induces a well-defined map
\[
 H^{q_1}(X;R_1)\times\cdots\times H^{q_r}(X;R_r)
 \longrightarrow H^m(X;R).
\]
\end{definition}
A cochain operation is an explicit formula on representatives, whereas a
cohomology operation remembers only the resulting cohomology class.

The first and prototypical example of a cochain operation is the cup product. 
\begin{example}\label{ex:cup-product}
 For
$a\in C^p(X;R)$ and $b\in C^q(X;R)$ it is defined on an ordered simplex by
\[
 (a\cup b)[v_0,\ldots,v_{p+q}]
 =a[v_0,\ldots,v_p]\,b[v_p,\ldots,v_{p+q}],
\]
that is, the simplex is cut at the vertex $v_p$ into a front face for $a$
and a back face for $b$.  It satisfies the Leibniz rule
\[
 d(a\cup b)=da\cup b+(-1)^p a\cup db,
\]
so the product of two cocycles is again a cocycle.  Moreover, if $a$ and
$b$ are cocycles and $u\in C^{p-1}(X;R)$, $w\in C^{q-1}(X;R)$, then
\[
 (a+du)\cup b-a\cup b=du\cup b=d(u\cup b),
 \qquad
 a\cup(b+dw)-a\cup b=a\cup dw=(-1)^p\,d(a\cup w).
\]
Thus the cup product descends to a cohomology operation
\[
 \cup\colon H^p(X;R)\times H^q(X;R)\longrightarrow H^{p+q}(X;R).
\]
\end{example}

The cup product makes cohomology a graded-commutative ring,
with $[a]\cup[b]=(-1)^{pq}[b]\cup[a]$.  But on cochains it is only
associative and not graded commutative.  This is where the interesting
structure comes from.  Many of the cohomology operations in this paper will be
derived from $r$-fold cup products and the failure of the cup product to be
graded commutative. The following example describes this failure for the cup product of two cochains.

\begin{example}[The cup-$i$ products]\label{ex:higher-cup-products}
Let $a\in C^p(X;R)$ and $b\in C^q(X;R)$ be cocycles.  Since the cup product
is graded commutative on cohomology, $a\cup b$ and $(-1)^{pq}b\cup a$ differ
by a coboundary.  Steenrod~\cite{Steenrod1947} constructed a cochain operation
\[
 \cup_1\colon C^p(X;R)\otimes C^q(X;R)\longrightarrow C^{p+q-1}(X;R)
\]
defined for arbitrary cochains and satisfying
\[
 a\cup b-(-1)^{pq}\,b\cup a=(-1)^{p+q-1}\,d(a\cup_1 b) + (-1)^{p+q}da \cup_1 b +(-1)^q a \cup_1 db.
\]
  In other words, $\cup_1$ is a homotopy between
the cup product and the cup product with its factors swapped.  Now $\cup_1$
is again not commutative, but only commutative up to a further homotopy
$\cup_2$, which in turn is commutative only up to $\cup_3$, and so on.
Steenrod constructed this whole tower of \emph{cup-$i$ products}
\[
 \cup_i\colon C^p(X;R)\otimes C^q(X;R)\longrightarrow C^{p+q-i}(X;R),
 \qquad \cup_0=\cup,
\]
and showed that they satisfy \emph{Steenrod's formula}
\begin{equation}\label{eq:steenrods-formula}
\begin{aligned}
 d(a\cup_i b)={}&
 (-1)^{p+q-i}a\cup_{i-1}b
 +(-1)^{pq+p+q}b\cup_{i-1}a\\
 &+da\cup_i b+(-1)^p a\cup_i db,
\end{aligned}
\end{equation}
with $\cup_{-1}=0$.  For $i=0$ this is the Leibniz rule, and for $i\geq1$
it says that $\cup_i$ is a homotopy between $\cup_{i-1}$ and
$\cup_{i-1}$ with swapped arguments.

The cup-$i$ products are given by explicit formulas on ordered simplices,
so they are defined for simplicial as well as for singular cochains.  Over
$\F_2$, where no signs have to be tracked, put $n=p+q-i$.  Then
\begin{equation}\label{eq:cup-i-formula}
\begin{aligned}
 (a\cup_i b)[v_0,\ldots,v_n]
 =\sum_{0\leq j_0<j_1<\cdots<j_i\leq n}
 &a[v_0,\ldots,v_{j_0},v_{j_1},\ldots,v_{j_2},\ldots]\\
 \cdot\,&b[v_{j_0},\ldots,v_{j_1},v_{j_2},\ldots,v_{j_3},\ldots].
\end{aligned}
\end{equation}
The cuts $j_0<\cdots<j_i$ divide the ordered vertex list into $i+2$
consecutive blocks sharing their endpoints. The cochain $a$ is evaluated
on the union of the blocks with even index and $b$ on the union of the
blocks with odd index, starting with index zero. Only those cuts contribute
for which these faces have dimensions $p$ and $q$. For $i=0$ this is the cup
product.  For $i=1$ the second cut is forced to be $j_1=j_0+q$, and over the
integers the formula reads
\begin{equation}\label{eq:cup-one-formula}
 (a\cup_1 b)[v_0,\ldots,v_{p+q-1}]
 =\sum_{j=0}^{p-1}(-1)^{(p-j)(q+1)}\,
 a[v_0,\ldots,v_j,v_{j+q},\ldots,v_{p+q-1}]\,
 b[v_j,\ldots,v_{j+q}].
\end{equation}
The two faces share exactly the vertices $v_j$ and $v_{j+q}$.
\end{example}

From these homotopies one obtains new cohomology operations, for example, the Steenrod squares.

\begin{example}[Steenrod squares]\label{ex:steenrod-squares}
For $x\in Z^q(X;\Z/2)$ and $0\leq i\leq q$, the two $\cup_{i-1}$-terms of Steenrod's
formula~\eqref{eq:steenrods-formula} with $a=b=x$ cancel modulo $2$, so
$x\cup_ix$ is a cocycle, and the same formula gives
$(x+dy)\cup_i(x+dy)-x\cup_ix=d\bigl(dy\cup_{i+1}x+y\cup_idy+y\cup_{i-1}y\bigr)$.
Hence the \emph{Steenrod squares}~\cite{Steenrod1947}
\[
 Sq^{q-i}\colon H^q(X;\Z/2)\longrightarrow H^{2q-i}(X;\Z/2),
 \qquad
 [x]\longmapsto[x\cup_i x],
\]
are cohomology operations.  Only the top square $Sq^q[x]=[x\cup x]$ comes
from the cup product itself; all others exist because $\cup$ fails to be
commutative on cochains.
\end{example}

We also want to be able to lift coefficients, which is another source of
cohomology operations.  For example, operations from $\Z/2$ to $\Z/8$ will
later yield our construction of the $T$ gate.  We show two examples of
cohomology operations doing that.

\begin{example}[The integral Bockstein]\label{ex:integral-bockstein}
Let $\theta\geq2$ and fix the set-theoretic section
\[
 \iota_\theta\colon\Z/\theta\longrightarrow\Z,
 \qquad
 \iota_\theta(\overline k)=k\quad(0\leq k<\theta).
\]
Using the basis of simplices, we apply it coefficientwise to cochains.  This
is not additive, but it lifts: for $x\in C^q(X;\Z/\theta)$ the integral
cochain $\widetilde x=\iota_\theta x$ reduces to $x$ modulo $\theta$.  If $x$
is a cocycle, then $d\widetilde x$ reduces to $dx=0$, so every coefficient
of $d\widetilde x$ is divisible by $\theta$ and we may define
\begin{equation}\label{eq:integral-bockstein}
 \beta_\theta: Z^q(X;\mathbb Z/\theta) \longrightarrow Z^{q+1}(X;\mathbb Z), 
 \qquad 
 \beta_\theta(x)=\frac{d\widetilde x}{\theta}\in C^{q+1}(X;\Z).
\end{equation}
It is a cocycle, because $\theta\,d\beta_\theta(x)=dd\widetilde x=0$ and
integral cochains have no torsion.  If $y=x+dc$ is a cohomologous cocycle,
then $\widetilde y-\widetilde x$ and $d\widetilde c$ agree modulo $\theta$,
so $\widetilde y-\widetilde x=d\widetilde c+\theta b$ for an integral
cochain $b$, and therefore $\beta_\theta(y)-\beta_\theta(x)=db$.  Thus
\[
 \beta_\theta\colon H^q(X;\Z/\theta)\longrightarrow H^{q+1}(X;\Z)
\]
is a cohomology operation, the \emph{integral Bockstein}.  Note that the
division by $\theta$ is only possible because $x$ is a cocycle, and that the
lift $\widetilde x$ itself does not descend to cohomology: it need not be a
cocycle.
\end{example}

\begin{example}[The Pontryagin square on cochains]
\label{ex:pontryagin-square-formula}
Let $q$ be even, let $x\in Z^q(X;\Z/2)$ and let $a=\iota_2x$ be its
$0/1$-lift, regarded as a cochain with $\Z/4$ coefficients.  Steenrod's
formula for the \emph{Pontryagin square}~\cite{Steenrod1957} is
\begin{equation}\label{eq:pontryagin-square-cochain-formula}
 \mathfrak P(x)=a\cup a+a\cup_1da\in C^{2q}(X;\Z/4).
\end{equation}
It refines the cup square: since $da$ is even, $a\cup_1 da$ vanishes modulo
$2$, so $\mathfrak P(x)$ reduces to $x\cup x$.  To see that it is a cocycle
modulo $4$, write $da=2b$ with $b=\beta_2(x)$ as in
\Cref{ex:integral-bockstein}.  Since $q$ is even, the Leibniz rule
gives
\[
 d(a\cup a)=da\cup a+a\cup da=2\,b\cup a+2\,a\cup b,
\]
and Steenrod's formula~\eqref{eq:steenrods-formula} for $i=1$, applied to
$a$ and $da$, gives
\[
 d(a\cup_1da)=a\cup da-da\cup a+da\cup_1da
 =2\,a\cup b-2\,b\cup a+4\,b\cup_1b.
\]
Adding up,
\[
 d\mathfrak P(x)=4\,a\cup b+4\,b\cup_1 b\equiv0\pmod 4.
\]
The correction term $a\cup_1 da$ is exactly what cancels the failure of
$a\cup a$ to be a cocycle modulo $4$, and it uses both sources of cochain
operations at once: a cup-$i$ product and a lifted coefficient.  That
$\mathfrak P(x)$ depends only on the class of $x$ can also be checked by
hand, but it is more instructive to deduce it from the general machinery;
we do this in \Cref{ex:pontryagin-square}.
\end{example}

One could try to construct all kinds of cohomology operations from these
examples in the same way.  For example, there are also the higher
Pontryagin powers~\cite{Thomas1957}.  But it is difficult to handle such
explicit formulas and difficult to generalise them.  Furthermore, they do
not suffice for the operations constructed later which yield gates at
higher levels of the Clifford hierarchy.  What is needed is a systematic way to produce all
homotopies between permuted cup products together with their boundaries.
This is the subject of the next section.

\subsection{Constructing \texorpdfstring{$r$}{r}-cochain operations from operads}
\label{sec:higher-cochain-operations}

The cup-$i$ products describe homotopies involving two inputs.
Composing them with cup products gives homotopies between products
with more factors, but their binary identities do not specify the
higher homotopies relating these compositions.
The $E_\infty$ structure on cochains supplies compatible homotopies
for permutations of any number of inputs. We describe this structure
using the surjection operad.
We use the tensor products and hom complexes of
\Cref{sec:hom-complex}.

\begin{definition}[Differential graded operad]
\label{def:dg-operad}
A \emph{differential graded operad} $\mathcal O$ is a set of chain
complexes $\mathcal O(r)$, $r\geq1$, each carrying an action of the
symmetric group $\Sigma_r$.  Furthermore there are substitution operations
\[
 u(u_1,\ldots,u_r)\in
 \mathcal O(s_1+\cdots+s_r),
 \qquad
 u\in\mathcal O(r),\quad u_j\in\mathcal O(s_j),
\]
which are associative, equivariant and compatible with the differential,
and there is a unit operation $1\in\mathcal O(1)$.
\end{definition}

An operad can be thought of as a blueprint for the rules of some algebraic
operations.  The chain complex $\mathcal O(r)$ consists of the $r$-input operations, the
$\Sigma_r$-action says how to relabel their inputs, substitution says how to
plug the output of one into an input of another, and the differential
records boundaries.  Nothing here refers to cochains yet.  The following
example and definition will connect the blueprint to actual operations.

\begin{example}[The endomorphism operad]
\label{ex:endomorphism-operad}
For a cochain complex $A$, its \emph{endomorphism operad} is
\[
 \operatorname{End}_A(r)=\operatorname{hom}(A^{\otimes r},A).
\]
We use the homological grading of \Cref{sec:hom-complex}.
Thus an element $\nu\in\operatorname{End}_A(r)_s$ consists of
multilinear maps
\[
 \nu\colon A^{q_1}\otimes\cdots\otimes A^{q_r}
 \longrightarrow A^{q_1+\cdots+q_r-s}.
\]
For $\nu\in\operatorname{End}_A(r)_s$ the
differential is
\begin{equation}\label{eq:endomorphism-differential}
 d_{\mathrm{End}}\nu
 =d\circ\nu-(-1)^s\nu\circ d_{A^{\otimes r}}.
\end{equation}
Here $d_{A^{\otimes r}}$ is the tensor-product differential
of \Cref{sec:hom-complex}.
Thus $d_{\mathrm{End}}\nu=\nu'-\nu''$ says precisely that $\nu$ is a
homotopy between the operations~$\nu'$ and~$\nu''$.  The symmetric group
acts by permuting the arguments with the Koszul sign,
\[
 (\pi\nu)(a_1,\ldots,a_r)=\pm\,\nu(a_{\pi(1)},\ldots,a_{\pi(r)}).
\]
In particular, for the cycle $\rho=(1\,2\,\ldots\,r)$,
\begin{equation}\label{eq:cyclic-action-on-endomorphisms}
 (\rho\nu)(a_1,\ldots,a_r)
 =(-1)^{|a_1|(|a_2|+\cdots+|a_r|)}
 \nu(a_2,\ldots,a_r,a_1).
\end{equation}
This action commutes with $d_{\mathrm{End}}$.  Substitution is the
composition of multilinear maps,
\[
 \nu(\nu_1,\ldots,\nu_r)(a_1,\ldots,a_{s_1+\cdots+s_r})
 =\pm\,\nu\bigl(\nu_1(a_1,\ldots,a_{s_1}),\ldots,
 \nu_r(a_{s_1+\cdots+s_{r-1}+1},\ldots,a_{s_1+\cdots+s_r})\bigr),
\]
where the sign is the Koszul sign of moving each $\nu_j$ past the arguments
$a_1,\ldots,a_{s_1+\cdots+s_{j-1}}$.  It is convenient to substitute into a
single input only, $\nu\circ_j\nu'=\nu(1,\ldots,1,\nu',1,\ldots,1)$ with~$\nu'$ in position $j$.  For instance, associativity of the cup product is
the identity $\cup\circ_1\cup=\cup\circ_2\cup$, and the three-input
operation $(a,b,c)\mapsto a\cup_1(b\cup c)$ is $\cup_1\circ_2\cup$.
The cup product, the cup-$i$ products, and the multilinear
operations used in our twisted constructions are elements of
$\operatorname{End}_{C^*(X;R)}$.
\end{example}

\begin{definition}[Algebra over an operad]
\label{def:algebra-over-operad}
An \emph{algebra over $\mathcal O$} is a cochain complex $A$ together with a
morphism of differential graded operads
\[
 \mathcal O\longrightarrow\operatorname{End}_A.
\]
It turns every abstract element of $\mathcal O(r)$ into an actual
$r$-input operation on $A$, compatibly with differentials, permutations,
and substitution.
\end{definition}

Which operad one chooses decides which structure is being imposed.
\begin{example}
  Let $\mathcal O$ be the operad with $\mathcal O(r) = R$ with the trivial $\Sigma_r$-action in degree 0 and nothing
  in higher degrees. An algebra $A$ over $\mathcal O$ picks an operation 
  \[
  \mu: R \to \operatorname{End}_A(r), \quad 1 \mapsto (\mu_r: A^{\otimes r} \to A)
  \]
  for each arity $r$. Substitution imposes the relation $\mu_r \circ_k \mu_s = \mu_{r+s-1}$.
  It is straightforward that this defines a strictly commutative differential graded algebra. 
\end{example}

The operad used here is the \emph{surjection operad} $\cX$ of
McClure--Smith~\cite{McClureSmith2003} and
Berger--Fresse~\cite{BergerFresse2004}.  We display a surjection
$u\colon\{1,\ldots,r+i\}\twoheadrightarrow\{1,\ldots,r\}$ as a word
\[
 u=(u_1,\ldots,u_{r+i}),\qquad
 u_j\in\{1,\ldots,r\},
\]
and call it non-degenerate when two adjacent labels are never equal.  Then
$\cX(r)_i$ is the free abelian group generated by the non-degenerate
surjections of length $r+i$; over a ring $R$ we tensor with $R$.  The
differential is the signed sum of the words obtained by deleting one entry.
Delete at first each 1 with alternating
signs, then each 2 and so on. The last term obtained by deleting $i$ has the same sign as
the first term obtained by deleting $i+1$. Terms with adjacent equal labels
or a missing label are zero. For example, the differential of the word $(121232)$ is
\begin{align*}
  \partial (121232) &= (21232) - (12232) - (11232) + (12132) - (12123) \\
  &= (21232) + (12132) - (12123),
\end{align*}
where deleting the only occurrence of the label $3$ gives a non-surjection
and has been omitted.
This makes $\cX(r)$ a chain complex.  The action of $\Sigma_r$ is given by
relabelling with a sign,
\[
 \pi u=\epsilon(\pi,u)\,(\pi(u_1),\ldots,\pi(u_{r+i})),
 \qquad
 \epsilon(\pi,u)=\prod_{\substack{k<l\\ \pi(k)>\pi(l)}}(-1)^{(n_k-1)(n_l-1)},
\]
where $n_k$ is the number of occurrences of the label $k$ in $u$, and thus permutes the
inputs.
Substitution inserts one word into another: if the label $j$
occurs $k$ times in $u$, then $u\circ_j u'$ is the signed sum over all ways
of cutting $u'$ into $k$ consecutive pieces which overlap in one label, the
pieces replacing the occurrences of $j$ and the labels being shifted
accordingly; for details see~\cite{McClureSmith2003,BergerFresse2004}.  For
example,
\[
 (1,2,1)\circ_2(1,2,3)=(1,\boldsymbol{2,3,4},1),
 \qquad
 (1,2,1)\circ_1(1,2)=\pm(\boldsymbol{1},3,\boldsymbol{1,2})\pm(\boldsymbol{1,2},3,\boldsymbol{2}).
\]

Cochains of a space are an algebra over the surjection
operad~\cite{McClureSmith2003,BergerFresse2004}.  A surjection in
$\cX(r)_i$ represents a cochain operation of arity $r$ and cohomological
degree $-i$, that is, an element of
$\operatorname{hom}(C^*(X;R)^{\otimes r},C^*(X;R))_i$.  What makes this
description useful is that each surjection is a compact and completely finite
recipe for evaluating such an operation on a single simplex.  For example, the element
$(1,3,1,2,1)\in\cX(3)_2$ will split the simplex into five overlapping
pieces, give the first, third and fifth to the first cochain, the fourth to
the second, and the second to the third.  We now make this precise.

Let $a_j\in C^{q_j}(X;R)$ and let $[v_0,\ldots,v_n]$ be an ordered simplex,
where $n=q_1+\cdots+q_r-i$.  For every weakly increasing sequence of cuts
\[
 0=p_0\leq p_1\leq\cdots\leq p_{r+i}=n,
\]
form the overlapping intervals
\[
 I_j=[v_{p_{j-1}},\ldots,v_{p_j}].
\]
For each label $k$, collect the intervals $I_j$ for which $u_j=k$ and
concatenate their vertices in the original order; call the resulting face
$F_k$.  Degenerate terms are set to zero.  The operation associated with
$u$ is
\begin{equation}\label{eq:interval-cut-formula}
 u(a_1,\ldots,a_r)[v_0,\ldots,v_n]
 =\sum_{\text{compatible cuts}}
 \mathord{\pm}\prod_{k=1}^r a_k(F_k).
\end{equation}
Here ``compatible'' means $\dim F_k=q_k$ for every $k$.  Over $\Z$, the
interval-cut and Koszul signs are those of
\cite{McClureSmith2003,KMM2021}; over $\F_2$ they disappear.  We use the
degree convention in which $(1,2)$ is exactly the cup product of
\Cref{ex:cup-product}.

\begin{example}[The cup-$i$ products in the surjection operad]
\label{ex:cup-i-surjections}
The degree-zero word $(1,2)$ has one cut, so it acts by
\[
 (1,2)(a,b)=a\cup b.
\]
Here, as before, simplices also denote their dual basis cochains.
For $v_0<v_1<v_2$, its two intervals are $[v_0,v_1]$ and $[v_1,v_2]$, and hence
\[
 [v_0,v_1]\cup[v_1,v_2]=[v_0,v_1,v_2].
\]
Visually,
\begin{center}
\begin{tikzpicture}[x=0.82cm,y=0.82cm,baseline=(current bounding box.center),
                    line cap=round,line join=round]
 \begin{scope}
  \coordinate (a) at (0,1.1); \coordinate (b) at (-0.7,0);
  \coordinate (c) at (0.7,0);
  \draw[gray!32] (a)--(c)--(b);
  \draw[->,very thick] (a)--(b);
  \node[above] at (a) {$v_0$}; \node[below left] at (b) {$v_1$};
  \node[below right] at (c) {$v_2$};
 \end{scope}
 \node at (1.5,0.4) {$\cup$};
 \begin{scope}[shift={(3,0)}]
  \coordinate (a) at (0,1.1); \coordinate (b) at (-0.7,0);
  \coordinate (c) at (0.7,0);
  \draw[gray!32] (a)--(b) (a)--(c);
  \draw[->,very thick] (b)--(c);
  \node[above] at (a) {$v_0$}; \node[below left] at (b) {$v_1$};
  \node[below right] at (c) {$v_2$};
 \end{scope}
 \node at (4.5,0.4) {$=$};
 \begin{scope}[shift={(6,0)}]
  \coordinate (a) at (0,1.1); \coordinate (b) at (-0.7,0);
  \coordinate (c) at (0.7,0);
  \fill[gray!10] (a)--(b)--(c)--cycle;
  \draw[->,thick] (a)--(b); \draw[->,thick] (a)--(c);
  \draw[->,thick] (b)--(c);
  \node[above] at (a) {$v_0$}; \node[below left] at (b) {$v_1$};
  \node[below right] at (c) {$v_2$};
 \end{scope}
\end{tikzpicture}.
\end{center}

The degree-one word $(1,2,1)$ gives the first cup-$i$ product, $\cup_1$.  Its cuts
\[
 [v_0,\ldots,v_j]\mid[v_j,\ldots,v_{j+q}]\mid[v_{j+q},\ldots,v_n]
\]
are exactly the ones in~\Cref{eq:cup-one-formula}, and for
$a\in C^p(X;R)$ and $b\in C^q(X;R)$
\[
 (1,2,1)(a,b)=\pm\,a\cup_1 b,
\]
with a sign which only depends on $p$ and $q$.
On the triangle, the cut $[v_0]\mid[v_0,v_1,v_2]\mid[v_2]$ gives
\[
 [v_0,v_2]\cup_1[v_0,v_1,v_2]=-[v_0,v_1,v_2].
\]
Visually,
\begin{center}
\begin{tikzpicture}[x=0.9cm,y=0.9cm,baseline=(current bounding box.center),
                    line cap=round,line join=round]
 \begin{scope}
  \coordinate (a) at (0,1.1); \coordinate (b) at (-0.7,0);
  \coordinate (c) at (0.7,0);
  \draw[gray!32] (a)--(b)--(c);
  \draw[->,very thick] (a)--(c);
  \node[above] at (a) {$v_0$}; \node[below left] at (b) {$v_1$};
  \node[below right] at (c) {$v_2$};
 \end{scope}
 \node at (1.35,0.4) {$\cup_1$};
 \begin{scope}[shift={(2.7,0)}]
  \coordinate (a) at (0,1.1); \coordinate (b) at (-0.7,0);
  \coordinate (c) at (0.7,0);
  \draw[->,thick] (a)--(b); \draw[->,thick] (a)--(c);
  \draw[->,thick] (b)--(c);
  \node[above] at (a) {$v_0$}; \node[below left] at (b) {$v_1$};
  \node[below right] at (c) {$v_2$};
 \end{scope}
 \node at (4.05,0.4) {$=-$};
 \begin{scope}[shift={(5.4,0)}]
  \coordinate (a) at (0,1.1); \coordinate (b) at (-0.7,0);
  \coordinate (c) at (0.7,0);
  \fill[gray!10] (a)--(b)--(c)--cycle;
  \draw[->,thick] (a)--(b); \draw[->,thick] (a)--(c);
  \draw[->,thick] (b)--(c);
  \node[above] at (a) {$v_0$}; \node[below left] at (b) {$v_1$};
  \node[below right] at (c) {$v_2$};
 \end{scope}
\end{tikzpicture}
\end{center}
Over $\F_2$ the minus sign is, of course, invisible.

More generally, let $\alpha_i=(1,2,1,2,\ldots)\in\cX(2)_i$ over $\F_2$ be the
alternating word with $i+2$ entries.  Its interval cuts are exactly the cuts
$j_0<\cdots<j_i$ of~\Cref{eq:cup-i-formula}, so over $\F_2$
\[
 a\cup_i b=\alpha_i(a,b),
 \qquad
 \cup=(1,2),\quad \cup_1=(1,2,1),\quad \cup_2=(1,2,1,2),\ldots,
\]
and over the integers this holds up to signs depending on the degrees.

In arity two nothing else exists: $\cX(2)_i$ has the basis $\alpha_i$ and
$\tau\alpha_i=\pm(2,1,2,\ldots)$, where $\tau$ is the transposition.  Hence
$\cX(2)_i=\Z[C_2]\alpha_i$ is free of rank one over the group ring of
$C_2=\langle\tau\rangle$. The rule for the operadic boundary $\partial$ gives
\[
 \partial(1,2,1)=(2,1)-(1,2),
 \qquad
 \partial(1,2,1,2)=(2,1,2)+(1,2,1).
\] 
More generally,
\[
 \partial\alpha_{2k+1}=(\tau-1)\alpha_{2k},
 \qquad
 \partial\alpha_{2k+2}=(1+\tau)\alpha_{2k+1}.
\]
These identities exhibit $\cX(2)$ as the periodic free resolution
\begin{equation}\label{eq:w-two}
 W(2)=\bigl\{\Z[C_2]e_0\xleftarrow{\ \tau-1\ }\Z[C_2]e_1
 \xleftarrow{\ 1+\tau\ }\Z[C_2]e_2\xleftarrow{\ \tau-1\ }\cdots\bigr\}
\end{equation}
of $\Z$ over $\Z[C_2]$, with $e_i\mapsto\alpha_i$.  Composing with the
action on cochains, the whole tower of cup-$i$ products is organised in a
single $C_2$-equivariant chain map, which is how Steenrod~\cite{Steenrod1953}
phrased it:
\begin{equation}\label{eq:steenrods-map}
\begin{aligned}
 \mu^{(2)}\colon W(2)\otimes_{\Z[C_2]}C^*(X)^{\otimes2}
 &\longrightarrow C^*(X),\\
 e_i\otimes a\otimes b&\longmapsto\alpha_i(a,b)=\pm\,a\cup_i b.
\end{aligned}
\end{equation}
Here $\tau$ acts on $C^*(X)^{\otimes2}$ by swapping the factors with the
Koszul sign, and $W(2)$ indexes the homotopies. Since this action is a chain map the operadic boundary recovers 
Steenrod's formula in \Cref{eq:steenrods-formula}.
\end{example}

The same mechanism works in higher arity, and this is where the surjection
operad goes beyond Steenrod's cup-$i$ products.  The word $(1,2,\ldots,r)$
acts as the $r$-fold cup product.  The word
$(1,2,3,4,1)=(1,2,1)\circ_2(1,2,3)$ acts by
\[
 (1,2,3,4,1)(a,b,c,d)=\pm\,a\cup_1(b\cup c\cup d),
\]
and its boundary is
\[
 \partial(1,2,3,4,1)=(2,3,4,1)-(1,2,3,4).
\]
Thus it is a homotopy between the fourfold cup product $a\cup b\cup c\cup d$
and the rotated product $b\cup c\cup d\cup a$.  In $\cX(4)$ there are many
more such homotopies, and homotopies between them, for every permutation of
the four inputs.  The next section explains how to pick a small and
coherent family among them.

\subsection{Resolutions and generalized cup-\texorpdfstring{$i$}{i} products}
\label{sec:resolutions-and-generalized-cup-i}
In this section we construct higher arity analogues of the cup-$i$ products.
The surjection operad contains far too many operations to work with
directly.  A subgroup $G\leq\Sigma_r$ of the permutations of the inputs,
together with a small free resolution of $G$, singles out a family of
operations which is still coherent, but explicit and finite in every degree.
These operations will be the building blocks for cohomology operations on
spaces with a $G$-symmetry.  We start with the property of the surjection
operad which makes this possible.

\begin{definition}[$E_\infty$ operad]
\label{def:e-infinity-operad}
A differential graded operad $\mathcal E$ is an \emph{$E_\infty$ operad} if
for every $r$ it holds that
\begin{enumerate}
 \item[(i)] the action of $\Sigma_r$ on $\mathcal E(r)$ is free, and
 \item[(ii)] the augmentation $\mathcal E(r)\to R$ is a quasi-isomorphism.
\end{enumerate}
Equivalently, each $\mathcal E(r)$ is a free resolution of $R$ over
$R[\Sigma_r]$.  An \emph{$E_\infty$ structure} on a cochain complex $A$ is a
morphism $\mathcal E\to\operatorname{End}_A$, that is, the structure of an
algebra over $\mathcal E$.
\end{definition}

Two operations of the same arity and degree with the same boundary differ
by a cycle. In positive degrees this cycle is a boundary by condition (ii).
In degree zero, the same holds when their augmentations agree. Condition (i) is what makes
the choice coherent. Since $\Sigma_r$ acts freely, a choice made on one representative of each orbit 
propagates equivariantly to all permutations.  For $r=2$ we have already seen this in
\Cref{ex:cup-i-surjections}, where $\cX(2)$ turned out to be the
resolution $W(2)$.  In general, $\Sigma_r$ permutes the non-degenerate
surjections freely, since a surjection uses every label, and each augmentation
$\cX(r)\to\Z$ is a quasi-isomorphism~\cite{McClureSmith2003,BergerFresse2004}.  Hence the surjection
operad is an $E_\infty$ operad, and simplicial cochains carry an
$E_\infty$ structure.

Freeness is inherited by subgroups, so for every subgroup $G\leq\Sigma_r$
the complex $\cX(r)$ is also a free resolution of $\Z$ over $\Z[G]$.  It is,
however, a very large one.  We now explain how to use a small resolution to pick a coherent family 
of surjections of $\cX(r)$ that describes the permutation by elements of $G$.

Let $G\leq\Sigma_r$ and let $W$ be a free $\Z[G]$-resolution of $\Z$
as in \Cref{sec:resolutions}. We always choose a $\Z[G]$-basis
$\{e_{i,j}\}_{j\in J_i}$ of $W_i$ and assume that $W_0=\Z[G]e_0$ is free of
rank one with $\varepsilon(e_0)=1$, where $e_0=e_{0,0}$.  The differential is then recorded by
coefficients $\gamma^{i,j}_{j'}\in\Z[G]$,
\[
 de_{i,j}=\sum_{j'\in J_{i-1}}\gamma^{i,j}_{j'}\,e_{i-1,j'}.
\]
\begin{example}[The cyclic resolution]
\label{ex:cyclic-resolution}
For $r\geq1$, let $C_r=\langle\rho\mid\rho^r=1\rangle$, embedded in
$\Sigma_r$ by $\rho\mapsto(1\,2\,\ldots\,r)$, and set
\[
 T=\rho-1,
 \qquad
 N=1+\rho+\cdots+\rho^{r-1}
 \quad\text{in }\Z[C_r].
\]
Steenrod's periodic resolution $W(r)$~\cite{Steenrod1953} is the
augmented chain complex
\begin{equation}\label{eq:cyclic-resolution}
 \cdots\xrightarrow{N}\Z[C_r]e_3
 \xrightarrow{T}\Z[C_r]e_2
 \xrightarrow{N}\Z[C_r]e_1
 \xrightarrow{T}\Z[C_r]e_0
 \xrightarrow{\varepsilon}\Z\longrightarrow0.
\end{equation}
Here $\varepsilon$ sends every element of $C_r$ to $1$, and
\begin{equation}\label{eq:cyclic-resolution-differential}
 de_{2k+1}=T e_{2k},
 \qquad
 de_{2k+2}=N e_{2k+1}.
\end{equation}
Exactness follows from $TN=NT=0$ together with
$\ker T=N\Z[C_r]$ and $\ker N=T\Z[C_r]$.  In particular,
$W(r)_i=\Z[C_r]e_i$ has a single basis element in every degree, and for
$r=2$ we recover the resolution $W(2)$ of~\Cref{eq:w-two}.
\end{example}

Let $A=C^*(X;R)$.  In \Cref{sec:higher-cochain-operations} we saw that
Steenrod's cup-$i$ products are a $C_2$-equivariant chain map
$W(2)\to\operatorname{End}_A(2)$, $e_i\mapsto\pm\cup_i$.  We now generalise
this by constructing, for every subgroup $G\leq\Sigma_r$ and every free
resolution $W$ of $G$, a $G$-equivariant chain map
\begin{equation}\label{eq:w-action}
 \mu\colon W\longrightarrow\operatorname{End}_A(r).
\end{equation}
The map $\mu$ can be conveniently constructed through the factorization
\begin{equation}\label{eq:w-surjection-factorization}
 W\xrightarrow{\ \psi\ }\cX(r)
 \longrightarrow\operatorname{End}_A(r).
\end{equation}
The second arrow is the action of the surjection operad described in
\Cref{sec:higher-cochain-operations}. For the first arrow, recall that
$\cX(r)$ restricted to $G$ is a free $\Z[G]$-resolution of $\Z$.
\Cref{thm:comparison} therefore gives a $G$-equivariant chain map
$\psi\colon W\to\cX(r)$ lifting the identity of $\Z$.
We normalize $\psi$ by
\[
 \psi(e_0)=(1,2,\ldots,r).
\]

The map $\psi$ can be written down explicitly.  Let $h$ be a contraction of
$\cX(r)$ onto $\Z$, that is a map of degree $+1$ with
\[
 dh+hd=\mathrm{id}-\eta\varepsilon,
 \qquad \eta(1)=(1,2,\ldots,r).
\]
The contraction need not be equivariant.  We define $\psi$ on the basis and extend $\Z[G]$-linearly. 
We do this recursively by
\begin{equation}\label{eq:general-comparison-recursion}
 \psi(e_{i,j})= h \psi(d e_{i,j}) = h\Bigl(\sum_{j'\in J_{i-1}}\gamma^{i,j}_{j'}\,
 \psi(e_{i-1,j'})\Bigr).
\end{equation}
The augmentation is preserved because $\varepsilon\psi(e_0)=1$.
For $i\geq1$, assume that $\psi$ commutes with the differential in
lower degrees. Then
\[
 d\psi(de_{i,j})=0,\qquad \varepsilon\psi(de_{i,j})=0,
\]
where the second equality uses $\varepsilon d=0$ when $i=1$. Hence
\begin{align*}
 d\psi(e_{i,j})
 &=dh\psi(de_{i,j})\\
 &=(\mathrm{id}-\eta\varepsilon-hd)\psi(de_{i,j})\\
 &=\psi(de_{i,j}).
\end{align*}
Kaufmann and Medina-Mardones~\cite[Section~6.2]{KMM2021} give an explicit
contraction $h$ of the surjection operad, so that
\Cref{eq:general-comparison-recursion} yields explicit formulas.

The uniqueness part of \Cref{thm:comparison} makes the construction
well posed. For a fixed resolution $W$, two choices of $\psi$, for
instance from two different contractions $h$, are $G$-equivariantly
homotopic. Two different resolutions $W$ and $W'$ are related by a
comparison map, unique up to equivariant homotopy, and composing it with
$\psi'$ gives a map homotopic to $\psi$. In both cases the cohomology
operations of \Cref{sec:constructing-twisted-cohomology-operations}
agree after the universal classes are transported along the comparison
map. The cochain formulas, and hence the circuits, may depend on the
choice.

\begin{definition}[Generalized cup-$i$ products]
\label{def:generalized-cup-i}
Let $G\leq\Sigma_r$, let $W$ be a free resolution of $G$ with basis
$\{e_{i,j}\}$, and let $\mu\colon W\to\operatorname{End}_A(r)$ be constructed
as above.  The \emph{cup-$(G,i,j)$ product} is the $r$-input operation of
degree $-i$
\[
 \mu^G_{i,j}=\mu(e_{i,j})\colon
 A^{q_1}\otimes\cdots\otimes A^{q_r}
 \longrightarrow A^{q_1+\cdots+q_r-i}.
\]
The index $i$ is the homotopy level and $j$ indexes the basis elements of
the $i$-th degree of the resolution.  We use the same symbol for the element
$\psi(e_{i,j})$ of the surjection operad and for the cochain operation it
defines.  Since $\mu$ is a chain map,
\[
 d_{\mathrm{End}}\mu^G_{i,j}
 =\sum_{j'\in J_{i-1}}\gamma^{i,j}_{j'}\,\mu^G_{i-1,j'},
\]
where $G$ acts by permuting the inputs as
in~\Cref{eq:cyclic-action-on-endomorphisms}.
\end{definition}

The cup-$(G,0,0)$ product is always the $r$-fold cup product
$\mu^G_{0,0}=(1,2,\ldots,r)$, and $\mu^G_{1,j}$ is a homotopy between it and
the permuted products prescribed by $de_{1,j}$.  The higher products are
homotopies between these homotopies, as prescribed by the resolution.

\begin{example}[The cup-$(r,i)$ products]
\label{ex:cup-r-i-products}
For $G=C_r$ and the cyclic resolution $W(r)$ of
\Cref{ex:cyclic-resolution} there is one basis element in each
degree, and we write
\[
 \mu^{(r)}_i=\mu^{C_r}_i=\mu(e_i)
\]
for the \emph{cup-$(r,i)$ product}.  The differential of $W(r)$ gives
\begin{align}
 d_{\mathrm{End}}\mu_0^{C_r}&=0,\notag\\
 d_{\mathrm{End}}\mu_{2k+1}^{C_r}
   &=(\rho-1)\mu_{2k}^{C_r},\notag\\
 d_{\mathrm{End}}\mu_{2k+2}^{C_r}
   &=N\mu_{2k+1}^{C_r}.
 \label{eq:first-mu-boundaries}
\end{align}
Here $\rho$ cyclically rotates the inputs with the Koszul sign.  Thus
$\mu_1^{C_r}$ is a homotopy between the $r$-fold cup product and its
rotation, the norm $N\mu_1^{C_r}$ of these homotopies is a cycle, and
$\mu_2^{C_r}$ is a chosen filler of it.  For the periodic resolution the
recursion in \Cref{eq:general-comparison-recursion} becomes the recursion
\begin{align}
 \psi(e_{2k+1})&=hT\psi(e_{2k}),\notag\\
 \psi(e_{2k+2})&=hN\psi(e_{2k+1}),
 \label{eq:surjection-comparison-recursion}
\end{align}
see~\cite[Definition~6.5 and Theorem~6.6]{KMM2021}.  For $r=2$ the
cup-$(2,i)$ products are Steenrod's cup-$i$ products up to sign, and for a
prime $r=p$ they give cochain representatives of the Steenrod
operations~\cite{KMM2021}.
\end{example}

\begin{example}[The arity-four operations used below]
\label{ex:arity-four-operations}
In arity four, write $\mu_i=\mu_i^{(4)}$. The explicit map $\psi$ of
Kaufmann and Medina-Mardones~\cite{KMM2021} gives
\begin{align}
 \mu_0={}&(1,2,3,4),\label{eq:arity-four-mu-zero}\\
 \mu_1={}&(1,2,3,4,1),\label{eq:arity-four-mu-one}\\
 \mu_2={}&(1,2,3,4,1,2)+(1,3,4,1,2,3)
 +(1,4,1,2,3,4)\notag\\
 &+(1,2,3,4,2,3)+(1,2,4,2,3,4)
 +(1,2,3,4,3,4).
 \label{eq:arity-four-mu-two}
\end{align}
Consequently
\[
 \mu_0(a,b,c,d)=a\cup b\cup c\cup d,
 \qquad
 \mu_1(a,b,c,d)=\pm\,a\cup_1(b\cup c\cup d).
\]
The operation $\mu_2$ is a genuine four-input operation that cannot be reduced to binary cup-$i$ products or iterations thereof.  If
$\rho=(1\,2\,3\,4)$ and $N=1+\rho+\rho^2+\rho^3$, then
\[
 d\mu_0=0,
 \qquad
 d\mu_1=(\rho-1)\mu_0,
 \qquad
 d\mu_2=N\mu_1,
\]
as one checks directly with the rule for $d$.
\end{example}

\begin{example}[Evaluating $\mu_2$ on a simplex]
Denote the six summands of~\Cref{eq:arity-four-mu-two}, in their
displayed order, by $u_1,\ldots,u_6$. They all induce a cochain operation
\[
u_i\in\operatorname{hom}(C^*(X)^{\otimes 4},C^*(X))_2.
\]
Let $X$ be the tetrahedron with vertices $v_0<v_1<v_2<v_3$ and work over $\F_2$.
Consider in degree $5$ the operation $u_2:(C^*(X)^{\otimes 4})^5 \to C^3(X)$.
Let $a,b,c,d$ be cochains of degrees $(2,1,1,1)$.
We want to evaluate $u_2(a,b,c,d)$ on $[v_0,v_1,v_2,v_3]$.

We first list the compatible cuts of $u_2=(1,3,4,1,2,3)$ in
\Cref{eq:interval-cut-formula}.  Write $\ell_j$ for the
length of the interval $I_j$.  The label $2$ occurs once, at position $5$, so
$\dim F_2=1$ forces $\ell_5=1$; likewise $\dim F_4=1$ forces $\ell_3=1$.  
Taking the remaining constraints into account it follows that there are exactly two
compatible cuts:
\[
\renewcommand{\arraystretch}{1.25}
\begin{array}{l|llll}
 \text{cut} & F_1 & F_2 & F_3 & F_4\\\hline
 {}[v_0]\mid[v_0]\mid[v_0,v_1]\mid[v_1,v_2]\mid[v_2,v_3]\mid[v_3]
  & [v_0,v_1,v_2] & [v_2,v_3] & [v_0,v_3] & [v_0,v_1]\\
 {}[v_0,v_1]\mid[v_1]\mid[v_1,v_2]\mid[v_2]\mid[v_2,v_3]\mid[v_3]
  & [v_0,v_1,v_2] & [v_2,v_3] & [v_1,v_3] & [v_1,v_2]
\end{array}
\]
For these input degrees, the following are the only two basis tensors on
which $u_2$ is nonzero. As before, simplices denote their dual basis cochains.
\begin{align*}
 u_2([v_0, v_1, v_2],[v_2, v_3],[v_0, v_3],[v_0, v_1] )=[v_0,v_1,v_2,v_3] \\ 
 u_2([v_0, v_1, v_2],[v_2, v_3],[v_1, v_3],[v_1, v_2] )=[v_0,v_1,v_2,v_3] 
\end{align*}
Both are sent to zero by the other five summands, so $\mu_2$ has the same
value on each. Writing either input tuple as $(a,b,c,d)$, we obtain
\[
 \mu_2(a,b,c,d)=[v_0,v_1,v_2,v_3].
\]
Diagrammatically, this equality reads as follows.
\begin{center}
\begin{tikzpicture}[x=0.9cm,y=0.9cm,baseline=(current bounding box.center),
                    line cap=round,line join=round]
 \begin{scope}[xshift=0cm]
  \coordinate (a) at (0,1.25); \coordinate (b) at (-0.78,0);
  \coordinate (c) at (0.18,-0.34); \coordinate (d) at (0.95,0.48);
  \draw[gray!32] (a)--(d) (c)--(d); \draw[gray!32,dashed] (b)--(d);
  \fill[gray!10] (a)--(b)--(c)--cycle;
  \draw[->,very thick] (a)--(b); \draw[->,very thick] (a)--(c);
  \draw[->,very thick] (b)--(c);
  \node[above] at (a) {$v_0$}; \node[left] at (b) {$v_1$};
  \node[below] at (c) {$v_2$};
 \end{scope}
 \node at (1.25,0.43) {$\otimes$};

 \begin{scope}[shift={(2.4,0)}]
  \coordinate (a) at (0,1.25); \coordinate (b) at (-0.78,0);
  \coordinate (c) at (0.18,-0.34); \coordinate (d) at (0.95,0.48);
  \draw[gray!32] (a)--(b)--(c) (a)--(c) (a)--(d);
  \draw[gray!32,dashed] (b)--(d);
  \draw[->,very thick] (c)--(d);
  \node[below] at (c) {$v_2$}; \node[right] at (d) {$v_3$};
 \end{scope}
 \node at (4.05,0.43) {$\otimes$};

 \begin{scope}[shift={(5.2,0)}]
  \coordinate (a) at (0,1.25); \coordinate (b) at (-0.78,0);
  \coordinate (c) at (0.18,-0.34); \coordinate (d) at (0.95,0.48);
  \draw[gray!32] (a)--(b)--(c) (a)--(c) (c)--(d);
  \draw[gray!32,dashed] (b)--(d);
  \draw[->,very thick] (a)--(d);
  \node[above] at (a) {$v_0$}; \node[right] at (d) {$v_3$};
 \end{scope}
 \node at (6.85,0.43) {$\otimes$};

 \begin{scope}[shift={(8.3,0)}]
  \coordinate (a) at (0,1.25); \coordinate (b) at (-0.78,0);
  \coordinate (c) at (0.18,-0.34); \coordinate (d) at (0.95,0.48);
  \draw[gray!32] (a)--(c)--(d) (a)--(d) (b)--(c);
  \draw[gray!32,dashed] (b)--(d);
  \draw[->,very thick] (a)--(b);
  \node[above] at (a) {$v_0$}; \node[left] at (b) {$v_1$};
 \end{scope}

 \node at (10.05,0.45) {$\xmapsto{\ \mu_2\ }$};
 \begin{scope}[shift={(12.05,0)}]
  \coordinate (a) at (0,1.25); \coordinate (b) at (-0.78,0);
  \coordinate (c) at (0.18,-0.34); \coordinate (d) at (0.95,0.48);
  \fill[gray!10] (a)--(b)--(c)--(d)--cycle;
  \draw[->,thick] (a)--(b); \draw[->,thick] (a)--(c);
  \draw[->,thick] (a)--(d); \draw[->,thick] (b)--(c);
  \draw[->,thick,dashed] (b)--(d); \draw[->,thick] (c)--(d);
  \node[above] at (a) {$v_0$}; \node[left] at (b) {$v_1$};
  \node[below] at (c) {$v_2$}; \node[right] at (d) {$v_3$};
 \end{scope}
\end{tikzpicture}
\end{center}

Which summand contributes depends on the inputs.  In total degree $6$, consider
\[
 \mu_2\colon(C^*(X)^{\otimes4})^6\to C^4(X)
\]
on the $4$-simplex
$X$ with vertices $v_0<\cdots<v_4$, and the cochains
\[
 a=[v_0,v_1,v_2],\quad b=[v_2,v_3],\quad c=[v_3,v_4],\quad
 d=[v_2,v_3,v_4]
\]
of degrees $(2,1,1,2)$.  Here $u_5=(1,2,4,2,3,4)$ has the two compatible
cuts
\begin{gather*}
 [v_0,v_1,v_2]\mid[v_2]\mid[v_2]\mid[v_2]\mid[v_2,v_3]\mid[v_3,v_4],\\
 {}[v_0,v_1,v_2]\mid[v_2]\mid[v_2,v_3]\mid[v_3]\mid[v_3,v_4]\mid[v_4].
\end{gather*}
The first is degenerate, since $F_2=[v_2,v_2]$, while the second has faces $F_1=[v_0,v_1,v_2]$, $F_2=[v_2,v_3]$, $F_3=[v_3,v_4]$ and
$F_4=[v_2,v_3,v_4]$, which are exactly the chosen simplices.  Similarly it is
\[
 \mu_2(a,b,c,d)=u_5(a,b,c,d)=[v_0,v_1,v_2,v_3,v_4].
\]
\end{example}

\begin{example}[The Klein four-group]
\label{ex:klein-four-operations}
Let $V_4=\langle \sigma,\tau\mid \sigma^2=\tau^2=1,\ \sigma\tau=\tau\sigma\rangle$ act on itself by left
multiplication.  Enumerating its elements as $1,\sigma,\tau,\sigma\tau$ embeds it into
$\Sigma_4$ as the regular representation,
\[
 \sigma\longmapsto(1\,2)(3\,4),
 \qquad
 \tau\longmapsto(1\,3)(2\,4),
 \qquad
 \sigma\tau\longmapsto(1\,4)(2\,3).
\]
Since $\Z[V_4]=\Z[C_2]\otimes\Z[C_2]$, the tensor product
$W=W(2)\otimes W(2)$ of two copies of the periodic resolution, with $\sigma$
acting on the first and $\tau$ on the second factor, is a free resolution of
$V_4$.  In degree $i$ it has the $i+1$ basis elements
$e_{(k,l)}=e_k\otimes e_l$ with $k+l=i$, that is $e_{i,l}$ in the notation of
\Cref{def:permutation-group-resolution}.  In low degrees the
differential is
\[
 de_{(1,0)}=(\sigma-1)e_{(0,0)},
 \qquad
 de_{(0,1)}=(\tau-1)e_{(0,0)},
 \qquad
 de_{(1,1)}=(\sigma-1)e_{(0,1)}-(\tau-1)e_{(1,0)}.
\]
Write $\mu_{(k,l)}=\mu(e_{(k,l)})$.  One possible choice of $\psi$ in
these degrees is
\begin{align*}
 \mu_{(0,0)}={}&(1,2,3,4),\\
 \mu_{(1,0)}={}&(1,2,1,4,3)+(1,2,3,4,3) = (132)\circ_1 (121) + (123) \circ_3 (121),\\
 \mu_{(0,1)}={}&(1,3,4,1,2)+(1,2,3,4,2) = ((121)\circ_1(12)) \circ_3 (12),\\
 \mu_{(1,1)}={}&(1,2,1,4,3,1)-(1,3,4,1,2,1)-(1,3,4,3,2,1)\\
 &-(1,2,3,4,3,2)+(1,2,4,3,2,1),
\end{align*}
and one checks with the rule for $d$ that
$d\mu_{(1,0)}=(\sigma-1)\mu_{(0,0)}$, $d\mu_{(0,1)}=(\tau-1)\mu_{(0,0)}$ and
$d\mu_{(1,1)}=(\sigma-1)\mu_{(0,1)}-(\tau-1)\mu_{(1,0)}$.  Over $\F_2$ the two
first-order operations are composites of cup and cup-$1$ products,
\begin{align*}
 \mu_{(1,0)}(a,b,c,d)&=(a\cup_1b)\cup d\cup c\\
 &\quad+a\cup b\cup(c\cup_1d),\\
 \mu_{(0,1)}(a,b,c,d)&=(a\cup b)\cup_1(c\cup d).
\end{align*}
The first is a homotopy for swapping the inputs within each of the pairs
$(1,2)$ and $(3,4)$. The second is a homotopy for swapping the two pairs.

Note that for $C_r$ the coherence data form
a line $\mu_0,\mu_1,\mu_2,\ldots$, whereas for $V_4$ they form a lattice
indexed by $(k,l)$.  
The operation $\mu_{(1,1)}$ is a homotopy between the two homotopies
obtained by applying the commuting involutions $\sigma$ and $\tau$
in either order.
\end{example}

\subsection{Lifting coefficients}
\label{sec:integral-lifts}

In this section, we formalise how to lift cocycles with
$\Z/\theta$-coefficients to $R$-coefficients. In
\Cref{ex:integral-bockstein}, a cocycle $x$ was lifted to an integral
cochain $\widetilde x$, and the failure of $\widetilde x$ to be a
cocycle was recorded by the integral cochain $\beta_\theta(x)$.
Steenrod puts this pair into a single map out of a universal
complex~\cite[\S2]{Steenrod1957}.

\begin{definition}\label{def:universal-complex}
The \emph{universal complex} $M(\theta)$ is the cochain complex freely
generated by $u$ in degree $0$ and $v$ in degree $1$, with
\[
 du=\theta v,
 \qquad dv=0.
\]
For a commutative ring $R$, write
\begin{equation}\label{eq:coefficient-extended-M}
 M(\theta)_R=M(\theta)\otimes_\Z R,
\end{equation}
omitting the subscript when the coefficient ring is clear.
Tensor products of complexes of $R$-modules are understood to be over
$R$.
\end{definition}

Over $\Z$, $M(\theta)$ is the cochain complex $\Z\xrightarrow{\theta}\Z$,
whose only cohomology is $\Z/\theta$ in degree one.  Its role is explained by
the hom complex of \Cref{sec:hom-complex}.  For $x \in Z^q(X,\Z/\theta)$ define a
map $f_x \in \operatorname{hom}^q(M(\theta),C^*(X;\Z))$ of degree $q$ by
\[
 f_x\colon M(\theta)\longrightarrow C^*(X;\Z),
 \qquad
 f_x(u)=\widetilde x,
 \quad
 f_x(v)=(-1)^q\beta_\theta(x).
\]
It is a cycle in the hom complex, since
\[
 (d_{\mathrm{hom}}f_x)(u)=d\widetilde x-(-1)^q\theta f_x(v)
 =d\widetilde x-\theta\beta_\theta(x)=0,
 \qquad
 (d_{\mathrm{hom}}f_x)(v)=(-1)^q d\beta_\theta(x)=0.
\]
Thus the lift together with its Bockstein defines a degree-$q$ chain map
out of $M(\theta)$, that is an element
\[
 f_x\in Z^q\bigl(\operatorname{hom}(M(\theta),C^*(X;\Z))\bigr).
\]
Conversely, the relation $df(u)=(-1)^q\theta f(v)$ says that for every such
cycle $f$ the cochain $f(u)$ reduces to a cocycle modulo $\theta$.

\begin{proposition}\label{prop:universal-complex-represents}
The assignment $f\mapsto f(u)\bmod\theta$ induces an isomorphism
\[
 H_{-q}\bigl(\operatorname{hom}(M(\theta),C^*(X;\Z))\bigr)
 \xrightarrow{\ \cong\ }H^q(X;\Z/\theta),
\]
whose inverse sends $[x]$ to $[f_x]$.
\end{proposition}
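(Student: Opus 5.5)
Since $M(\theta)$ is free on $u$ (degree $0$) and $v$ (degree $1$), a degree-$q$ map $f$ is the same as a pair of cochains $(f(u),f(v))\in C^q(X;\Z)\times C^{q+1}(X;\Z)$. The first step is to make the hom complex explicit in these terms. By \Cref{def:hom-complex}, $f$ is a cycle exactly when $d f(u)=(-1)^q\theta f(v)$ and $d f(v)=0$. Because $C^*(X;\Z)$ is torsion-free, the second condition follows from the first. So the cycles are precisely the integral cochains $a=f(u)$ with $da\equiv0\pmod\theta$, and $f(v)=(-1)^q da/\theta$ is then determined. Likewise, a map $h$ of degree $q-1$ is a pair $(c,e)=(h(u),h(v))\in C^{q-1}\times C^{q}$. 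Computing $d_{\mathrm{hom}}h$ on $u$ gives $dc\pm\theta e$, so the boundaries are exactly the maps with $f(u)\in B^q(X;\Z)+\theta C^q(X;\Z)$, with $f(v)$ again determined.

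In this description, $f\mapsto f(u)\bmod\theta$ sends cycles to $Z^q(X;\Z/\theta)$ and boundaries to $B^q(X;\Z/\theta)$, since $dc\pm\theta e$ reduces to $d\bar c$. This gives a well-defined homomorphism
\[
H_{-q}\bigl(\operatorname{hom}(M(\theta),C^*(X;\Z))\bigr)\to H^q(X;\Z/\theta).
\]

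For surjectivity, take a cocycle $x$. The map $f_x$ constructed above is a cycle with $f_x(u)=\widetilde x$, which reduces to $x$. So $[f_x]\mapsto[x]$, showing surjectivity and that the inverse is $[x]\mapsto[f_x]$, once injectivity is known.

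For injectivity, suppose $f(u)\bmod\theta=d\bar c$ for some $\bar c\in C^{q-1}(X;\Z/\theta)$. Lift $\bar c$ to an integral cochain $c$. Then $f(u)-dc=\theta e$ for an integral $e$, so $f(u)\in B^q+\theta C^q$. By the description of boundaries, $f$ is homotopic to $0$: one takes $h=(c,\pm e)$, and checks that the $v$-component matches automatically by torsion-freeness. The main obstacle is purely bookkeeping, namely getting the signs in $d_{\mathrm{hom}}h$ right so that the $v$-components agree. Since they are forced by the $u$-component and $\theta$ is a nonzerodivisor, any sign convention works once the $u$-component matches. The same argument also recovers the well-definedness of the Bockstein in \Cref{ex:integral-bockstein}.
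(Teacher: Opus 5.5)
Your proof is correct and rests on the same computations as the paper's: evaluating $d_{\mathrm{hom}}$ on the generators $u,v$, lifting a mod-$\theta$ cochain to an integral one, and using a homotopy of the form $h=(c,\pm e)$. The difference is only in organization. You describe the cycles and boundaries explicitly (the boundaries being the cycles with $f(u)\in B^q+\theta C^q$) and show the reduction map is a bijective homomorphism via its kernel, which makes well-definedness of $[x]\mapsto[f_x]$ automatic. The paper instead checks directly that $x\mapsto f_x$ descends and that $f\simeq f_x$ via $h(u)=0$, $h(v)=(-1)^q b$, leaving implicit the torsion-freeness that you invoke explicitly to match the $v$-components.
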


\begin{proof}
We prove that the assignment $x \mapsto f_x$ descends to cohomology. Let $y = x + dc$. As in \Cref{ex:integral-bockstein}
it follows that $\widetilde y - \widetilde x = d\widetilde c + \theta b$ and $\beta_{\theta}(y) - \beta_{\theta}(x) = db$ 
for some integral cochain $b$. Define the homotopy
$h\in\operatorname{hom}^{q-1}(M(\theta),C^*(X;\Z))$ by
$h(u)=\widetilde c$ and $h(v)=(-1)^q b$. Indeed,
\begin{align*}
 (d_{\mathrm{hom}}h)(u)
 &=d\widetilde c+\theta b=f_y(u)-f_x(u),\\
 (d_{\mathrm{hom}}h)(v)
 &=(-1)^q db=f_y(v)-f_x(v),
\end{align*}
so $f_x$ and $f_y$ are homotopic. Conversely, if $f$ and $f'$ are homotopic,
write $f-f'=d_{\operatorname{hom}}h$. Then
\[
f(u)-f'(u)=dh(u)+(-1)^q\theta h(v)\equiv dh(u)\pmod\theta,
\]
so $f(u)$ and $f'(u)$ are cohomologous $\mod \theta$. It remains to check that the assignments are inverses of each other.
By construction,
\[
 [f_x(u)\bmod\theta]=[\widetilde x\bmod\theta]=[x].
\]
For any cycle $f$, put $x=f(u)\bmod\theta$ and write
$f(u)-\widetilde x=\theta b$. Taking $h(u)=0$ and
$h(v)=(-1)^q b$ gives $f-f_x=d_{\mathrm{hom}}h$.
Hence the assignments are inverse.
\end{proof}

For an arbitrary coefficient ring $R$, composing
with the coefficient map $C^*(X;\Z)\to C^*(X;R)$ gives the $R$-lift
$f_x\in Z^q(\operatorname{hom}(M(\theta)_R,C^*(X;R)))$, and cohomologous
cocycles still give homologous, that is homotopic, maps.

For computations it is convenient to shift the degrees so that $f_x$ becomes
a degree-zero chain map.  Let $M(\theta)[-q]$ be generated by $u$ in degree $q$ and $v$ in degree $q+1$ with
$du=\theta v$.  A degree-$q$ chain map $f$ out of $M(\theta)$ corresponds to
the degree-zero chain map out of $M(\theta)[-q]$ which sends $u$ to $f(u)$ and
$v$ to $(-1)^qf(v)$.  For $f_x$ we obtain
\begin{equation}\label{eq:steenrod-cochain-representation}
 f_x\colon M(\theta)[-q]\longrightarrow C^*(X;\Z),
 \qquad
 f_x(u)=\widetilde{x},
 \quad
 f_x(v)=\beta_\theta(x).
\end{equation}
From now on we use this form.
The construction of $f_x$ above requires $x$ to be a cocycle. To extend our
formulas to arbitrary cochains, we choose universal expressions in which every
occurrence of $v$ carries a factor of $\theta$, and evaluate by
the substitutions
\[
 u\longmapsto\widetilde x,
 \qquad \theta v\longmapsto d\widetilde x.
\]
On cocycles this agrees with $f_x$, since
$\theta\beta_\theta(x)=d\widetilde x$.
Steenrod's construction of the Pontryagin square is the basic example of
this cancellation and will serve as the model for our constructions.

\begin{example}[The Pontryagin square]\label{ex:pontryagin-square}
Let $q$ be even and put $M=M(2)[-q]$.  Give $M^{\otimes2}$ the transposition
action of $C_2=\langle\tau\rangle$.
In the total cochain complex
\[
 W(2)\otimes_{\Z[C_2]}
 \bigl(M^{\otimes2}\otimes\Z/4\bigr),
 \qquad |e_i\otimes m|=|m|-i,
\]
where $W(2)$ is the resolution in \Cref{eq:w-two} and the differential is
\[
 d(e_i\otimes m)=de_i\otimes m+(-1)^i e_i\otimes dm,
\]
consider the degree-$2q$ element
\begin{equation}\label{eq:universal-pontryagin-cocycle}
 z_{\mathfrak P}
 =e_0\otimes(u\otimes u)+2e_1\otimes(u\otimes v).
\end{equation}
This is a cocycle.  Indeed, $dz_{\mathfrak P}=e_0\otimes\bigl(d(u\otimes u)
+(\tau-1)(2u\otimes v)\bigr)-2e_1\otimes d(u\otimes v)$, and since $q$ is even,
\begin{align*}
 d(u\otimes u)&=2(v\otimes u+u\otimes v),\\
 (\tau-1)(2u\otimes v)&=2(v\otimes u-u\otimes v),\\
 d(2u\otimes v)&=4v\otimes v.
\end{align*}
Thus the first two lines add to $4v\otimes u=0$, while the last line is
also zero modulo $4$.  This is Steenrod's universal cocycle for the
Pontryagin square~\cite[\S2]{Steenrod1957}.

Now let $x\in Z^q(X;\Z/2)$.  The representing chain map
$f_x\colon M\to C^*(X;\Z)$ induces an \emph{evaluation map} $\operatorname{ev}_x$,
given as the composite
\begin{equation}\label{eq:pontryagin-evaluation-triangle}
\begin{tikzcd}[row sep=3.2em,column sep=2.2em]
 W(2)\otimes_{\Z[C_2]}\bigl(M^{\otimes2}\otimes\Z/4\bigr)
   \arrow[r,"\mathrm{id}\otimes f_x^{\otimes2}"]
   \arrow[dr,"\operatorname{ev}_x"']
 & W(2)\otimes_{\Z[C_2]}\bigl(C^*(X;\Z)^{\otimes2}\otimes\Z/4\bigr)
   \arrow[d,"\mu^{(2)}"]\\
 & C^*(X;\Z/4)
\end{tikzcd}
\end{equation}
The horizontal map is defined on the balanced tensor product because
$f_x^{\otimes2}$ is $C_2$-equivariant, and the vertical map is Steenrod's
map in \Cref{eq:steenrods-map}, which sends $e_i\otimes a\otimes b$ to the
cup-$(2,i)$ product $\mu^{(2)}_i(a,b)$ of
\Cref{ex:cup-r-i-products}.  Both maps are chain maps, so
$\operatorname{ev}_x$ is a chain map.  Evaluating $z_{\mathfrak P}$ is a
term-by-term reading of~\Cref{eq:universal-pontryagin-cocycle}.  With
\[
 a=\widetilde x=\iota_2x,
 \qquad b=\beta_2(x)=\frac{da}{2},
\]
and $\mu_0^{(2)}=(1,2)=\cup$, $\mu_1^{(2)}=(1,2,1)=\pm\cup_1$, we get
\begin{align*}
 \operatorname{ev}_x(z_{\mathfrak P})
 &=\mu_0^{(2)}(a,a)+2\mu_1^{(2)}(a,b)\\
 &=a\cup a\pm 2a\cup_1 b\\
 &\equiv a\cup a+a\cup_1 da\pmod4.
\end{align*}
where we used that the sign does not matter for even numbers $\mod 4$ and that $2b = da$.
We therefore recover the cochain
formula in \Cref{eq:pontryagin-square-cochain-formula} for $\mathfrak P(x)$.

The machinery now does the rest.  Since $z_{\mathfrak P}$ is a cocycle and
$\operatorname{ev}_x$ is a chain map, $\operatorname{ev}_x(z_{\mathfrak P})$
is a cocycle modulo $4$, as we already checked by hand in
\Cref{ex:pontryagin-square-formula}.  If $x$ and $y$ are cohomologous,
then $f_x$ and $f_y$ are homotopic, hence so are
$\mathrm{id}\otimes f_x^{\otimes2}$ and $\mathrm{id}\otimes f_y^{\otimes2}$
on the balanced tensor product~\cite[\S1]{May1970}, and
$\operatorname{ev}_x(z_{\mathfrak P})$ and $\operatorname{ev}_y(z_{\mathfrak P})$
differ by a coboundary.  Replacing $z_{\mathfrak P}$ by a cohomologous
cocycle changes the output by a coboundary as well.  Hence the cochain
formula in \Cref{eq:pontryagin-square-cochain-formula} defines a cohomology
operation
\[
 \mathfrak P\colon H^q(X;\Z/2)\longrightarrow H^{2q}(X;\Z/4).
\]
It reduces modulo $2$ to $x\cup x$ and satisfies
\[
 \mathfrak P(x+y)
 =\mathfrak P(x)+\mathfrak P(y)+2(x\cup y),
\]
where the last term uses the inclusion $\Z/2\xrightarrow{\,2\,}\Z/4$.
These properties identify it as the usual Pontryagin square.
\end{example}

The Pontryagin square is the arity-two, untwisted instance of a general
construction. In \Cref{sec:constructing-twisted-cohomology-operations},
we also let $G$ act on the space $X$ and evaluate the $r$ inputs along
a $G$-orbit. We close this section by listing the data and steps of
the construction.
\begin{enumerate}
 \item the input degree $q$, the input coefficient ring $\Z/\theta$, and the output coefficient ring $R$. For the Pontryagin square, we had $\theta=2$ and $R=\Z/4$;
 \item a space $X$ with an action of a group $G$;
 \item a permutation representation $G\to\Sigma_r$;
 \item a free $\Z[G]$-resolution $W$ of $G$ with basis $\{e_{i,j}\}$ and boundary
 coefficients $\gamma^{i,j}_{j'}$ (\Cref{def:permutation-group-resolution});
 \item generalized cup-$i$ products, that is a factorisation
 $$W\to\cX(r)\to\operatorname{End}_{C^*(X;R)}(r)$$ through the surjection
 operad, computed by the recursion in \Cref{eq:general-comparison-recursion}; each basis element $e_{i,j}$ gives an $r$-input cochain operation $\mu^G_{i,j}$ lowering total degree by $i$;
 \item a universal twisted cocycle in
 $W\otimes_{\Z[G]}(R_\omega\otimes M(\theta)[-q]^{\otimes r})$, where the
 orientation character $\omega$ accounts for the action of $G$ on the
 fundamental class of $X$, $g_*[X] = \omega(g)[X]$;
 \item evaluation of that element through the generalized cup-$i$ products
 along a $G$-orbit, and a division-free representative which turns it into a
 cochain formula that descends to cohomology on cocycles.
\end{enumerate}

\newpage

\section{Twisted Cohomology Operations}
\label{sec:constructing-twisted-cohomology-operations}

Ordinary natural cohomology operations
\[
 \Phi\colon H^1(-;\Z/\theta)\longrightarrow H^m(-;R),
 \qquad m>1,
\]
vanish on classes that lift integrally. Indeed, such a class has the form
$a=f^*\bar u$ for a map $f\colon X\to S^1$, where $\bar u$ is the reduction
of the integral generator of $H^1(S^1;\Z)$. Naturality gives
\[
 \Phi_X(a)=f^*\Phi_{S^1}(\bar u)=0,
\]
since $H^m(S^1;R)=0$. In particular, these operations vanish on all
degree-one classes of tori. We instead evaluate the generalized
cup-$i$ products on the images of one cochain under a finite group
of symmetries of $X$.

We construct a threefold $C_3$ operation lifting coefficients from
$\Z/3$ to $\Z/9$, and fourfold $C_4$ and $V_4$ operations lifting
coefficients from $\Z/2$ to $\Z/8$. Their applications to qutrit
and qubit $T$ gates appear in
\Cref{sec:non-clifford-gates-via-twisted-cohomology-operations}.

Throughout the chapter we fix the following setup. $X$ is a closed connected oriented $n$-dimensional triangulated
 manifold with fundamental cycle $[X]$.
$G = \{g_1=1,g_2,\ldots,g_r\}$ is a finite abelian group acting on $X$ by simplicial
 automorphisms preserving the local vertex orders. This ensures strict
 naturality of the cochain operations. The assumption can be arranged by
 barycentric subdivision, ordering barycentres within each simplex by the
 dimensions of their corresponding faces.
 Let $\pi\colon G\to\Sigma_r$ be the regular representation, given by
 left multiplication on the elements of $G$:
 \[
  g\,g_k=g_{\pi(g)(k)}.
 \]
$W$ is a free resolution of $G\leq\Sigma_r$ with basis $\{e_{i,j}\}$
 as in \Cref{def:permutation-group-resolution}, and
 $\mu^G_{i,j}$ are the generalized cup-$i$ products of
 \Cref{def:generalized-cup-i}. Naturality reads as
 \[
 \mu_{i,j}^G(ga_1,\ldots, ga_r) = g \mu_{i,j}^G(a_1,\ldots,a_r).
 \]
 Furthermore, $\theta\geq2$ and $q\geq0$ are integers, $R$ is a quotient ring of
 $\Z$, for example $\Z/8$, and $M=M(\theta)[-q] \otimes_{\Z}R$ is the
 shifted universal complex of \Cref{sec:integral-lifts}.

\subsection{Universal twisted operations}
\label{sec:universal-twisted-operations}

\begin{definition}
Integration is evaluation on the fundamental cycle,
\[
 \int_X a=a([X]),\qquad a\in C^n(X;R),
\]
extended by zero in other degrees. Stokes' theorem gives $\int_X db=0$
for $b\in C^{n-1}(X;R)$, so integration descends to cohomology.
\end{definition}
The action on the fundamental
class is described by the \emph{orientation character}
\[
 \omega\colon G\longrightarrow\{\pm1\},
 \qquad
 g_*[X]=\omega(g)[X],
\]
or equivalently
\begin{equation}\label{eq:orientation-character-integration}
 \int_X g\cdot a=\omega(g)\int_Xa.
\end{equation}
For a commutative ring $R$, let $R_\omega$ be the free rank-one
$R$-module with $g\cdot1=\omega(g)1$.

\begin{remark}
More generally, the construction works with evaluation on any $n$-cycle
$c$ satisfying $g_*c=\omega(g)c$ as an equality of chains.
\end{remark}

The universal construction below is independent of $X$. Its classes
can be evaluated on any space satisfying the setup above whose
$G$-action has the prescribed orientation character $\omega$.

\begin{definition}[Universal twisted element]
\label{def:universal-twisted-element}
Let $G$ be a finite abelian group, $\pi\colon G\to\Sigma_r$ the regular
representation, $\omega\colon G\to\{\pm1\}$ a character and $R$ a
commutative ring, and fix $\theta$ and $q$.  The group $G$ acts on
$R_\omega\otimes_R M^{\otimes r}$ by
\begin{equation}\label{eq:graded-cyclic-tensor-action}
 g\,(1\otimes y_1\otimes\cdots\otimes y_r)
 =\omega(g)\,\epsilon\;
 1\otimes y_{\pi(g)(1)}\otimes\cdots\otimes y_{\pi(g)(r)},
\end{equation}
where $\epsilon=\pm1$ is the Koszul sign of the permutation of the graded
elements $y_1,\ldots,y_r$.  Define the \emph{universal twisted homology
group}
\begin{equation}\label{eq:universal-twisted-homology-group}
 H_n\!\left(
 G;R_\omega\otimes_R M^{\otimes r}
 \right)
 :=H^n\!\left(
 W\otimes_{\Z[G]}
 \bigl(R_\omega\otimes_R M^{\otimes r}\bigr)
 \right),
\end{equation}
where the balanced tensor product identifies $ge\otimes m$ with
$e\otimes gm$, and the grading and differential are
\[
 |e_{i,j}\otimes m|=|m|-i,
 \qquad
 d(e_{i,j}\otimes m)=de_{i,j}\otimes m+(-1)^i e_{i,j}\otimes dm.
\]
A \emph{universal twisted element} of type $(G,\pi,\omega,n,r)$ is a class $[z] \in H_n(G,R_{\omega}\otimes_R M^{\otimes r})$. 
\end{definition}

Since $W$ is free, $W\otimes_{\Z[G]}N=\bigoplus_{i,j}e_{i,j}\otimes N$ for
every $G$-module $N$, so every element can be written uniquely as
$z=\sum_{i,j}e_{i,j}\otimes m_{i,j}$. Since $M^{\otimes r}$ is bounded, only
finitely many resolution degrees contribute in total degree $n$. If $W$
is finitely generated in each degree, the group can therefore be computed
by finite linear algebra. By the
comparison theorem it does not depend on the choice of $W$ up to
isomorphism.  For the cyclic group with $\pi(\rho)=(1\,2\,\ldots\,r)$, the
generator acts by
\[
 \rho\,(1\otimes y_1\otimes\cdots\otimes y_r)
 =\omega(\rho)(-1)^{|y_1|(|y_2|+\cdots+|y_r|)}
 1\otimes y_2\otimes\cdots\otimes y_r\otimes y_1,
\]
which is the action appearing in \Cref{ex:pontryagin-square} for
$r=2$ and trivial $\omega$. We construct explicit cocycles in
\Cref{sec:folded-ccz,sec:universal-c4-cocycle,sec:universal-v4-cocycle}.

To evaluate these classes, define the $R$-linear map
\[
  E\colon W\otimes_{\Z[G]}\bigl(R_\omega\otimes_R C^*(X;R)^{\otimes r}\bigr) \to R
\]
by 
\begin{equation}\label{eq:twisted-multiplication}
  E\bigl(e_{i,j}\otimes(1\otimes a_1\otimes\cdots\otimes a_r)\bigr) = \int_X \mu^G_{i,j}(g_1a_1,g_2a_2,\ldots,g_ra_r)
\end{equation}
for cochains $a_1,\ldots,a_r\in C^*(X;R)$. The integral vanishes unless
the input has total degree $n$.
Before integration, the twisted cochain expression need not
respect the balancing relation. The following lemma shows
that its integral does. Write $D_G=g_1\otimes\cdots\otimes g_r$.

\begin{lemma}\label{lem:evaluation-balanced}
Let $R_{\omega} \otimes C^*(X;R)^{\otimes r}$ carry the action in \Cref{eq:graded-cyclic-tensor-action}
with $M$ replaced by $C^*(X;R)$. Then
\Cref{eq:twisted-multiplication} defines a map on the balanced
tensor product which vanishes on boundaries.
\end{lemma}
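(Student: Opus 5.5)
Two things must be checked: that $E$ respects the balancing relation $he\otimes m\sim e\otimes hm$, and that $E$ vanishes on boundaries $d(e_{i,j}\otimes m)$. Both reduce to the $G$-equivariance of $\mu$, the naturality $\mu^G_{i,j}(ga_1,\ldots,ga_r)=g\,\mu^G_{i,j}(a_1,\ldots,a_r)$, and the orientation character \eqref{eq:orientation-character-integration}. Since $W$ is free on the $e_{i,j}$, I first regard $E$ as defined on $W\otimes_R(R_\omega\otimes C^*(X;R)^{\otimes r})$ by $E(he_{i,j}\otimes m)=\int_X (h\mu^G_{i,j})(D_G m)$. Here $h$ acts on $\mu^G_{i,j}$ by permuting inputs as in the endomorphism operad, and $D_G m$ applies $g_k$ to the $k$-th factor. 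This is exactly $\int\mu(he_{i,j})(D_Gm)$, because $\mu$ is $G$-equivariant.

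\emph{Balancing.} I must show $\int_X(h\mu)(D_Gm)=\int_X\mu(D_G\,h m)$ for $\mu=\mu^G_{i,j}$, with $hm$ the twisted action \eqref{eq:graded-cyclic-tensor-action}. Take $m=1\otimes a_1\otimes\cdots\otimes a_r$. The left side is $\pm\int\mu(g_{\pi(h)(1)}a_{\pi(h)(1)},\ldots)$ up to the Koszul sign $\epsilon$, with the input slots permuted. The key identity is $g_{\pi(h)(k)}=hg_k$, which is the definition of the regular representation. Hence the $k$-th input equals $h\bigl(g_k a_{\pi(h)(k)}\bigr)$. Naturality then pulls $h$ outside $\mu$, and \eqref{eq:orientation-character-integration} turns the resulting $\int h(\cdot)$ into $\omega(h)\int(\cdot)$. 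What remains is $\omega(h)\epsilon\int\mu(g_1a_{\pi(h)(1)},\ldots,g_ra_{\pi(h)(r)})$, which is the right side. The delicate point is the direction of the permutation (is it $\pi(h)$ or $\pi(h)^{-1}$?) together with the Koszul signs from the endomorphism-operad action and from \eqref{eq:graded-cyclic-tensor-action}. I would first check this carefully for the generator of a cyclic group, using \eqref{eq:cyclic-action-on-endomorphisms}, and then argue that the two actions are defined by the same Koszul convention, so the signs agree. Since $G$ is abelian, the left/right distinction is harmless.

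\emph{Boundaries.} Since $D_G$ is a tensor product of cochain maps, it commutes with $d_{C^{\otimes r}}$ without extra signs, because each $g_k$ has degree $0$. Then
\[
E\bigl(d(e_{i,j}\otimes m)\bigr)=\int_X\mu(de_{i,j})(D_Gm)+(-1)^i\int_X\mu(e_{i,j})(d\,D_Gm).
\]
Because $\mu$ is a chain map, $\mu(de_{i,j})=d_{\mathrm{End}}\mu^G_{i,j}=d\circ\mu^G_{i,j}-(-1)^i\mu^G_{i,j}\circ d$. Substituting, the two $\mu\circ d$ terms cancel, which leaves $\int_X d\bigl(\mu^G_{i,j}(D_Gm)\bigr)=0$ by Stokes. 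One caveat is that $\mu(de_{i,j})$ involves group-ring coefficients $\gamma^{i,j}_{j'}$. I handle these through the first reformulation, so that the computation is performed before balancing and is consistent with it by the previous step. The main obstacle is therefore the sign and permutation bookkeeping in the balancing step; the boundary step is formal.
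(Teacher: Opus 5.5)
Your proposal is correct and follows essentially the same route as the paper: balancing comes from the regular-representation identity $g_{\pi(h)(k)}=hg_k$, naturality of $\mu^G_{i,j}$, and the orientation character. Vanishing on boundaries comes from $\mu$ being a chain map, $D_G$ commuting with $d$, and Stokes' theorem, and your preliminary definition of $E$ on the unbalanced tensor product makes explicit how the paper handles the group-ring coefficients in $\mu(de_{i,j})$.
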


\begin{proof}
Write $A=(a_1,\ldots,a_r)$. By equivariance of $\mu$, the identity
$gg_k=g_{\pi(g)(k)}$, naturality, and
\Cref{eq:orientation-character-integration}, we obtain
\begin{align*}
  E(ge_{i,j}\otimes A)
 &=\int_X\bigl(\pi(g)\mu^G_{i,j}\bigr)(g_1a_1,\ldots,g_ra_r)\\
 &=\epsilon\int_X\mu^G_{i,j}\bigl(g_{\pi(g)(1)}a_{\pi(g)(1)},\ldots,g_{\pi(g)(r)}a_{\pi(g)(r)}\bigr)\\
 &=\epsilon \int_Xg\,\mu^G_{i,j}\bigl(g_1a_{\pi(g)(1)},\ldots,g_ra_{\pi(g)(r)}\bigr)\\
 &=\omega(g)\epsilon E\bigl(e_{i,j}\otimes(a_{\pi(g)(1)},\ldots,a_{\pi(g)(r)})\bigr) \\
 &=E(e_{i,j}\otimes gA)
\end{align*}
So the orientation character produced by the integral is exactly canceled by the factor
$R_{\omega}$ and $E$ is balanced. It remains to show that boundaries are mapped to $0$.
We only have to evaluate on boundaries of $e_{i,j}\otimes A$ with degree $n-1$; all other cases are trivial as the 
integral maps them to zero.
Since each $g_k$ commutes with $d$, so does $D_G$. Since $\mu$ is a chain map,
\Cref{eq:endomorphism-differential} gives
\[
 d\bigl(\mu^G_{i,j}(D_GA)\bigr)
 =\mu(de_{i,j})(D_GA)+(-1)^i\mu^G_{i,j}(D_G\,dA).
\]
Evaluating on the boundary thus yields
\[
 E\bigl(d(e_{i,j}\otimes A)\bigr)
 =\int_X\mu(de_{i,j})(D_GA)+(-1)^i\int_X\mu^G_{i,j}(D_G\,dA)
 =\int_Xd\bigl(\mu^G_{i,j}(D_GA)\bigr)=0
\]
by Stokes' theorem. Thus $E$ vanishes on boundaries.

\end{proof}

\begin{definition}[Evaluation map]\label{def:evaluation-map}
Let $x\in Z^q(X;\Z/\theta)$ and let $f_x\colon M\to C^*(X;R)$ be its $R$-lift in
\Cref{eq:steenrod-cochain-representation}.  Its tensor power
$f_x^{\otimes r}$ is $G$-equivariant, and the \emph{evaluation map} is the
composite
\[
\begin{tikzcd}[column sep=3.4em]
 W\otimes_{\Z[G]}\bigl(R_\omega\otimes M^{\otimes r}\bigr)
   \arrow[r,"\mathrm{id}\otimes f_x^{\otimes r}"]
   \arrow[rr,bend right=18,"\operatorname{ev}_x"']
 & W\otimes_{\Z[G]}\bigl(R_\omega\otimes C^*(X;R)^{\otimes r}\bigr)
   \arrow[r,"E"]
 & R.
\end{tikzcd}
\]
Explicitly,
\begin{equation}\label{eq:twisted-cyclic-evaluation}
 \operatorname{ev}_x
 \bigl(e_{i,j}\otimes(1\otimes y_1\otimes\cdots\otimes y_r)\bigr)
 =\int_X\mu^G_{i,j}\bigl(
 g_1f_x(y_1),g_2f_x(y_2),\ldots,
 g_rf_x(y_r)\bigr).
\end{equation}
\end{definition}

\begin{proposition}\label{prop:twisted-cyclic-evaluation}
Let $z$ be a cocycle of total degree $n$ in
$W\otimes_{\Z[G]}\bigl(R_\omega\otimes M^{\otimes r}\bigr)$, and let
$x\in Z^q(X;\Z/\theta)$. Then the value $\operatorname{ev}_x(z)$ depends
only on $[z]$ and $[x]$. It therefore defines
\begin{equation}\label{eq:integrated-twisted-operation}
 \begin{aligned}
 H_n\!\left(
 G;R_\omega\otimes_R M^{\otimes r}
 \right)\times H^q(X;\Z/\theta)&\longrightarrow R,\\
 ([z],[x])&\longmapsto\operatorname{ev}_x(z).
 \end{aligned}
\end{equation}
\end{proposition}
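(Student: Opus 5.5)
The plan is to exploit that $\operatorname{ev}_x=E\circ(\mathrm{id}\otimes f_x^{\otimes r})$ is a composite of chain maps into $R$, where $R$ is regarded as a complex concentrated in degree $n$. Independence of the representative $z$ and of the cocycle $x$ within its class then reduce to two standard facts: a chain map kills boundaries, and homotopic chain maps agree on cycles.

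\emph{Step 1: well-definedness and chain-map property.} First check that $\mathrm{id}\otimes f_x^{\otimes r}$ is well defined on the balanced tensor product. This holds because $f_x^{\otimes r}$ commutes with the permutation action with Koszul signs, since $f_x$ has degree zero, and because $\mathrm{id}_{R_\omega}$ commutes with the character. It is a chain map because $f_x$ is one, by \Cref{eq:steenrod-cochain-representation}. By \Cref{lem:evaluation-balanced}, $E$ is defined on the balanced product and vanishes on boundaries. As $R$ has zero differential, $E$ is therefore a chain map.

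\emph{Step 2: independence of the representative $z$.} If $z'=z+dw$, then
\[
\operatorname{ev}_x(z')-\operatorname{ev}_x(z)=E\bigl((\mathrm{id}\otimes f_x^{\otimes r})(dw)\bigr)=E\bigl(d(\mathrm{id}\otimes f_x^{\otimes r})(w)\bigr)=0.
\]

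\emph{Step 3: independence of $x$ within its class.} This is the main step. Let $y=x+dc$. By \Cref{prop:universal-complex-represents} and the remark following it, $f_x-f_y=d_{\mathrm{hom}}h$ for some $h$ of degree $\pm1$ (after the shift). Use the telescoping homotopy
\[
H=\sum_{k}f_y^{\otimes(k-1)}\otimes h\otimes f_x^{\otimes(r-k)},
\]
which satisfies $d_{\mathrm{hom}}H=f_x^{\otimes r}-f_y^{\otimes r}$ with appropriate Koszul signs.

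The obstacle is that $H$ is not $G$-equivariant, so $\mathrm{id}\otimes H$ does not obviously descend to $W\otimes_{\Z[G]}(-)$. Two routes are available. The first is to invoke May's lemma \cite[\S1]{May1970}, already cited in \Cref{ex:pontryagin-square}: homotopic maps $f\simeq f'$ induce equivariantly homotopic maps $\mathrm{id}_W\otimes f^{\otimes r}\simeq \mathrm{id}_W\otimes f'^{\otimes r}$ on $W\otimes_{\Z[G]}(-)$ when $W$ is free. Concretely, such a homotopy is built degree by degree on the basis $e_{i,j}$, using freeness of $W$ and a contraction argument. The alternative is to build the homotopy on $W\otimes M^{\otimes r}$ using a diagonal/Alexander–Whitney-type construction on $W$. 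I would take the first route. The homotopy $K$ then satisfies
\[
\mathrm{id}\otimes f_x^{\otimes r}-\mathrm{id}\otimes f_y^{\otimes r}=dK+Kd.
\]
Applying $E$ to $z$ with $dz=0$ gives $\operatorname{ev}_x(z)-\operatorname{ev}_y(z)=E(dKz)=0$, again by \Cref{lem:evaluation-balanced}.

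\emph{Step 4: coefficients.} Finally, note that passing from $\Z$ to $R$ coefficients is harmless: the homotopy over $\Z$ tensored with $R$ is still a homotopy. This gives the pairing \Cref{eq:integrated-twisted-operation}.
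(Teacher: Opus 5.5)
Your proposal is correct and follows the paper's proof essentially verbatim. $E$ kills boundaries by \Cref{lem:evaluation-balanced} and $\mathrm{id}\otimes f_x^{\otimes r}$ is a chain map, which gives independence of $z$; independence of $x$ comes from \Cref{prop:universal-complex-represents} together with May's lemma, that $f_x\simeq f_y$ induces chain-homotopic maps on the balanced tensor product because $W$ is $\Z[G]$-free, so that $\operatorname{ev}_x(z)-\operatorname{ev}_y(z)=E(dKz)=0$.
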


\begin{proof}
By \Cref{lem:evaluation-balanced}, $E$ vanishes on boundaries.
Since $\mathrm{id}\otimes f_x^{\otimes r}$ is a chain map,
$\operatorname{ev}_x$ also vanishes on boundaries. Its value on $z$
therefore depends only on $[z]$. Finally, if $x$ and
$y$ are cohomologous, then $f_x$ and $f_y$ are homotopic by
\Cref{prop:universal-complex-represents}.  Since $W$ is
$\Z[G]$-free, the maps $\mathrm{id}\otimes f_x^{\otimes r}$ and
$\mathrm{id}\otimes f_y^{\otimes r}$ are then chain homotopic on the balanced
tensor product~\cite[\S1]{May1970}, say by $H$, and
$\operatorname{ev}_x(z)-\operatorname{ev}_y(z)=E(dHz+Hdz)=E(dHz)=0$ for a
cocycle $z$.
\end{proof}

For fixed $[z]$, we write
\[
 \Phi_{[z]}\colon H^q(X;\Z/\theta)\longrightarrow R,
 \qquad
 \Phi_{[z]}([x])=\operatorname{ev}_x(z)
\]
for the cohomology invariant defined by $[z]$.

\subsection{Lifts to cochain operations}
\label{sec:lifts-to-cochain-operations}

The operation $\Phi_{[z]}$ is defined on cohomology classes.
A physical circuit needs a formula on every cochain. We obtain
one by choosing a universal representative in which every
occurrence of $v$ is accompanied by a factor of $\theta$, so that
$\theta v$ can be replaced by $d\iota x$.

Throughout this subsection, assume $n=rq$ and that reduction
modulo $\theta$ defines a map $R\to\Z/\theta$, as for
$R=\Z/\theta^k$.

In total degree $rq$, the component $m_{i,j}$ of an element
$z=\sum e_{i,j}\otimes m_{i,j}$ has degree $rq+i$, so every tensor monomial in
$m_{i,j}$ contains exactly $i$ copies of $v$. Thus for a cochain lift we
require $m_{i,j}$ to be divisible by $\theta^i$. Let $\Theta$ multiply
the component of resolution degree $i$ by $\theta^i$.

\begin{definition}\label{def:cochain-divisible}
A cycle $z$ is \emph{cochain-divisible} if it admits a factorization
\[
 z=\Theta\bar z
   =\sum_{i,j}e_{i,j}\otimes\theta^i\bar m_{i,j}
 \in W\otimes_{\Z[G]}\bigl(R_\omega\otimes M^{\otimes r}\bigr).
\]
A universal class is \emph{liftable} if it has a cochain-divisible
representative. We retain the choice of
$\bar z=\sum_{i,j}e_{i,j}\otimes\bar m_{i,j}$ as part of the data;
$\bar z$ itself need not be a cycle.
We write
\[
 H_{n}^{\mathrm{coch}}\!\left(
 G;R_\omega\otimes_R M^{\otimes r}
 \right)
\]
for the submodule of liftable classes.
\end{definition}

For $a\in C^q(X;R)$, let $F_a$ be the graded $R$-linear map
on the universal complex obtained by substituting
\[
 u\longmapsto a,\qquad v\longmapsto da
\]
in every tensor factor, leaving the resolution and orientation
factors unchanged. It is not generally a chain map.

\begin{definition}\label{def:lifted-evaluation-map}
Given a cycle $z=\Theta\bar z$ and a choice of $\bar z$, define
\[
 \widehat\Phi_{\bar z}\colon C^q(X;R)\longrightarrow R,
 \qquad
 \widehat\Phi_{\bar z}(a)=E(F_a(\bar z)).
\]
For $x\in C^q(X;\Z/\theta)$, the physical phase is
$\widehat\Phi_{\bar z}(\iota x)$, where $\iota$ is the
coefficientwise $\{0,\ldots,\theta-1\}$-lift to $R$.
\end{definition}

The following theorem identifies its action on cocycles.

\begin{theorem}\label{thm:cochain-lift}
Let $z=\Theta\bar z$ be a cochain-divisible representative of $[z]$.
\begin{enumerate}
 \item If the reduction of $a\in C^q(X;R)$ modulo $\theta$ is a cocycle
 $x\in Z^q(X;\Z/\theta)$, then
 \[
  \widehat\Phi_{\bar z}(a)=\Phi_{[z]}([x]).
 \]
 In particular the value depends only on the class $[x]$, and any cochain
 reducing to $x$ may be used in place of $\iota x$.
 \item Consequently
 \[
  \widehat\Phi_{\bar z}(\iota(x+d\lambda))=\widehat\Phi_{\bar z}(\iota x)
 \]
 for every $x\in Z^q(X;\Z/\theta)$ and $\lambda\in C^{q-1}(X;\Z/\theta)$, and
 $x\mapsto\widehat\Phi_{\bar z}(\iota x)$ induces the cohomology operation
 $\Phi_{[z]}\colon H^q(X;\Z/\theta)\to R$.
\end{enumerate}
\end{theorem}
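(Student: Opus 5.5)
The strategy is to compare $F_a(\bar z)$ with $(\mathrm{id}\otimes f^{\otimes r})(z)$ for a suitable chain map $f$ representing $[x]$. Then part 1 follows from \Cref{prop:twisted-cyclic-evaluation}, and part 2 is a direct consequence of part 1.

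First I would build the chain map attached to an arbitrary lift $a$. Let $a\in C^q(X;R)$ reduce to a cocycle $x$. Then $da\equiv 0\pmod\theta$, so we may write $da=\theta b$. One subtlety is that $R$ may have torsion, for instance $R=\Z/\theta^k$, so $b$ is not unique. I would avoid this by working integrally first. Lift $a$ to an integral cochain $a'$ whose reduction to $R$ is $a$ and whose reduction mod $\theta$ is $x$ (this is possible since $R\to\Z/\theta$ is compatible with $\Z\to R$). Set $b'=da'/\theta\in C^{q+1}(X;\Z)$, which is a cocycle. Then $f\colon u\mapsto a'$, $v\mapsto b'$ is a chain map $M(\theta)[-q]\to C^*(X;\Z)$, exactly as for $f_x$. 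Its reduction to $R$ satisfies $f(u)=a$ and $\theta f(v)=da$.

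Next I would show that $f$ is homotopic to $f_x$. The proof of \Cref{prop:universal-complex-represents} already does this: $f(u)\bmod\theta=x$, so by the last step of that proof $f-f_x=d_{\mathrm{hom}}h$ with $h(u)=0$ and $h(v)=\pm(a'-\widetilde x)/\theta$. Hence by \Cref{prop:twisted-cyclic-evaluation}, more precisely by its proof, which only uses that the two chain maps are homotopic, we get $E((\mathrm{id}\otimes f^{\otimes r})(z))=\operatorname{ev}_x(z)=\Phi_{[z]}([x])$.

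The key step is the identity $(\mathrm{id}\otimes f^{\otimes r})(\Theta\bar z)=F_a(\bar z)$. I would check it monomial by monomial. In total degree $rq$, every tensor monomial of $\bar m_{i,j}$ contains exactly $i$ copies of $v$, as noted before \Cref{def:cochain-divisible}. Applying $f^{\otimes r}$ to $\theta^i$ times such a monomial distributes one factor $\theta$ onto each $v$, giving $\theta f(v)=da$, while each $u$ goes to $a$. This is exactly the substitution defining $F_a$. The Koszul signs agree because both maps are degree-zero substitutions in the same positions, and the resolution and orientation factors are untouched. Linearity of $E$ then gives $\widehat\Phi_{\bar z}(a)=E(F_a\bar z)=\Phi_{[z]}([x])$. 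This proves part 1, including the independence of the choice of $a$.

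The main obstacle I expect is in this step. Since $\bar z$ need not be a cycle, and $F_a$ need not be defined on the balanced tensor product, one must check that $F_a(\bar z)$ is evaluated consistently with the balancing. I would handle this by fixing the unique decomposition $\bar z=\sum e_{i,j}\otimes\bar m_{i,j}$ over the free basis. I would then note that $E$ is already shown to be well defined on the balanced product by \Cref{lem:evaluation-balanced}, and that $F_a$ commutes with the $G$-action on tensor factors, so the identity holds on any representative.

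For part 2, take $x$ a cocycle and $y=x+d\lambda$. Then $\iota x$ and $\iota y$ are lifts of the cohomologous cocycles $x$ and $y$. Part 1 gives values $\Phi_{[z]}([x])=\Phi_{[z]}([y])$, so the two values agree and the map induces $\Phi_{[z]}$ on cohomology.
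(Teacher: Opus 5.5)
Your proof is correct and follows the paper's argument. You build a chain map $f\colon M(\theta)[-q]\to C^*(X;R)$ with $f(u)=a$ and $\theta f(v)=da$ that is homotopic to $f_x$, check that $F_a(\bar z)=(\mathrm{id}\otimes f^{\otimes r})(\Theta\bar z)$ monomial by monomial, and conclude by homotopy invariance of $E\circ(\mathrm{id}\otimes f^{\otimes r})$ on cycles. The only difference is cosmetic: you lift $a$ integrally so that $f(v)$ is a cocycle despite the torsion in $R$, whereas the paper gets this directly in $R$ by writing $a-\widetilde x=\theta c$ and setting $f_a(v)=\beta_\theta(x)+dc$.
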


\begin{proof}
We only prove (1), since (2) follows by applying (1) to $\iota(x+d\lambda)$ and
$\iota x$, which reduce to the cohomologous cocycles $x+d\lambda$ and $x$.

Let $f_x$ be the $R$-lift in
\Cref{eq:steenrod-cochain-representation} of $x$.  Both $a$ and $\widetilde x$
reduce to $x$ modulo $\theta$, so we can write $a-\widetilde x=\theta c$ with
$c\in C^q(X;R)$.  Define
\[
 f_a\colon M(\theta)[-q]\longrightarrow C^*(X;R),
 \qquad
 f_a(u)=a,
 \quad
 f_a(v)=\beta_\theta(x)+dc.
\]
Then $h: u \mapsto 0, v\mapsto c$ is a homotopy between $f_x$ and $f_a$, so $f_a=f_x+d_{\hom}h$.  
In particular 
\[
 \theta f_a(v)=d\widetilde x+\theta\,dc=da.
\]

Write $z=\Theta\bar z=\sum_{i,j}e_{i,j}\otimes\theta^i\bar m_{i,j}$. Every monomial in $\bar m_{i,j}$ contains exactly
$i$ copies of $v$ and $F_a$ replaces each of them by $da = \theta f_a(v)$ and each $u$ by $a=f_a(u)$. 
Hence
\[
 F_a(\bar z)
 =\sum_{i,j}e_{i,j}\otimes\theta^i f_a^{\otimes r}(\bar m_{i,j})
 =(\mathrm{id}\otimes f_a^{\otimes r})(z).
\]
Thus $\widehat\Phi_{\bar z}(a)=E\bigl((\mathrm{id}\otimes f_a^{\otimes r})(z)\bigr)$
is the evaluation of $z$ computed with the lift $f_a$ in place of $f_x$.  By
the same argument as in the proof of
\Cref{prop:twisted-cyclic-evaluation}, the maps
$\operatorname{id}\otimes f_a^{\otimes r}$ and
$\operatorname{id}\otimes f_x^{\otimes r}$ are homotopic, and consequently
\[
\widehat \Phi_{\bar z}(a) = E((\operatorname{id}\otimes f_x^{\otimes r})(z)) = \Phi_{[z]}([x])
\]
which proves the claim.
\end{proof}

Cohomologous universal cocycles define the same cohomology
operation and hence the same logical gate. Their cochain lifts
may nevertheless differ on noncocycles, so the resulting physical
gates can act differently outside the code space. The circuit
size and depth also depend on the chosen representative.

\subsection{Evaluation on liftable classes}
\label{sec:action-on-liftable-classes}

The correction terms disappear when the input class lifts to the target
coefficient group.  What remains is the twisted cup power along the orbit,
which is the form in which the operations are evaluated in
\Cref{sec:non-clifford-gates-via-twisted-cohomology-operations}.

\begin{proposition}\label{cor:liftable-classes}
Let $R=\Z/\theta^k$. Suppose $[z]$ has type $(G,\pi,\omega,rq,r)$
and a cochain-divisible representative with
$e_0$-component $e_0\otimes c_0\,u^{\otimes r}$.
If $[x]\in H^q(X;\Z/\theta)$ has a cocycle lift $c\in Z^q(X;R)$, then
\begin{equation}\label{eq:liftable-action}
 \Phi_{[z]}([x])
 =c_0\int_X(g_1c)\cup(g_2c)\cup\cdots\cup(g_rc).
\end{equation}
\end{proposition}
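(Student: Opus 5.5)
We would prove \eqref{eq:liftable-action} by applying \Cref{thm:cochain-lift}(1) with the cocycle lift $c$ in place of $\iota x$. Its reduction modulo $\theta$ is a cocycle representing $[x]$; if needed we first replace $x$ by that reduction, which is allowed because $\Phi_{[z]}$ depends only on the class. The theorem then gives
\[
 \Phi_{[z]}([x])=\widehat\Phi_{\bar z}(c)=E\bigl(F_c(\bar z)\bigr),
\]
where $z=\Theta\bar z=\sum_{i,j}e_{i,j}\otimes\theta^i\bar m_{i,j}$ is the given cochain-divisible representative.

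Next we expand $F_c(\bar z)$ term by term. The map $F_c$ substitutes $u\mapsto c$ and $v\mapsto dc$. Since $c$ is a cocycle over $R$, we have $dc=0$. In total degree $rq$, every monomial of $\bar m_{i,j}$ contains exactly $i$ copies of $v$, as noted before \Cref{def:cochain-divisible}. Hence every summand with $i\ge1$ is sent to zero, and only the $e_0$-component survives. For that component $\bar m_{0,0}=m_{0,0}=c_0\,u^{\otimes r}$, since $\Theta$ multiplies the degree-$0$ part by $\theta^0=1$. We therefore get $F_c(\bar z)=e_0\otimes c_0\,(1\otimes c^{\otimes r})$.

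Finally we apply \Cref{eq:twisted-multiplication}. Because $\mu^G_{0,0}=\psi(e_0)=(1,2,\dots,r)$ is the $r$-fold cup product, this gives
\[
 E\bigl(e_0\otimes c_0(1\otimes c^{\otimes r})\bigr)
 =c_0\int_X (g_1c)\cup\cdots\cup(g_rc).
\]

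The main point needing care is that $e_0\otimes m$ is an element of the \emph{balanced} tensor product. The $e_0$-component is well defined only after choosing the basis decomposition $z=\sum e_{i,j}\otimes m_{i,j}$, which is unique since $W$ is free. It must also be checked that no $G$-translates of $e_0$ hide further contributions. This holds because we write everything in the basis $e_{i,j}$, and $E$ is well defined on the balanced product by \Cref{lem:evaluation-balanced}. A second minor point is that the Koszul and interval-cut signs of $(1,\dots,r)$ agree with the cup product under the paper's convention, which holds by normalization. The statement has no other obstacle: it is essentially the observation that all correction terms carry at least one $v$, and $v$ is sent to $dc=0$.
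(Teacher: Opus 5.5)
Your proposal is correct and follows the paper's proof. Like the paper, you use \Cref{thm:cochain-lift} to evaluate the cochain formula on the cocycle lift $c$. Every term of positive resolution degree carries at least one $v\mapsto dc=0$, so only the $e_0$-term survives, and it is the twisted $r$-fold cup product. Your remarks on the uniqueness of the basis decomposition in the balanced tensor product and on the normalization $\psi(e_0)=(1,2,\ldots,r)$ only make explicit what the paper leaves implicit.
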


\begin{proof}
By \Cref{thm:cochain-lift}, we may evaluate the
cochain formula using the cocycle lift $c$. Every term
of positive resolution degree contains a coboundary
of an input and therefore vanishes, since $dc=0$.
The remaining term is the stated cup product.
\end{proof}

This formula determines the logical action. The physical circuit still
uses the full cochain formula, including its correction terms.

\subsection{The universal $C_3$ cocycle over $\Z/9$}
\label{sec:folded-ccz}

Take $n=3$, $R=\Z/9$, $M=M(3)[-1]\otimes R$, and
$G=C_3=\langle\rho\rangle$, with trivial orientation character.
We use the cyclic resolution $W(3)$ and the tensor action in
\Cref{eq:graded-cyclic-tensor-action}. We abbreviate tensor monomials as
words, for example $uvu=u\otimes v\otimes u$. The degree-three element
\begin{equation}\label{eq:universal-c3-cocycle}
 z=e_0\otimes uuu+3e_1\otimes(2uvu+2vuu)
\end{equation}
is a cochain-divisible cocycle. Indeed, writing
$z=e_0\otimes m_0+e_1\otimes m_1$, direct substitution gives
$dm_0+(\rho-1)m_1=0$ and $dm_1=0$.

Choose $\bar z=e_0\otimes uuu+2e_1\otimes(uvu+vuu)$,
so that $z=\Theta\bar z$. For the coefficientwise
$\{0,1,2\}$-lift $a=\iota x$ of $x\in C^1(X;\Z/3)$,
evaluation gives
\begin{equation}\label{eq:twisted-c3-cochain-formula}
\begin{aligned}
 \widehat\Phi_{\bar z}(a)={}&
 \int_X\bigl[
 \mu_0(a,\rho a,\rho^2a)
 +2\mu_1(a,d(\rho a),\rho^2a)\\
 &\qquad
 +2\mu_1(da,\rho a,\rho^2a)
 \bigr]\quad\in\Z/9.
\end{aligned}
\end{equation}
Here $\mu_0=(1,2,3)$ and $\mu_1=(1,2,3,1)$; in the
conventions of \Cref{sec:cohomology-operations},
$\mu_0(a,b,c)=a\cup b\cup c$ and
$\mu_1(a,b,c)=-a\cup_1(b\cup c)$.
By \Cref{thm:cochain-lift}, this induces
$\Phi_{[z]}\colon H^1(X;\Z/3)\to\Z/9$.
Its application to the qutrit $T$ gate appears in
\Cref{sec:qutrits-and-the-three-torus}.

\subsection{The universal \texorpdfstring{$C_4$}{C4} cocycle}
\label{sec:universal-c4-cocycle}

Take $n=4$, $R=\Z/8$, $M=M(2)[-1]\otimes R$, and
$G=C_4=\langle\rho\rangle$, with $\omega(\rho)=-1$.
We use the cyclic resolution $W(4)$ and the tensor action in
\Cref{eq:graded-cyclic-tensor-action}.

\begin{proposition}\label{prop:universal-c4-cocycle}
The following degree-four element is a cochain-divisible cocycle:
\begin{equation}\label{eq:universal-c4-cocycle}
 z=e_0\otimes uuuu
   +2e_1\otimes(uvuu+2uuvu+3uuuv)
   +4e_2\otimes uuvv.
\end{equation}
\end{proposition}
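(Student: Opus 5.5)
The plan is a direct verification in the total complex $W(4)\otimes_{\Z[C_4]}\bigl(R_\omega\otimes M^{\otimes4}\bigr)$ with $R=\Z/8$, $M=M(2)[-1]\otimes R$. Cochain-divisibility needs no work. The $e_1$-component carries a factor $2=\theta^1$ and the $e_2$-component a factor $4=\theta^2$, so $z=\Theta\bar z$ with
\[
 \bar z=e_0\otimes uuuu+e_1\otimes(uvuu+2uuvu+3uuuv)+e_2\otimes uuvv.
\]
As a consistency check, each monomial in the $e_i$-component contains exactly $i$ copies of $v$, so all terms have total degree $4$ ($|u|=1$, $|v|=2$, $|e_i\otimes m|=|m|-i$).

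For the cocycle condition, write $z=e_0\otimes m_0+e_1\otimes m_1+e_2\otimes m_2$. I will use $de_1=Te_0$, $de_2=Ne_1$ from \Cref{eq:cyclic-resolution-differential}, the sign $d(e_i\otimes m)=de_i\otimes m+(-1)^ie_i\otimes dm$, and the balancing $ge\otimes m=e\otimes gm$. Then $dz=0$ splits into three identities in $R_\omega\otimes M^{\otimes4}$:
\[
 dm_0+(\rho-1)m_1=0,\qquad -dm_1+Nm_2=0,\qquad dm_2=0.
\]
There is no $e_3$-term, since $m_3=0$. The needed rules are $du=2v$, $dv=0$, the Koszul signs in $d$ on $M^{\otimes4}$, and the generator action from \Cref{eq:graded-cyclic-tensor-action}:
\[
 \rho(y_1y_2y_3y_4)=-(-1)^{|y_1|(|y_2|+|y_3|+|y_4|)}\,y_2y_3y_4y_1,
\]
where the leading minus sign is $\omega(\rho)=-1$.

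The third identity is immediate: $dm_2=4\,d(uuvv)$ and $d(uuvv)$ has all coefficients even, so $dm_2$ is divisible by $8$. For the first identity I would compute
\[
 d(uuuu)=2(vuuu-uvuu+uuvu-uuuv).
\]
For words with a single $v$, moving the front $u$ past the remaining factors (total degree $4$) gives no Koszul sign, so $\rho$ acts by $-1$ times cyclic rotation. Hence
\[
 \rho m_1=-2(vuuu+2uvuu+3uuvu).
\]
Adding everything, each of the four words $vuuu,uvuu,uuvu,uuuv$ then has coefficient in $\{0,\pm8\}$, so the first identity holds modulo $8$. The coefficients $1,2,3$ in $m_1$ are exactly what make these coefficients cancel or telescope to multiples of $8$.

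The main obstacle is the middle identity $dm_1=Nm_2$. Both sides are $4$ times combinations of the six words containing two $u$'s and two $v$'s. Since $4\cdot(\text{odd})\equiv4$ and $4\cdot(\text{even})\equiv0 \pmod 8$, only the parity of each coefficient matters, and most signs become irrelevant. I would expand $dm_1=2\,d(uvuu+2uuvu+3uuuv)$, noting that $d$ on the middle term $2uuvu$ contributes multiples of $8$ and drops out. I would then compute the orbit $N(4uuvv)=4(uuvv+\rho\,uuvv+\rho^2uuvv+\rho^3uuvv)$ up to sign, which is $4(uuvv+uvvu+vvuu+vuuv)$ modulo $8$. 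Finally I would compare the parities word by word: the words $uvuv$ and $vuvu$ should appear with even coefficient in $dm_1$, and the four rotations of $uuvv$ with odd coefficient. The care needed is only in tracking which terms acquire a factor of $2$ from $du=2v$, so I would carry out this bookkeeping explicitly rather than rely on the sign conventions.
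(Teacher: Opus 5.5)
Your proposal is correct and is essentially the paper's proof: the paper also reduces $dz=0$ to the three component identities $dm_0+(\rho-1)m_1=0$, $dm_1=Nm_2$, $dm_2=0$, verifies them by direct substitution, and reads off cochain-divisibility from the displayed factors $2^i$. Your explicit computations are right. The first identity leaves coefficients $0,-8,-8,-8$ on $vuuu,uvuu,uuvu,uuuv$. Both sides of the middle identity reduce to $4(uuvv+uvvu+vvuu+vuuv)$ modulo $8$.
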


\begin{proof}
Writing $z=\sum_i e_i\otimes m_i$ and
$N=1+\rho+\rho^2+\rho^3$, the cocycle condition is
\begin{equation}\label{eq:universal-c4-cocycle-identities}
 dm_0+(\rho-1)m_1=0,\qquad
 dm_1=Nm_2,\qquad dm_2=0.
\end{equation}
These identities follow by direct substitution.
The displayed factors $2^i$ give the required divisibility.
\end{proof}

Choose $\bar z$ by removing the displayed factor $2^i$ from each component
of resolution degree $i$. For the coefficientwise
$0/1$-lift $a=\iota x$ of $x\in C^1(X;\Z/2)$,
evaluation gives
\begin{equation}\label{eq:twisted-pontryagin-cochain-formula}
\begin{aligned}
 \widehat\Phi_{\bar z}(a)={}&
 \int_X\bigl[
 \mu_0(a,\rho a,\rho^2a,\rho^3a)
 +\mu_1(a,d(\rho a),\rho^2a,\rho^3a)\\
 &\qquad
 +2\mu_1(a,\rho a,d(\rho^2a),\rho^3a)
 +3\mu_1(a,\rho a,\rho^2a,d(\rho^3a))\\
 &\qquad
 +\mu_2(a,\rho a,d(\rho^2a),d(\rho^3a))
 \bigr]\quad\in\Z/8.
\end{aligned}
\end{equation}
Here $\mu_i=\mu_i^{(4)}$ are the operations of
\Cref{ex:arity-four-operations}.
By \Cref{thm:cochain-lift}, this induces
$\mathfrak P^{(4)}_\rho:=\Phi_{[z]}\colon
H^1(X;\Z/2)\to\Z/8$.
For a cocycle lift $c\in Z^1(X;\Z/8)$ of $x$, \Cref{cor:liftable-classes}
gives
\[
 \mathfrak P^{(4)}_\rho([x])
 =\int_X c\cup\rho c\cup\rho^2c\cup\rho^3c.
\]
It gives the four-torus $T$ gate of \Cref{sec:the-four-torus}.

\subsection{The universal \texorpdfstring{$V_4$}{V4} cocycle}
\label{sec:universal-v4-cocycle}

Keep $n=4$, $R=\Z/8$ and $M=M(2)[-1]\otimes R$, but take
$G=V_4=\langle\sigma,\tau\rangle$ with trivial orientation
character. Order its elements as $(1,\sigma,\tau,\sigma\tau)$
and use the resolution $W=W(2)\otimes W(2)$ and operations
of \Cref{ex:klein-four-operations}.
The tensor action is \Cref{eq:graded-cyclic-tensor-action}.

\begin{proposition}\label{prop:universal-v4-cocycle}
The following degree-four element is a cochain-divisible cocycle:
\begin{equation}\label{eq:universal-v4-cocycle}
\begin{aligned}
 z={}&e_{(0,0)}\otimes uuuu
 +2\bigl(
 2e_{(1,0)}\otimes vuuu
 +e_{(0,1)}\otimes(uvuu+uuvu)
 \bigr)\\
 &+4\bigl(
 e_{(1,1)}\otimes vuvu
 +e_{(0,2)}\otimes vvuu
 \bigr).
\end{aligned}
\end{equation}
\end{proposition}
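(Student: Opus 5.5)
Divisibility is immediate from the displayed factors: the $e_{(0,0)}$ component carries $2^0$, the components at resolution degree one ($e_{(1,0)},e_{(0,1)}$) carry a factor $2$, and those at degree two ($e_{(1,1)},e_{(0,2)}$) carry $4=2^2$. This matches \Cref{def:cochain-divisible}, and removing these factors gives $\bar z$. The substantive part is the cocycle condition in $W\otimes_{\Z[V_4]}(R\otimes M^{\otimes 4})$ with $R=\Z/8$. I would write $z=\sum_{(k,l)}e_{(k,l)}\otimes m_{(k,l)}$ and expand $dz$ using $d(e\otimes m)=de\otimes m+(-1)^{k+l}e\otimes dm$. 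I would then collect coefficients of each basis element, moving all group-ring coefficients onto the $M^{\otimes4}$ side via the balancing relation.

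For this I first need the differential of $W(2)\otimes W(2)$ in degree two, in addition to the low-degree formulas recorded in \Cref{ex:klein-four-operations}. Using the Koszul rule on $e_k\otimes e_l$ and the periodic resolution, I get $de_{(2,0)}=(1+\sigma)e_{(1,0)}$, $de_{(0,2)}=(1+\tau)e_{(0,1)}$, and $de_{(1,1)}=(\sigma-1)e_{(0,1)}-(\tau-1)e_{(1,0)}$. The cocycle condition then splits into three families of identities. At $e_{(0,0)}$ we need $d(uuuu)+(\sigma-1)m_{(1,0)}+(\tau-1)m_{(0,1)}=0$. At $e_{(1,0)}$ and $e_{(0,1)}$ we need $-dm_{(1,0)}-(\tau-1)m_{(1,1)}=0$ and $-dm_{(0,1)}+(\sigma-1)m_{(1,1)}+(1+\tau)m_{(0,2)}=0$. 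At the degree-two generators we need $dm_{(1,1)}=0$ and $dm_{(0,2)}=0$, with $m_{(2,0)}=0$.

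Each identity is verified by direct substitution. Here $du=2v$ and $dv=0$, and the generators act through the regular representation, with $\sigma=(12)(34)$ and $\tau=(13)(24)$. The Koszul signs use $|u|=1$ and $|v|=2$, so that $v$ commutes freely while transposing two $u$'s costs a sign; the character $\omega$ is trivial. For example, $\sigma(vuuu)=uvuu$ with a sign $(-1)$ from the swap in the third and fourth slots.

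The degree-two identities hold trivially modulo $8$: each of $dm_{(1,1)}$ and $dm_{(0,2)}$ produces a factor $4\cdot 2=8$. The degree-zero identity is a bookkeeping check. First, $d(uuuu)=2(vuuu-uvuu+uuvu-uuuv)$. Second, $(\sigma-1)(4vuuu)$ contributes $4(\pm uvuu-vuuu)$. Third, $(\tau-1)\,2(uvuu+uuvu)$ contributes $2(\pm uuuv\pm vuuu-uvuu-uuvu)$. I would match coefficients of the four words $vuuu,uvuu,uuvu,uuuv$ modulo $8$.

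I expect the main obstacle to be the degree-one identity at $e_{(0,1)}$, and more generally keeping the signs consistent. Three conventions interact here: the Koszul signs of the permutation action, the sign $(-1)^i$ in the total differential, and the sign in $de_{(1,1)}$. A single convention error can turn a vanishing combination into $4\cdot(\text{word})\neq0$ modulo $8$. I would first fix conventions by rechecking the $C_2$ Pontryagin cocycle of \Cref{ex:pontryagin-square} in the same notation. I would then tabulate the action of $\sigma$ and $\tau$ on all words with one or two $v$'s and do the $e_{(0,1)}$ and $e_{(1,0)}$ checks term by term, reducing coefficients modulo $8$.
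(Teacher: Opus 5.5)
Your proposal is correct and follows the paper's proof, which just states that substituting into the tensor-product differential gives $dz=0$ and that the displayed factors give divisibility by $2^{k+l}$; the five component identities you write down are the right ones, and they all hold modulo $8$. One correction to your expectations about signs: the $e_{(0,1)}$ identity does not depend on sign conventions, because each word there occurs exactly twice with coefficient $\pm4$ and so cancels modulo $8$. The sign-sensitive check is the one at $e_{(0,0)}$, where the $\pm2$ terms of $d(uuuu)$ cancel only because $\tau(uvuu)=+uuuv$ and $\tau(uuvu)=+vuuu$.
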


\begin{proof}
Substitution into the tensor-product differential gives $dz=0$.
The displayed factors give divisibility by $2^{k+l}$
in resolution degree $k+l$.
\end{proof}

A compatible choice for the remaining operation is
\[
 \mu_{(0,2)}
 =\bigl((1,2,1,2)\circ_1(1,2)\bigr)\circ_3(1,2).
\]
Its boundary is $d\mu_{(0,2)}=(1+\tau)\mu_{(0,1)}$, by the binary
resolution identity in \Cref{eq:w-two}.

Choose $\bar z$ by removing the displayed factor $2^{k+l}$ from each
component of resolution degree $k+l$. For the coefficientwise
$0/1$-lift $a=\iota x$ of $x\in C^1(X;\Z/2)$,
evaluation gives
\begin{equation}\label{eq:twisted-v4-cochain-formula}
\begin{aligned}
 \widehat\Phi_{\bar z}(a)={}&
 \int_X\bigl[
 \mu_{(0,0)}(a,\sigma a,\tau a,\sigma\tau a)
 +2\mu_{(1,0)}(da,\sigma a,\tau a,\sigma\tau a)\\
 &\qquad
 +\mu_{(0,1)}(a,d(\sigma a),\tau a,\sigma\tau a)
 +\mu_{(0,1)}(a,\sigma a,d(\tau a),\sigma\tau a)\\
 &\qquad
 +\mu_{(1,1)}(da,\sigma a,d(\tau a),\sigma\tau a)
 +\mu_{(0,2)}(da,d(\sigma a),\tau a,\sigma\tau a)
 \bigr]\quad\in\Z/8.
\end{aligned}
\end{equation}
By \Cref{thm:cochain-lift}, this induces
$\mathfrak P^{(4)}_{\sigma,\tau}:=\Phi_{[z]}\colon
H^1(X;\Z/2)\to\Z/8$.
For a cocycle lift $c\in Z^1(X;\Z/8)$ of $x$, \Cref{cor:liftable-classes}
gives
\[
 \mathfrak P^{(4)}_{\sigma,\tau}([x])
 =\int_X c\cup\sigma c\cup\tau c\cup\sigma\tau c.
\]
We compare the two fourfold operations on surface squares in
\Cref{sec:squares-of-surfaces}.

\section{Addressability by cochain maps}
\label{sec:selecting-cohomology-operations}

In this Section, we show that precomposing a cohomology operation with a suitable cochain map can be used to select out which logical coordinates we would like the corresponding gate to act on.

\subsection{Precomposing an operation}
\label{sec:precomposing-operations}
\label{sec:projections-general-complexes}

Let $C,D$ be based cochain complexes over a field, and let
$\Phi\colon D^q\to R$ induce a cohomology invariant
$\overline\Phi\colon H^q(D)\to R$.

\begin{lemma}[Precomposition]\label{lem:precomposition-by-cochain-maps}
For a cochain map $F\colon C\to D$, the formula $\Phi\circ F$ induces
\begin{equation}\label{eq:precomposition-invariant}
 [a]\longmapsto\overline\Phi([Fa]).
\end{equation}
The same statement holds for several inputs, with a cochain map in each
input, and for maps which commute with $d$ while shifting degree.

Suppose each output coordinate of $F$ uses a bounded number of input
coordinates, and each input occurs in a bounded number of outputs.
If $\Phi$ is a sum of bounded-size phase terms with each coordinate in
boundedly many terms, then $\Phi\circ F$ defines a bounded-depth circuit
with unrestricted connectivity.
\end{lemma}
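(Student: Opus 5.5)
The plan is to check the two defining properties of "descends to cohomology" from \Cref{sec:logical-gates-from-invariants} directly, and then count the locality of the composed formula. First, take a cocycle $a\in Z^q(C)$. Since $dFa=Fda=0$, the image $Fa$ is a cocycle in $D^q$. Therefore $\Phi(Fa)=\overline\Phi([Fa])$ holds by the hypothesis that $\Phi$ induces $\overline\Phi$.

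Next take a coboundary $b=d\lambda$ with $\lambda\in C^{q-1}$. By linearity, $F(a+b)=Fa+dF\lambda$. Because $Fa$ is a cocycle and $dF\lambda\in B^q(D)$, descent of $\Phi$ gives $\Phi(Fa+dF\lambda)=\Phi(Fa)$. Hence $\Phi\circ F$ descends to cohomology with induced map $[a]\mapsto\overline\Phi([Fa])$. This is well defined because $F$ induces a map $H^q(C)\to H^q(D)$.

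For several inputs we run the same argument one slot at a time. Let $\Phi$ be a function of $(d_1,\ldots,d_r)$ that is invariant under adding coboundaries to cocycle inputs. Then $(a_1,\ldots,a_r)\mapsto\Phi(F_1a_1,\ldots,F_ra_r)$ inherits the same invariance. For a map $F$ with $dF=\pm Fd$ that shifts degree by $s$, the computation is unchanged: $F$ still sends cocycles to cocycles and coboundaries to coboundaries, since $F d\lambda=\pm dF\lambda$.

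For the circuit statement, write $\Phi=\sum_\alpha\Phi_\alpha$, where each $\Phi_\alpha$ depends on at most $w$ coordinates of $D^q$ and each coordinate lies in at most $\nu$ terms. Let each output coordinate of $F$ read at most $k$ input coordinates, and let each input coordinate feed at most $\ell$ outputs. Then $\Phi_\alpha\circ F$ depends on at most $wk$ coordinates of $C^q$. A given input coordinate feeds at most $\ell$ output coordinates, each of which lies in at most $\nu$ terms, so it appears in at most $\ell\nu$ terms. The locality remark in \Cref{sec:circuits-and-logical-action} then schedules the terms in at most $wk(\ell\nu-1)+1$ layers. Expanding each term into controlled rotations by \Cref{lem:phase-polynomial} adds only bounded overhead.

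No step here is a real obstacle. The only point needing care is that $F$ is linear over the field, which is what makes $F(a+d\lambda)=Fa+dF\lambda$ valid. Nonlinearity is allowed only in $\Phi$, and the coefficient lift happens inside $\Phi$, after $F$ has been applied.
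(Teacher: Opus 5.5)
Your proof is correct and follows the same route as the paper. The paper gives no separate proof: it treats the descent statement as immediate from cochain maps preserving cocycles and coboundaries, and it obtains the depth bound from the earlier locality remark, noting that bounded weight of $F$ in both directions preserves the interaction-size and participation bounds. Your layer count $wk(\ell\nu-1)+1$ is exactly that remark applied to the composed terms.
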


The circuit hypothesis says that $F$ has bounded weight in both
directions. Fixed finite sums and compositions preserve these bounds.
If $H^q(C)=V\oplus W$ and $H^q(F)$ vanishes on $W$, the gate depends
only on $V$ and acts trivially on the logical qubits belonging to $W$.
We note that the map $F$ need not be invertible or commute with the symmetries defining
$\Phi$.
If $\Phi=\widetilde\Phi\circ\iota$ uses a coefficient lift, precomposition
means $\widetilde\Phi(\iota(Fa))$, with sums taken in the input field
before lifting.

\begin{remark}
The target $D$ may be a relative cochain complex or a subcomplex imposing
coupled boundary conditions. No manifold or product structure on $C$ is
required. Over a field, every projection on cohomology lifts to a cochain
projection by splitting cycles and boundaries, but this does not ensure
bounded weight.
\end{remark}

\subsection{Example: circle folds}
\label{sec:circle-fold-projections}

Circle folds give concrete bounded-weight cochain maps with which to
precompose our operations. Choose a reflection-invariant cellulation of
$S^1$, and let $s:S^1\to S^1$ fix the upper semicircle $I_+$ and reflect
the lower semicircle onto it. Since $s^2=s$, the pullback $s^*$ and its complement $1-s^*$
are complementary cochain projections over any coefficient ring $R$.
On cohomology, $s^*$ retains $H^0(S^1;R)$ and kills $H^1(S^1;R)$,
while $1-s^*$ does the reverse.

Now, consider $X\times S^1$ and extend the map $s$ by the identity on $X$. Using the K\"unneth formula, an arbitrary class in
$H^q(X\times S^1;R)$ can be decomposed as $v+wu$, where $v\in H^q(X;R)$,
$w\in H^{q-1}(X;R)$, and $u$ generates $H^1(S^1;R)$. Since $s^*(u)=0$, we have
\begin{equation}\label{eq:circle-kunneth-projections}
 s^*(v+wu)=v,\text{ and }(1-s^*)(v+wu)=wu.
\end{equation}
Thus, precomposition selects either K\"unneth summand as the input to an
operation. Both maps have bounded weight in both directions, since each
cell has at most two preimages under the fold. Hence
\Cref{lem:precomposition-by-cochain-maps} preserves the bounded-depth
circuit property.

Moreover, $1-s^*$ takes values in the relative cochain complex
\[
 C^*(X\times S^1,X\times I_+;R),
\]
since its output vanishes on $X\times I_+$. We will use this observation
in the relative constructions of \Cref{sec:relative-t-three-torus}.

\paragraph{Choosing a direction on a torus $T^d$.} For $T^d=(S^1)^d$, let $s_i$ fold the $i$-th circle. Let $\gamma_i\in H^1(T^d;\Z)$ be the generating class dual to that circle, and write $\bar\gamma_i$ for its reduction modulo two. Then
\begin{equation}\label{eq:torus-direction-projection}
 (1+s_i^*)\left(\sum_{j=1}^d\epsilon_j\bar\gamma_j\right)=\epsilon_i\bar\gamma_i.
\end{equation}
Thus a single fold selects only the corresponding direction.

For degree-one applications it is often useful to send every selected direction to the same target class. Fold all circles except the $i$-th, then permute coordinates to obtain $F_i$ with $[F_i a]=\epsilon_i\bar\gamma_1$. The sum $\sum_{i\in J}F_i$ then selects the logical parity $\sum_{i\in J}\epsilon_i$.

To apply a relative operation on a cube to one torus direction, let $A_i\subset T^d$ be the subset with $i$-th coordinate in $I_+$, let $Q=[0,L]^d$, and let $\partial_i Q$ be its two faces $x_i=0,L$. The quotient $p\colon Q\to T^d$ is a map of pairs $(Q,\partial_i Q)\to(T^d,A_i)$. Hence
\begin{equation}\label{eq:torus-relative-cube}
 H^1(T^d;\F_2)\xrightarrow{\ 1+s_i^*\ }H^1(T^d,A_i;\F_2)
 \xrightarrow{\ p^*\ }H^1(Q,\partial_i Q;\F_2)\cong\F_2
\end{equation}
sends $\sum_j\epsilon_j\bar\gamma_j$ to $\epsilon_i$, since a coordinate segment between the two faces projects to the $i$-th coordinate circle. Identifying the pairs $(Q,\partial_i Q)$ by coordinate permutations again allows sums selecting logical parities.

\section{Intra-block gates from cup products}
\label{sec:non-clifford-fold-transversal-gates}

In~\cite{CupsAndGates}, the copy--cup construction starts with a
multilinear cohomology invariant
\[
 \int x_1\cup\cdots\cup x_r.
\]
It defines a diagonal circuit on \(r\) code blocks by
\[
 \ket{x_1,\ldots,x_r}
 \longmapsto
 (-1)^{\int x_1\cup\cdots\cup x_r}
 \ket{x_1,\ldots,x_r}.
\]
Because the integral descends to cohomology in every argument, the logical
action of this circuit depends only on the logical cohomology classes
represented by the \(x_i\), and is a product of logical $C^{r-1}Z$ gates in
the chosen logical bases. In this section, we illustrate how automorphisms,
or more generally cochain endomorphisms, can be used to produce versions of the copy--cup gate with
logical action on a single block.

Suppose that we have cochain maps \(F_i\) from a single code complex
to the \(r\) inputs of such a copy--cup gate. Substituting
\(x_i=F_i(x)\) gives the twisted single-block operation
\[
 \ket{x}
 \longmapsto
 (-1)^{\int F_1(x)\cup\cdots\cup F_r(x)}
 \ket{x}.
\]
On cocycles, this phase again depends only on the cohomology class of \(x\).
Under the bounded-weight and incidence conditions of
\Cref{lem:precomposition-by-cochain-maps}, the resulting physical circuit
has bounded depth.

We now discuss topological and non-topological examples using ordinary
cup products. \Cref{sec:non-clifford-gates-via-twisted-cohomology-operations}
uses the coefficient-lifting operations of
\Cref{sec:constructing-twisted-cohomology-operations}, while
\Cref{sec:addressable-relative-cup-gates} uses relative boundary conditions.

\subsection{$CCZ$ on the three-torus}
\label{sec:folded-three-torus}
The simplest example is the three-torus $T^3=(S^1)^3$. Let $\sigma$
cyclically permute its three circle factors. We denote the corresponding
generators by
$\gamma_1,\gamma_2,\gamma_3\in H^1(T^3;\F_2)$ and write
\[
 a=x_1\gamma_1+x_2\gamma_2+x_3\gamma_3.
\]
Then
\begin{equation}\label{eq:folded-ccz-logical-action}
 \int_{T^3}a\cup\sigma(a)\cup\sigma^2(a)
 =x_1+x_2+x_3+x_1x_2x_3.
\end{equation}
Here $\cup$ is the topological cup product and $\int$ denotes evaluation on
the fundamental class. The resulting single-block logical gate is
\[
 \overline Z_1\overline Z_2\overline Z_3\,
 \overline{\op{CCZ}}_{123}.
\]
Thus, up to Pauli gates, cyclic permutation folds the three-copy cup gate
into a single $CCZ$ gate within one three-dimensional toric-code block.
Other automorphisms give addressable Clifford gates. In
\Cref{sec:the-four-torus}, the same folding principle is combined with
twisted cohomology operations to obtain addressable $T$ gates.

\subsection{$CCZ$ gate on the Bolza surface times a circle}
\label{sec:bolza-surface}
\label{sec:circle-control-example}

First let $S$ be any closed connected oriented surface with a cellular
symmetry $g$. The invariant
\begin{equation}\label{eq:surface-twisted-cup-invariant}
 \int_S v\cup g(v)\in\F_2,\qquad v\in H^1(S;\F_2),
\end{equation}
gives a Clifford gate with phase $(-1)^{\int_S v\cup g(v)}$, a variant
of the fold-transversal construction in~\cite{foldtransversalgates}.
Its quadratic terms give $CZ$ gates and its linear terms give $Z$ gates.

Passing to $S\times S^1$, while still encoding in $H^1$, adds a circle
logical qubit. Extend $g$ trivially over the circle, let $s$ be the fold
of \Cref{sec:circle-fold-projections}, and write
$a=v+\lambda u$, where $u$ is the circle generator and $\lambda\in\F_2$.
Then
\begin{equation}\label{eq:surface-circle-control}
 \int_{S\times S^1}(1+s^*)(a)\cup s^*(a)\cup g\bigl(s^*(a)\bigr)
 =\lambda\int_S v\cup g(v).
\end{equation}
The circle qubit therefore controls the surface gate, so $CZ$ terms become
$CCZ$ terms and $Z$ terms become $CZ$ terms. The cochain formula descends
by \Cref{lem:precomposition-by-cochain-maps}, and its logical action
uses only the intersection pairing on $H^1(S;\F_2)$ and the action of $g$.

\paragraph{The Bolza surface.}
Let $\Sigma$ be the Bolza surface, a closed hyperbolic surface of genus two.
Its group of orientation-preserving automorphisms is isomorphic to
$\operatorname{GL}(2,\F_3)$.
Choose integral cohomology classes
$\widetilde\gamma_1,\ldots,\widetilde\gamma_4$ represented by the
Poincar\'e-dual curves in \Cref{fig:bolza-reflections}, and let
$\gamma_1,\ldots,\gamma_4$ denote their reductions modulo two. Their
intersection pairing is
\[
 \int_\Sigma\gamma_i\cup\gamma_j
 =
 \begin{cases}
  0,&i=j,\\
  1,&i\neq j.
 \end{cases}
\]

\begin{figure}
 \centering
 \begin{tikzpicture}[scale=0.12]
\draw [fill=none,draw=black] (266.4mm,266.4mm)
-- (426.8mm,266.4mm)
;
\draw [fill=none,draw=black] (266.4mm,266.4mm)
-- (379.8mm,379.8mm)
;
\draw [fill=none,draw=black] (266.4mm,266.4mm)
-- (379.8mm,153.0mm)
;
\draw [fill=none,draw=black] (266.4mm,266.4mm)
-- (266.4mm,426.8mm)
;
\draw [fill=none,draw=black] (266.4mm,266.4mm)
-- (266.4mm,106.0mm)
;
\draw [fill=none,draw=black] (266.4mm,266.4mm)
-- (153.0mm,379.8mm)
;
\draw [fill=none,draw=black] (266.4mm,266.4mm)
-- (153.0mm,153.0mm)
;
\draw [fill=none,draw=black] (266.4mm,266.4mm)
-- (106.0mm,266.4mm)
;

\draw [fill=none,draw=black] (426.8mm,266.4mm)
.. controls ++(0.0mm,14.9mm) and ++(-5.7mm,-13.8mm) .. ++(8.6mm,43.4mm)
.. controls ++(5.7mm,13.8mm) and ++(-10.5mm,-10.5mm) .. ++(24.6mm,36.8mm)
;
\draw [fill=none,draw=black] (460.0mm,186.2mm)
.. controls ++(-10.5mm,10.5mm) and ++(5.7mm,-13.8mm) .. ++(-24.6mm,36.8mm)
.. controls ++(-5.7mm,13.8mm) and ++(0.0mm,-14.9mm) .. ++(-8.6mm,43.4mm)
;
\draw [fill=none,draw=black] (379.8mm,379.8mm)
.. controls ++(-10.5mm,10.5mm) and ++(5.7mm,-13.8mm) .. ++(-24.6mm,36.8mm)
.. controls ++(-5.7mm,13.8mm) and ++(0.0mm,-14.9mm) .. ++(-8.6mm,43.4mm)
;
\draw [fill=none,draw=black] (460.0mm,346.6mm)
.. controls ++(-14.9mm,0.0mm) and ++(13.8mm,-5.7mm) .. ++(-43.4mm,8.6mm)
.. controls ++(-13.8mm,5.7mm) and ++(10.5mm,-10.5mm) .. ++(-36.8mm,24.6mm)
;

\draw [fill=none,draw=black] (346.6mm,72.8mm)
.. controls ++(0.0mm,14.9mm) and ++(-5.7mm,-13.8mm) .. ++(8.6mm,43.4mm)
.. controls ++(5.7mm,13.8mm) and ++(-10.5mm,-10.5mm) .. ++(24.6mm,36.8mm)
;
\draw [fill=none,draw=black] (379.8mm,153.0mm)
.. controls ++(10.5mm,10.5mm) and ++(-13.8mm,-5.7mm) .. ++(36.8mm,24.6mm)
.. controls ++(13.8mm,5.7mm) and ++(-14.9mm,0.0mm) .. ++(43.4mm,8.6mm)
;
\draw [fill=none,draw=black] (346.6mm,460.0mm)
.. controls ++(-10.5mm,-10.5mm) and ++(13.8mm,5.7mm) .. ++(-36.8mm,-24.6mm)
.. controls ++(-13.8mm,-5.7mm) and ++(14.9mm,0.0mm) .. ++(-43.4mm,-8.6mm)
;
\draw [fill=none,draw=black] (266.4mm,426.8mm)
.. controls ++(-14.9mm,0.0mm) and ++(13.8mm,-5.7mm) .. ++(-43.4mm,8.6mm)
.. controls ++(-13.8mm,5.7mm) and ++(10.5mm,-10.5mm) .. ++(-36.8mm,24.6mm)
;
\draw [fill=none,draw=black] (266.4mm,106.0mm)
.. controls ++(14.9mm,0.0mm) and ++(-13.8mm,5.7mm) .. ++(43.4mm,-8.6mm)
.. controls ++(13.8mm,-5.7mm) and ++(-10.5mm,10.5mm) .. ++(36.8mm,-24.6mm)
;
\draw [fill=none,draw=black] (186.2mm,72.8mm)
.. controls ++(10.5mm,10.5mm) and ++(-13.8mm,-5.7mm) .. ++(36.8mm,24.6mm)
.. controls ++(13.8mm,5.7mm) and ++(-14.9mm,0.0mm) .. ++(43.4mm,8.6mm)
;
\draw [fill=none,draw=black] (153.0mm,379.8mm)
.. controls ++(-10.5mm,-10.5mm) and ++(13.8mm,5.7mm) .. ++(-36.8mm,-24.6mm)
.. controls ++(-13.8mm,-5.7mm) and ++(14.9mm,0.0mm) .. ++(-43.4mm,-8.6mm)
;
\draw [fill=none,draw=black] (186.2mm,460.0mm)
.. controls ++(0.0mm,-14.9mm) and ++(5.7mm,13.8mm) .. ++(-8.6mm,-43.4mm)
.. controls ++(-5.7mm,-13.8mm) and ++(10.5mm,10.5mm) .. ++(-24.6mm,-36.8mm)
;
\draw [fill=none,draw=black] (72.8mm,186.2mm)
.. controls ++(14.9mm,0.0mm) and ++(-13.8mm,5.7mm) .. ++(43.4mm,-8.6mm)
.. controls ++(13.8mm,-5.7mm) and ++(-10.5mm,10.5mm) .. ++(36.8mm,-24.6mm)
;
\draw [fill=none,draw=black] (153.0mm,153.0mm)
.. controls ++(10.5mm,-10.5mm) and ++(-5.7mm,13.8mm) .. ++(24.6mm,-36.8mm)
.. controls ++(5.7mm,-13.8mm) and ++(0.0mm,14.9mm) .. ++(8.6mm,-43.4mm)
;
\draw [fill=none,draw=black] (72.8mm,346.6mm)
.. controls (83.3mm,336.1mm) and (91.7mm,323.6mm) .. (97.4mm,309.8mm)
.. controls (103.1mm,296.0mm) and (106.0mm,281.3mm) .. (106.0mm,266.4mm)
;
\draw [fill=none,draw=black] (106.0mm,266.4mm)
.. controls ++(0.0mm,-14.9mm) and ++(5.7mm,13.8mm) .. ++(-8.6mm,-43.4mm)
.. controls ++(-5.7mm,-13.8mm) and ++(10.5mm,10.5mm) .. ++(-24.6mm,-36.8mm)
;

\node[right] at (426.8mm, 266.4mm) {4};
\node[above right] at (379.8mm, 379.8mm) {3};
\node[below right] at (379.8mm, 153.0mm) {1};
\node[above] at (266.4mm, 426.8mm) {2};
\node[below] at (266.4mm, 106.0mm) {2};
\node[above left] at (153.0mm, 379.8mm) {1};
\node[below left] at (153.0mm, 153.0mm) {3};
\node[left] at (106.0mm, 266.4mm) {4};

\draw[blue,thick,->] (153.0mm, 379.8mm) -- node[pos=0.2,below left] {$\tilde{\gamma}_1$} (379.8mm, 153.0mm);
\draw[blue,thick,->] (266.4mm, 426.8mm) -- node[pos=0.2,right] {$\tilde{\gamma}_2$} (266.4mm, 106.0mm);
\draw[red,thick,->] (379.8mm, 379.8mm) -- node[pos=0.2,below right] {$\tilde{\gamma}_3$} (153.0mm, 153.0mm);
\draw[red,thick,<-] (106.0mm, 266.4mm) -- node[pos=0.8,below] {$\tilde{\gamma}_4$} (426.8mm, 266.4mm);
\end{tikzpicture}
 \caption{An octagonal fundamental domain for the Bolza surface. The solid
 oriented curves represent
 $\widetilde\gamma_1,\ldots,\widetilde\gamma_4$. Opposite sides carrying
 the same label are identified.}
 \label{fig:bolza-reflections}
\end{figure}

We first take $g=\sigma$, which acts on the integral classes by
\[
 \widetilde\gamma_1\longmapsto\widetilde\gamma_2,
 \quad
 \widetilde\gamma_2\longmapsto\widetilde\gamma_3,
 \quad
 \widetilde\gamma_3\longmapsto\widetilde\gamma_4,
 \quad
 \widetilde\gamma_4\longmapsto-\widetilde\gamma_1.
\]
Writing
\[
 v=x_1\gamma_1+x_2\gamma_2+x_3\gamma_3+x_4\gamma_4,
\]
the intersection pairing gives
\[
 \int_\Sigma v\cup g(v)
 =
 (x_1+x_3)(x_2+x_4)
 +(x_1+x_3)
 +(x_2+x_4).
\]

Using the circle fold above, for
$a=v+\lambda u\in H^1(\Sigma\times S^1;\F_2)$, we obtain
\begin{align*}
 &\int_{\Sigma\times S^1}
 (1+s^*)(a)\cup s^*(a)\cup g\bigl(s^*(a)\bigr)\\
 &\hspace{2cm}=
 \lambda\bigl((x_1+x_3)(x_2+x_4)
 +(x_1+x_3)+(x_2+x_4)\bigr).
\end{align*}
Using $x_1+x_3$ and $x_2+x_4$ as two surface logical coordinates, the
resulting single-block logical gate is
\begin{equation}\label{eq:bolza-circle-logical-gate}
 \overline{\op{CCZ}}_{\lambda,x_1+x_3,x_2+x_4}
 \overline{\op{CZ}}_{\lambda,x_1+x_3}
 \overline{\op{CZ}}_{\lambda,x_2+x_4}.
\end{equation}
Here the subscripts specify logical coordinates, with $\lambda$ the circle bit.
In particular, its non-Clifford part is a single logical $CCZ$ gate.

The second choice, $g=U$, acts by
\[
 \widetilde\gamma_1\longmapsto
 \widetilde\gamma_2+\widetilde\gamma_3,
 \quad
 \widetilde\gamma_2\longmapsto
 -\widetilde\gamma_1-\widetilde\gamma_2+\widetilde\gamma_4,
 \quad
 \widetilde\gamma_3\longmapsto
 -\widetilde\gamma_3-\widetilde\gamma_4,
 \quad
 \widetilde\gamma_4\longmapsto\widetilde\gamma_3.
\]
Up to $CZ$ factors, it gives
\[
 \overline{\op{CCZ}}_{\lambda,x_1+x_2,x_2+x_3}
 \overline{\op{CCZ}}_{\lambda,x_1+x_2+x_3,x_4}.
\]

\subsection{Beyond topological qLDPC codes: $CCZ$ gates in group algebra codes}
\label{sec:folded-nontopological-codes}
We now give two non-topological qLDPC examples. In~\cite{CupsAndGates}, cup
products were defined on oriented classical code complexes and extended to
their tensor and balanced products.

Our first examples arise as balanced products of group algebra codes over a
finite group $\mathcal{G}$. Two-fold balanced products include bivariate bicycle
codes~\cite{LinPryadko2024,EberhardtSteffan2024}, while the general $n$-fold
construction, including the three-fold complex, was introduced
in~\cite[Appendix~B]{EberhardtSteffan2024}. For
$\mathcal{G}=C_\ell\times C_m\times C_n$, these three-fold codes are now commonly
called trivariate tricycle codes. In~\cite{CupsAndGates}, such balanced
products were equipped with cup products and copy--cup gates; related
constructions were subsequently studied
in~\cite{TiewBreuckmann2026,JacobMcLauchlanBrowne2025,MenonMagicTricycles2026}.
We illustrate how to turn such a gate into an intra-block gate. Let
\[
 R=
 \F_2[x,y,z]/(x^3-1,y^3-1,z^4-1)
 \cong
 \F_2[C_3\times C_3\times C_4].
\]
The magic-tricycle code with parameters $[[108,15,(12,6)]]$ introduced
in~\cite{MenonMagicTricycles2026} is defined by the three elements
\begin{align*}
 a&=y+y^2z+xyz^3+x^2y^2z^2,\\
 b&=z^2+xy+xy^2z+x^2z^3,\\
 c&=yz^3+y^2z+x^2+x^2y^2z^2
\end{align*}
of $R$. Splitting the monomials in these polynomials into incoming and
outgoing terms according to
\begin{align*}
 (a_{\rm in},a_{\rm out})
   &=(y+xyz^3,\ y^2z+x^2y^2z^2),\\
 (b_{\rm in},b_{\rm out})
   &=(z^2+xy,\ xy^2z+x^2z^3),\\
 (c_{\rm in},c_{\rm out})
   &=(yz^3+y^2z,\ x^2+x^2y^2z^2)
\end{align*}
orients the three constituent complexes and thereby equips their balanced
product with a cup product as in~\cite{CupsAndGates}.
A search over bounded-weight $R$-linear endomorphisms of the cochain
complex produces endomorphisms with the following weight-six degree-one components
\[
 F=
 \left(\begin{smallmatrix}
  0&0&0\\
  xyz+x^2z^3&z+xyz^3&x^2y+x^2y^2z^2\\
  0&0&0
 \end{smallmatrix}\right),
 \qquad
F'=
 \left(\begin{smallmatrix}
  xz+xy^2+xy^2z+x^2&x^2z^2&x^2yz^2\\
  0&0&0\\
  0&0&0
 \end{smallmatrix}\right).
\]
For this pair, the cohomology invariant
\[
 \int\alpha\cup F(\alpha)\cup F'(\alpha)
\]
yields a product of three $CCZ$ gates, up to Clifford gates.

\subsection{$\Omega(N^{1-\varepsilon})$ $CCZ$ gates in Golowich--Tamo--Zhu codes}

The three product complexes in~\cite{GolowichTamoZhu2026} are related by
cyclic permutation of their tensor factors. This lets us fold their
addressable three-block gates into one block.

For a graph $\Gamma$ and a local code $\mathcal F$, let
$\mathcal C^*(\Gamma,\mathcal F)$ denote the associated Tanner cochain
complex. Golowich, Tamo, and Zhu~\cite{GolowichTamoZhu2026} begin with the
three Sipser--Spielman code complexes
\[
 \mathcal C=\mathcal C^*(\Gamma,\mathcal F^{(1)}),
 \qquad
 \mathcal E=\mathcal C^*(\Gamma,\mathcal F^{(2)}),
 \qquad
 \mathcal D=\mathcal C^*(\Gamma,{\mathcal F^{(3)}}^\perp)
\]
and form the three cyclically ordered product complexes
\[
 \mathcal D\otimes\mathcal C\otimes\mathcal E,
 \qquad
 \mathcal E\otimes\mathcal D\otimes\mathcal C,
 \qquad
 \mathcal C\otimes\mathcal E\otimes\mathcal D.
\]
Their construction gives addressable three-block copy--cup gates
\[
 \int_{\mathbf a}\alpha_1\cup\alpha_2\cup\alpha_3,
\]
where the weights $\mathbf a$ select the desired logical gates.

Cyclically permuting the three tensor factors identifies the three code
blocks by a bounded-weight chain isomorphism $\sigma$. Folding therefore
gives the single-block phase
\begin{equation}\label{eq:folded-gtz-phase}
 \int_{\mathbf a}
 \alpha\cup\sigma(\alpha)\cup\sigma^2(\alpha).
\end{equation}
For every $\varepsilon>0$, the subsystem qLDPC codes constructed
in~\cite{GolowichTamoZhu2026} have parameters
\[
 [[N,K,D]]_q,
 \qquad
 K\geq N^{1-\varepsilon},
 \qquad
 D\geq N^{(1-\varepsilon)/3},
\]
and constant stabilizer weight. The cyclic fold turns a constant fraction
of their addressable three-block gate slots into
$\Omega(N^{1-\varepsilon})$ addressable, constant-depth
$CCZ$ gates within a single code block.

\section{Non-Clifford Gates via Twisted Cohomology Operations}
\label{sec:non-clifford-gates-via-twisted-cohomology-operations}

In the previous section, ordinary cup products were turned into
single-block gates using cochain endomorphisms. We now apply more general
coefficient-lifting operations constructed in
\Cref{sec:constructing-twisted-cohomology-operations}. We first consider
toric codes and then products of surfaces. As in the previous section, the logical
calculations take place entirely in cohomology, while the physical circuits
use the cochain formulas constructed earlier.

\subsection{Addressable $T$ gates on the four-torus}
\label{sec:the-four-torus}

Let
\[
 T^4=(S^1)^4,
 \qquad
 H^1(T^4;\Z)
 =
 \Z\langle\gamma_1,\gamma_2,\gamma_3,\gamma_4\rangle,
\]
where
\[
 \int_{T^4}
 \gamma_1\cup\gamma_2\cup\gamma_3\cup\gamma_4=1.
\]
Consider the coordinate rotation
\[
 \sigma(x_1,x_2,x_3,x_4)=(x_2,x_3,x_4,x_1).
\]
It has order four and reverses orientation, and therefore it defines the
$C_4$-twisted operation of
\Cref{sec:universal-c4-cocycle}.

Let $\bar\gamma_i$ denote the reduction of $\gamma_i$ modulo two. Write
\[
 [a]=
 \epsilon_1\bar\gamma_1+
 \epsilon_2\bar\gamma_2+
 \epsilon_3\bar\gamma_3+
 \epsilon_4\bar\gamma_4,
 \qquad
 \epsilon_i\in\F_2.
\]
Every class in $H^1(T^4;\F_2)$ lifts integrally. Evaluating on an integral
cocycle lift makes the correction terms vanish. For the integral lift
$\widetilde a=\sum_i\epsilon_i\gamma_i$, reading the $\epsilon_i$ as $0$ or
$1$, expanding the fourfold cup product gives
\begin{align}
 \mathfrak P^{(4)}_\sigma([a])
 &=\int_{T^4}
 \widetilde a\cup\sigma\widetilde a
 \cup\sigma^2\widetilde a\cup\sigma^3\widetilde a\notag\\
 &=\epsilon_1-\epsilon_2+\epsilon_3-\epsilon_4\notag\\
 &\qquad\quad
 -2(\epsilon_1\epsilon_3-\epsilon_2\epsilon_4)\notag\\
 &\qquad\quad
 +4(\epsilon_1\epsilon_2\epsilon_3
    +\epsilon_1\epsilon_2\epsilon_4
    +\epsilon_1\epsilon_3\epsilon_4
    +\epsilon_2\epsilon_3\epsilon_4)
 \pmod8.
 \label{eq:t4-cyclic-phase}
\end{align}
This is the determinant of the corresponding circulant matrix.
Thus the unmodified operation already contains logical $T$, $CS$, and
$CCZ$ gates within a single code block.

We next isolate the individual $T$ gates. Fold all circles except the
$i$-th as in \Cref{sec:circle-fold-projections}, then permute
coordinates to send its surviving degree-one class to $\bar\gamma_1$.
The resulting bounded-weight cochain map $F_i$ induces
\[
 [F_i a]=\epsilon_i\bar\gamma_1.
\]
By \Cref{lem:precomposition-by-cochain-maps}, precomposition
preserves representative independence and bounded depth.

Applying the twisted operation to $F_i a$, using the integral lift
$\epsilon_i\gamma_1$, gives
\begin{align*}
 \mathfrak P^{(4)}_\sigma([F_i a])
 &=\int_{T^4}
 \widetilde{F_i a}\cup\sigma\widetilde{F_i a}
 \cup\sigma^2\widetilde{F_i a}
 \cup\sigma^3\widetilde{F_i a}\\
 &=\int_{T^4}
 (\epsilon_i\gamma_1)\cup(\epsilon_i\gamma_2)
 \cup(\epsilon_i\gamma_3)\cup(\epsilon_i\gamma_4)
 =\epsilon_i\pmod8.
\end{align*}
The corresponding circuit therefore implements
$\overline{\op{T}}_i$ on any of the four logical qubits.

Sums of these cochain maps select logical parities. Applying the operation
to $(F_i+F_j)a$ and $(F_i+F_j+F_k)a$, for distinct indices, gives
respectively
\[
 \overline{\op{T}}_i\,
 \overline{\op{T}}_j\,
 \overline{\op{CS}}_{ij}^\dagger
\]
and
\[
 \overline{\op{T}}_i\,
 \overline{\op{T}}_j\,
 \overline{\op{T}}_k\,
 \overline{\op{CS}}_{ij}^\dagger\,
 \overline{\op{CS}}_{ik}^\dagger\,
 \overline{\op{CS}}_{jk}^\dagger\,
 \overline{\op{CCZ}}_{ijk}.
\]
Cancelling the individual $T$ factors and inverting gives an addressable
$CS$ on any pair; cancelling the $T$ and $CS^\dagger$ factors in the second expression then
isolates $CCZ$ on any triple. Composing these gives individually addressable
$T$, $CS$, and $CCZ$ gates on a four-torus, generating the diagonal third level of the
Clifford hierarchy up to global phase~\cite{CuiGottesmanKrishna2017}.

\subsection{The qutrit $T$ gate on a three-torus}
\label{sec:qutrits-and-the-three-torus}

Here, we use the universal cocycle from \Cref{sec:folded-ccz}. For $\theta=r=3$,
the cyclic resolution and a cochain-divisible universal cocycle give a
cohomology invariant
\[
 \mathfrak P^{(3)}_\sigma:
 H^1(X;\Z/3)\longrightarrow\Z/9
\]
for an oriented three-manifold $X$ with an orientation-preserving
automorphism $\sigma$ of order three.

We now apply this operation to $T^3=(S^1)^3$ with the coordinate rotation
\[
 \sigma(x_1,x_2,x_3)=(x_2,x_3,x_1).
\]
Choose generators
\[
 H^1(T^3;\Z)
 =
 \Z\langle\gamma_1,\gamma_2,\gamma_3\rangle
\]
such that
\[
 \int_{T^3}\gamma_1\cup\gamma_2\cup\gamma_3=1.
\]
Let $\bar\gamma_i$ denote the reduction of $\gamma_i$ modulo three. For
\[
 [a]
 =
 \epsilon_1\bar\gamma_1+
 \epsilon_2\bar\gamma_2+
 \epsilon_3\bar\gamma_3,
 \qquad
 \epsilon_i\in\Z/3,
\]
choose the integral lift $\widetilde a=\sum_i\epsilon_i\gamma_i$, reading
the $\epsilon_i$ in $\{0,1,2\}$. Evaluating on an integral cocycle lift makes
the correction terms in \Cref{eq:twisted-c3-cochain-formula} vanish. The
logical phase is
\begin{align}
 \mathfrak P^{(3)}_\sigma([a])
 &=\int_{T^3}
 \widetilde a\cup\sigma\widetilde a\cup\sigma^2\widetilde a
 =\det\operatorname{circ}
   (\epsilon_1,\epsilon_2,\epsilon_3)\notag\\
 &=\epsilon_1^3+\epsilon_2^3+\epsilon_3^3
   -3\epsilon_1\epsilon_2\epsilon_3
 \pmod9.
 \label{eq:qutrit-logical-action}
\end{align}

For qutrits, cube roots of unity suffice for $Z$ and $S$,
whereas the $T$ gate uses ninth roots. Let
\[
 \zeta_9=e^{2\pi i/9},
 \qquad
 \zeta_3=\zeta_9^3,
 \qquad
 \overline{\op{T}}
 =
 \operatorname{diag}(1,\zeta_9,\zeta_9^{-1}).
\]
The first three terms in \Cref{eq:qutrit-logical-action} give
$\overline{\op{T}}_1$,
$\overline{\op{T}}_2$, and
$\overline{\op{T}}_3$. The final term contributes
\[
 \zeta_9^{-3\epsilon_1\epsilon_2\epsilon_3}
 =
 \zeta_3^{-\epsilon_1\epsilon_2\epsilon_3},
\]
which is the inverse qutrit $CCZ$ gate. The complete logical action is therefore
\[
 \overline{\op{T}}_1\,
 \overline{\op{T}}_2\,
 \overline{\op{T}}_3\,
 \overline{\op{CCZ}}_{123}^{-1}.
\]

As for the four-torus, fold the two circles other than the $i$-th and
permute coordinates to send its surviving degree-one class to
$\bar\gamma_1$. Over $\F_3$, the resulting cochain map $F_i$ satisfies
\[
 [F_i a]=\epsilon_i\bar\gamma_1
\]
and
\[
 \mathfrak P^{(3)}_\sigma([F_i a])
 =\int_{T^3}
 \widetilde{F_i a}\cup\sigma\widetilde{F_i a}
 \cup\sigma^2\widetilde{F_i a}
 =\epsilon_i^3\pmod9.
\]
Thus the resulting circuit implements exactly
$\overline{\op{T}}_i$, giving an addressable qutrit
$T$ gate on any of the three logical qutrits.

\subsection{Squares of surfaces}
\label{sec:squares-of-surfaces}

Let $\Sigma$ be a closed oriented surface with an orientation-reversing
cellular involution
\[
 \tau:\Sigma\longrightarrow \Sigma.
\]
Every class in $H^1(\Sigma;\F_2)$ lifts integrally. Consequently, the logical
actions of the twisted operations on $\Sigma\times \Sigma$ can be calculated using
ordinary cup products of integral lifts.

First consider the order-four map
\[
 \sigma:\Sigma\times \Sigma\longrightarrow \Sigma\times \Sigma,
 \qquad
 \sigma(p,q)=(\tau q,p).
\]
It reverses the orientation of $\Sigma\times \Sigma$ and therefore defines the
$C_4$-twisted operation. Under the K\"unneth decomposition
\[
 H^1(\Sigma\times \Sigma;\Z)
 =
 H^1(\Sigma;\Z)\oplus H^1(\Sigma;\Z),
\]
write an integral lift as $(u,v)$, and denote its reduction modulo two by
$(\bar u,\bar v)$. The pullback action is $\sigma(u,v)=(v,\tau u)$.
Expanding the fourfold cup product gives
\begin{align}
 \mathfrak P^{(4)}_\sigma(\bar u,\bar v)
 ={}&
 \int_{\Sigma\times \Sigma}
 (u,v)\cup\sigma(u,v)
 \cup\sigma^2(u,v)\cup\sigma^3(u,v)\notag\\
 ={}&
 \left(\int_\Sigma u\cup\tau u\right)^2
 -\left(\int_\Sigma v\cup\tau v\right)^2\notag\\
 &+
 4\left(\int_\Sigma u\cup v\right)
  \left(\int_\Sigma u\cup\tau v\right)
 \pmod8.
 \label{eq:surface-square-c4}
\end{align}

The surface square also carries a Klein-four action. Define
\[
 s(p,q)=(\tau p,\tau q),
 \qquad
 \kappa(p,q)=(q,p).
\]
The maps $s$ and $\kappa$ are commuting orientation-preserving
involutions. They therefore define the $V_4$-twisted operation, whose
logical action is
\begin{align}
 \mathfrak P^{(4)}_{s,\kappa}(\bar u,\bar v)
 ={}&
 \int_{\Sigma \times \Sigma }
 (u,v)\cup s(u,v)\cup\kappa(u,v)\cup s\kappa(u,v)\notag\\
 ={}&
 \left(\int_\Sigma u\cup\tau u\right)^2
 +\left(\int_\Sigma v\cup\tau v\right)^2\notag\\
 &-
 2\left(\int_\Sigma u\cup v\right)^2
 -2\left(\int_\Sigma u\cup\tau v\right)^2
 \pmod8.
 \label{eq:surface-square-v4}
\end{align}

\Cref{eq:surface-square-c4,eq:surface-square-v4} determine the complete logical gates. After
choosing a basis of $H^1(\Sigma ;\Z)$, their coefficients give the corresponding
$T$, $CS$, $CZ$, $CCZ$, and $CCCZ$ factors. In particular, the
calculation uses only $H^1(\Sigma ;\Z)$, its intersection pairing, and the
induced action of $\tau$.

More generally, replacing $\tau$ by another orientation-reversing cellular
involution simply replaces its action on $H^1(\Sigma ;\Z)$. Precomposing the operation with a
bounded-weight cochain endomorphism likewise amounts to precomposing
\Cref{eq:surface-square-c4} or
\Cref{eq:surface-square-v4} with the induced linear map on
$H^1(\Sigma \times \Sigma ;\F_2)$.

\paragraph{The Bolza surface.}
\label{sec:bolza-square}

Return to the Bolza surface $\Sigma $ and the integral classes
\[
 \widetilde\gamma_1,\ldots,\widetilde\gamma_4
 \in H^1(\Sigma ;\Z)
\]
from \Cref{sec:bolza-surface}. Consider the orientation-reversing
involution acting by
\[
 \widetilde\gamma_1\longmapsto\widetilde\gamma_2,
 \qquad
 \widetilde\gamma_2\longmapsto\widetilde\gamma_1,
 \qquad
 \widetilde\gamma_3\longmapsto-\widetilde\gamma_4,
 \qquad
 \widetilde\gamma_4\longmapsto-\widetilde\gamma_3.
\]
For
\[
 u=
 x_1\widetilde\gamma_1+
 x_2\widetilde\gamma_2+
 x_3\widetilde\gamma_3+
 x_4\widetilde\gamma_4,
 \qquad
 x_i\in\{0,1\},
\]
the intersection pairing gives
\begin{equation}
 \int_\Sigma u\cup\tau u
 =
 x_1-x_2-x_3+x_4
 +2(x_1+x_2)(x_3+x_4)
 \pmod4.
 \label{eq:bolza-twisted-intersection}
\end{equation}
Together with the ordinary intersection pairing from
\Cref{sec:bolza-surface}, this determines every term in
\Cref{eq:surface-square-c4} and
\Cref{eq:surface-square-v4}.

For the cyclic action, evaluating on a single basis element gives
\[
 \mathfrak P^{(4)}_\sigma(\gamma_i,0)=1,
 \qquad
 \mathfrak P^{(4)}_\sigma(0,\gamma_i)=-1
 \pmod8.
\]
Thus the linear part of the logical gate is
\[
 \overline{\op{T}}_1
 \overline{\op{T}}_2
 \overline{\op{T}}_3
 \overline{\op{T}}_4
 \overline{\op{T}}_5^\dagger
 \overline{\op{T}}_6^\dagger
 \overline{\op{T}}_7^\dagger
 \overline{\op{T}}_8^\dagger.
\]
The remaining controlled phases are given by
\Cref{eq:surface-square-c4}.

For the Klein-four action,
\[
 \mathfrak P^{(4)}_{s,\kappa}(\gamma_i,0)
 =
 \mathfrak P^{(4)}_{s,\kappa}(0,\gamma_i)
 =1
 \pmod8.
\]
Its linear part is therefore
\[
 \overline{\op{T}}_1
 \overline{\op{T}}_2
 \overline{\op{T}}_3
 \overline{\op{T}}_4
 \overline{\op{T}}_5
 \overline{\op{T}}_6
 \overline{\op{T}}_7
 \overline{\op{T}}_8,
\]
with the remaining controlled phases specified by
\Cref{eq:surface-square-v4}.

These $T$ factors occur together with controlled phases. Unlike the
torus selectors, this calculation alone does not isolate the individual
$T$ gates.

\section{Addressable Gates using Relative Cohomology}
\label{sec:addressable-relative-cup-gates}

The primary example in this section constructs an addressable $T$ gate on a 3-torus by folding, which requires extending our formalism to operations in relative cohomology. To build up to this example, we first review how the $T$ gate on three glued copies of the toric code on a tetrahedron is obtained from a relative cohomology operation. This is an operation known in the literature~\cite{HsinKobayashiZhu2024,Bauer_2025,Davydova_2026} which was originally understood through the transversal $T$-gate on a color code tetrahedron~\cite{Bombin_2007}. We then show how to use coordinate reflections and relative cochains to obtain a $T$-gate on a cube, which is, to the best of our knowledge, a new operation. This can be extended to an addressable $T$ on a 3-torus by using the circle folds of \Cref{sec:circle-fold-projections}.

\Cref{sec:relative-t-further-examples} extends these ideas to higher dimensions and other manifolds, while \Cref{sec:relative-t-beyond-manifolds} gives examples beyond manifolds.

\subsection{A \texorpdfstring{$T$}{T} gate for three copies of the toric code on a tetrahedron}
\label{sec:relative-t-tetrahedron}

Let ${\mathcal D}=[p_0p_1p_2p_3]$ be an ordered, subdivided tetrahedron, let $\mathcal F_i$ be the face opposite $p_i$, and put ${\mathcal F}={\mathcal F}_0$, $\mathcal E=[p_1p_2]$, and $\eta=[p_0p_1]$. These denote oriented fundamental chains of the subdivided faces and edges, with $\partial {\mathcal D}={\mathcal F}_0-{\mathcal F}_1+{\mathcal F}_2-{\mathcal F}_3$. Consider three copies of the toric code described by a cochain complex on the tetrahedron. For cochains $b=(b_1,b_2,b_3)$, we impose the boundary conditions
\begin{equation}\label{eq:tetrahedral-relative-conditions}
 b_i|_{{\mathcal F}_i}=0\quad(i=1,2,3),\qquad (b_1+b_2+b_3)|_{\mathcal F}=0.
\end{equation}
In every degree, these conditions define a subcomplex $\mathcal A^*({\mathcal D})\subset C^*({\mathcal D};\Z/2)^3$. 

With these boundary conditions, the relative complex $\mathcal A^*({\mathcal D})$ has one-dimensional first cohomology (as we show below), detected by $\lambda(b)=\int_\eta b_1$ for $b\in\mathcal A^1({\mathcal D})$.

Write $\widetilde b_i=\iota(b_i)$ for the coefficientwise $0/1$-lift. We introduce the following cochain operation:
\begin{equation}\label{eq:tetrahedral-t-phase}
 \Phi_{\mathcal D}(b)=4\int_{\mathcal D}\widetilde b_1\cup\widetilde b_3\cup\widetilde b_2+2\int_{\mathcal F}\widetilde b_1\cup\widetilde b_2-\int_{\mathcal E}\widetilde b_1\pmod8.
\end{equation}
Expressed as a physical circuit, the three terms above give products of physical $CCZ$, $CS^{\pm1}$, and $T^{\pm1}$ gates on tetrahedra, triangles, and edges, respectively.

The operation $\Phi_{\mathcal D}(b)$ is a cohomology operation:

\begin{proposition}\label{prop:tetrahedral-relative-t}
    Consider the relative cochain complex $\mathcal A^*({\mathcal D})$ as discussed above. Then:
    \begin{enumerate}
        \item The map $\lambda$ identifies $H^1(\mathcal A^*({\mathcal D})) \cong \Z/2$, and thus, there is one logical qubit.
        \item The operation $\Phi_{\mathcal D}(b)$ is a cohomology operation corresponding to the logical $T$-gate implemented by $U_{\Phi_{\mathcal D}}$ on this logical qubit. Namely, on cocycles we have
        \begin{equation}\label{eq:tetrahedral-t-action}
        \Phi_{\mathcal D}(b)=\iota(\lambda(b))\in\Z/8.
        \end{equation}
    \end{enumerate}
\end{proposition}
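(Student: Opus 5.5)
For part (1), I would compute $H^1(\mathcal A^*(\mathcal D))$ directly from the boundary conditions. Since $\mathcal D$ is contractible, a cocycle $b\in\mathcal A^1$ has components $b_i=d f_i$ for $0$-cochains $f_i$ on $\mathcal D$, each unique up to a constant. The condition $b_i|_{\mathcal F_i}=0$ says $f_i$ is constant on the connected face $\mathcal F_i$, so we may normalize $f_i|_{\mathcal F_i}=0$. The condition $(b_1+b_2+b_3)|_{\mathcal F}=0$ says $f_1+f_2+f_3$ is constant on $\mathcal F$. The class $[b]$ is zero exactly when $b=dg$ with $g\in\mathcal A^0$, that is, $g_i$ vanishes on $\mathcal F_i$ and $\sum g_i$ vanishes on $\mathcal F$. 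Normalized as above, $f$ lies in $\mathcal A^0$ if and only if the constant $c=(f_1+f_2+f_3)|_{\mathcal F}$ is zero. This gives an injection $H^1\to\Z/2$, $[b]\mapsto c$. For surjectivity, take $f_1$ to be the indicator of $p_0$, with $f_2=f_3=0$. Then $f_1$ vanishes on $\mathcal F_1$ (which omits $p_1$, not $p_0$? — careful: $\mathcal F_1$ omits $p_1$ and contains $p_0$). I would therefore instead choose $f_1=1$ off $\mathcal F_1$ near $p_1$, or more simply solve for the constant. Finally, I would identify $c$ with $\lambda(b)=\int_\eta b_1=f_1(p_1)-f_1(p_0)$. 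The vertex $p_0$ lies in $\mathcal F_1$, so $f_1(p_0)=0$. The vertices $p_1,p_2,p_3$ lie in $\mathcal F$, with $p_1\in\mathcal F_2\cap\mathcal F_3$, so $f_2(p_1)=f_3(p_1)=0$. Hence $c=f_1(p_1)=\lambda(b)$. A nonzero class then comes from $f_1$ equal to $1$ on $\mathcal F_2\cap\mathcal F_3\supset\mathcal E$-side vertices away from $\mathcal F_1$, adjusted so that the constraint on $\mathcal F$ holds using $f_2,f_3$.

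For part (2), I would first reduce to a convenient representative and then show invariance. The plan is to show $\Phi_{\mathcal D}(b+dg)=\Phi_{\mathcal D}(b)$ for $g\in\mathcal A^0$ and $b$ a cocycle, and then evaluate on one representative of each class. Rather than working through mod-$2$ sums, I would use integral lifts in the spirit of \Cref{thm:cochain-lift}. Lift $f_i$ to integer $0$-cochains $\widetilde f_i$ (values $0/1$) satisfying the same normalizations, and put $B_i=d\widetilde f_i$. These $B_i$ are genuine integral cocycles reducing to $b_i$, with $B_i|_{\mathcal F_i}=0$, and $(B_1+B_2+B_3)|_{\mathcal F}=d(\widetilde f_1+\widetilde f_2+\widetilde f_3)|_{\mathcal F}$, which is divisible by $2$.

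The key step, and the main obstacle, is showing that replacing $\widetilde b_i$ by $B_i$ does not change $\Phi_{\mathcal D}$ mod $8$. The first term carries a factor $4$ and is trilinear, so it only matters mod $2$, where the lifts agree. The second term carries a factor $2$, so mod $8$ the lifts matter only mod $4$. Writing $\widetilde b_i=B_i+2c_i$, the error is $4\int_{\mathcal F}(c_1\cup B_2+B_1\cup c_2)$. This error must be absorbed by the change $-2\int_{\mathcal E}(c_1)$ from the third term together with the change in the cubic term. I would organize this cancellation by Stokes' theorem on $\mathcal D$, $\mathcal F$ and $\mathcal E$, using $\partial\mathcal D$, $\partial\mathcal F$ and the boundary conditions to push each error onto a lower-dimensional face. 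This is exactly the coupled bulk–face–edge structure, and I expect the sign bookkeeping here to be the hardest part. Granting it, for $B_i=d\widetilde f_i$ every term becomes a boundary integral: $\int_{\mathcal D}B_1\cup B_3\cup B_2=\int_{\partial\mathcal D}\widetilde f_1 B_3\cup B_2$, and similarly for the others. The boundary conditions kill all faces except $\mathcal F$, and iterating Stokes' theorem leaves only vertex values. For the nontrivial class, normalized so that $\widetilde f_1(p_1)=1$, this yields $\iota(\lambda(b))=1$ in $\Z/8$; the trivial class gives $0$. Invariance under coboundaries follows from the same computation, since the result depends only on the boundary vertex values, which are determined by $\lambda$.
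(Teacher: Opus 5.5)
Your part (1) matches the paper's argument: normalizing $f_i(p_0)=0$ forces $f_i|_{\mathcal F_i}=0$, the constant value of $f_1+f_2+f_3$ on $\mathcal F$ is $f_1(p_1)=\lambda(b)$, and $b$ is a coboundary in $\mathcal A^*(\mathcal D)$ exactly when this constant vanishes. Your surjectivity step is left unfinished. The clean version: $\mathcal F_1\cap\mathcal F_2\cap\mathcal F_3=\{p_0\}$ misses $\mathcal F$, so every vertex of $\mathcal F$ lies outside some $\mathcal F_i$. Putting a $1$ at that vertex in that component $f_i$, and zero elsewhere, gives $\sum_i f_i\equiv1$ on $\mathcal F$.

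The genuine gap is in part (2). The step you call the key one is false, so no sign bookkeeping can rescue it: $\Phi_{\mathcal D}$ is \emph{not} invariant under replacing the $0/1$ lifts $\widetilde b_i$ by the exact integral lifts $B_i=d\widetilde f_i$. Write $\widetilde b_i=B_i+2c_i$; in fact $c_i=\widetilde f_i\cup\widetilde b_i$ exactly. Then every change in the first two terms is a multiple of $4$ modulo $8$. The edge term, however, changes by $-2\int_{\mathcal E}\widetilde f_1\cup\widetilde b_1$, and that integral is often odd.

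Concretely, on the unsubdivided tetrahedron the nonzero class is represented by $b_i=d\chi_{p_i}$, where $\chi_p$ is the indicator of $p$. For this cocycle $\Phi_{\mathcal D}(b)=4\cdot0+2\cdot1\cdot1-1=1$. Substituting $B_i=d\chi_{p_i}$ over $\Z$ instead gives $4\cdot0+2\cdot(-1)(-1)-(-1)=3$. Your Stokes evaluation with the $B_i$ fails for the same reason. It does not reduce to vertex values but leaves $\widetilde f_1(p_1)+4\int_{\mathcal F}\widetilde f_1\cup B_3\cup B_2+2\int_{\mathcal E}\widetilde f_1\cup B_2$, and in this example the edge term equals $2$. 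The analogy with \Cref{thm:cochain-lift} does not apply: $\Phi_{\mathcal D}$ does not come from a cochain-divisible universal cycle, so its value genuinely depends on which lift is used.

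The missing idea is to keep the $0/1$ lifts and use the exact identities $d\widetilde u=\widetilde b-2\widetilde u\cup\widetilde b$ and $d\widetilde b=2\widetilde b\cup\widetilde b$, which hold for a binary $b=du$. Modulo $8$ they give three identities:
\begin{itemize}
\item $d(4\widetilde u_1\cup\widetilde b_3\cup\widetilde b_2)=4\widetilde b_1\cup\widetilde b_3\cup\widetilde b_2$ on $\mathcal D$;
\item $d(2\widetilde u_1\cup\widetilde b_2)=4\widetilde u_1\cup\widetilde b_3\cup\widetilde b_2+2\widetilde b_1\cup\widetilde b_2$ on $\mathcal F$, using $b_3=b_1+b_2$ there;
\item $-d\widetilde u_1=2\widetilde u_1\cup\widetilde b_2-\widetilde b_1$ on $\mathcal E$, using $b_2=b_1$ there.
\end{itemize}
So each lower-dimensional term of $\Phi_{\mathcal D}$ is exactly what absorbs the defect $2\widetilde u\cup\widetilde b$ produced one dimension up. Stokes' theorem then telescopes $\mathcal D\to\mathcal F\to\mathcal E\to\partial\mathcal E$, with the boundary conditions killing every other face, and gives $\widetilde u_1(p_1)-\widetilde u_1(p_2)=\iota(\lambda(b))$. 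Because this computes $\Phi_{\mathcal D}$ on every cocycle directly, representative independence is automatic and no separate invariance argument is needed.
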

\begin{proof}
Write $b_i=du_i$ with $u_i(p_0)=0$. Then $u_i|_{\mathcal F_i}=0$, and $u_1+u_2+u_3$ is constant on $\mathcal F$, with value $u_1(p_1)=\lambda(b)$. Thus $b$ is a coboundary in $\mathcal A^*({\mathcal D})$ precisely when this constant is zero. Both constants occur, by assigning the value one at each vertex of $\mathcal F$ to a component $u_i$ whose face $\mathcal F_i$ does not contain that vertex, and extending by zero elsewhere.

For a binary one-cocycle $b=du$, the canonical integer lifts satisfy $d\widetilde b=2\widetilde b\cup\widetilde b$ and $d\widetilde u=\widetilde b-2\widetilde u\cup\widetilde b$. The Leibniz rule therefore gives the following identities modulo eight
\begin{align*}
 d(4\widetilde u_1\cup\widetilde b_3\cup\widetilde b_2)&=4\widetilde b_1\cup\widetilde b_3\cup\widetilde b_2 &&\text{on }{\mathcal D}       ,\\
 d(2\widetilde u_1\cup\widetilde b_2)&=4\widetilde u_1\cup\widetilde b_3\cup\widetilde b_2+2\widetilde b_1\cup\widetilde b_2 &&\text{on }{\mathcal F},\\
 -d\widetilde u_1&=2\widetilde u_1\cup\widetilde b_2-\widetilde b_1 &&\text{on }{\mathcal E}.
\end{align*}
Here $b_3=b_1+b_2$ on ${\mathcal F}$ and $b_2=b_1$ on ${\mathcal E}$. Stokes' formula moves the three terms successively from ${\mathcal D}$ to ${\mathcal F}$ to ${\mathcal E}$ and all other faces vanish by~\Cref{eq:tetrahedral-relative-conditions}. The result is $\widetilde u_1(p_1)-\widetilde u_1(p_2)=\iota(\lambda(b))$, since $p_2\in {\mathcal F}_1$.
\end{proof}

\subsection{A \texorpdfstring{$T$}{T} gate on a cube}
\label{sec:relative-t-cube}

In this section, we extend the $T$ gate on the tetrahedron to the cube, and subsequently, we also extend this to a 3-torus. Before proceeding, let us outline the core of the main idea behind this construction. 
We start with a 3-torus $T_L^3$ with the usual cubic cellulation and even $L$. Let $Q=[0,L]^3$, subdivided into unit cubes. Consider the following chain of maps:
\begin{equation}\label{eq:relative-t-roadmap}
 C^1(T^3_L; \Z/2)
 \xrightarrow{\ F_x\ }
 C^1(Q,\partial_xQ; \Z/2)
 \xrightarrow{\ c\,\mapsto\,b(c)\ }
 \mathcal A^1({\mathcal D}; \Z/2)
 \xrightarrow{\ \Phi_{\mathcal D}\ }
 \Z/8,
\end{equation}
which contains the following steps.
\begin{enumerate}
 \item \emph{Selecting one direction on the torus.} More specifically, the map $F_x$ is a composition of two maps, which together take us from a torus to a cube with boundary conditions giving one-dimensional first cohomology,
 \begin{equation} \label{eq:pair-maps}
 C^1(T^3_L; \Z/2)
 \xrightarrow{\ 1+s_x^*\ }
 C^1(T^3_L,A_x; \Z/2)
 \xrightarrow{\ p^*\ }
 C^1(Q,\partial_xQ; \Z/2)
\end{equation}
A circle fold $s_x$ (introduced in \Cref{sec:circle-fold-projections})
 turns a cochain $a$ on the 3-torus into the cochain $a+s_x^*a$, which
 vanishes on the fixed set $A_x = I_+ \times T_L^2$. Pulling back along the
 quotient map $p\colon Q\to T^3_L$ gives a cochain on the cube
 which vanishes on the two faces $x=0$ and $x=L$.  On cohomology this step
 retains exactly the logical qubit belonging to the $x$-direction. The total cochain map $F_x = p^* (1+s_x^*)$ is the map we use for the precomposition in order to make a gate on a cube addressable.
 \item \emph{$T$ gate on the cube}. The cube is the union of six tetrahedra, which are
 permuted by the coordinate permutations.  One only needs to consider half of them. Intuitively, it is because half of the cube is topologically still a cube, so restricting to half a cube does not lose any homological information. Then, one can fold three tetrahedra belonging to a half of the cube into a single one. The action of the coordinate permutation can be then used to show that there exists a basis for the one-cocycles in which the boundary conditions coincide with the tetrahedral boundary conditions we have seen earlier. Finally, we can apply the relative cohomology operation $\Phi_{\mathcal D}$.
\end{enumerate}

\begin{figure}[t]
\centering
\begin{tikzpicture}[scale=3.4,>=Stealth]
 \coordinate (c000) at (0,0);
 \coordinate (c100) at (1,0);
 \coordinate (c010) at (0.45,0.3);
 \coordinate (c110) at (1.45,0.3);
 \coordinate (c001) at (0,1);
 \coordinate (c101) at (1,1);
 \coordinate (c011) at (0.45,1.3);
 \coordinate (c111) at (1.45,1.3);
 \fill[pattern=north east lines,pattern color=gray!60] (c000)--(c010)--(c011)--(c001)--cycle;
 \fill[pattern=north east lines,pattern color=gray!60] (c100)--(c110)--(c111)--(c101)--cycle;
 \draw[gray!70,dashed] (c000)--(c010);
 \draw[gray!70,dashed] (c010)--(c110);
 \draw[gray!70,dashed] (c010)--(c011);
 \draw[gray!70] (c000)--(c100)--(c101)--(c001)--cycle;
 \draw[gray!70] (c100)--(c110)--(c111)--(c101);
 \draw[gray!70] (c011)--(c111);
 \draw[gray!70] (c001)--(c011);
 \draw[red!60!black,thick] (c000)--(c010)--(c011)--(c001)--cycle;
 \draw[red!60!black,thick] (c100)--(c110)--(c111)--(c101)--cycle;
 \fill[blue,fill opacity=0.4] (c000)--(c001)--(c011)--cycle;
 \fill[blue,fill opacity=0.3] (c000)--(c001)--(c111)--cycle;
 \fill[blue,fill opacity=0.1] (c001)--(c011)--(c111)--cycle;
 \draw[dashed] (c000)--(c011);                        %
 \draw (c001)--(c111);                                %
 \draw (c000)--(c111);                                %
 \draw (c011)--(c111);                                %
 \draw[very thick,red!75!black] (c001)--(c011);       %
 \draw[very thick,blue!70!black] (c000)--(c001);      %
 \draw[very thick,blue!70!black,->] (c000)--(c100);   %
 \fill (c000) circle (0.45pt) node[below left=2pt,font=\small] {$p_0$};
 \fill (c001) circle (0.45pt) node[left=2pt,font=\small] {$p_1$};
 \fill (c011) circle (0.45pt) node[above=2pt,font=\small] {$p_2$};
 \fill (c111) circle (0.45pt) node[above right=2pt,font=\small] {$p_3$};
 \node[font=\small,blue!70!black,anchor=east] at (-0.06,0.35) {$\eta$};
 \node[font=\small,blue!70!black,anchor=north] at (0.5,-0.06) {$\eta_x=\rho(\eta)$};
 \node[font=\small,red!75!black] at (0.14,1.26) {$\mathcal E$};
 \node[font=\footnotesize,align=left,anchor=west] (faceTwoLabel)
   at (1.73,1.07) {$\mathcal F_2\colon x{=}y$\\$b_2=0$};
 \draw[gray!75,thin,->] (faceTwoLabel.west)--(1.57,1.07)--(0.80,0.88);
 \node[font=\footnotesize,anchor=south] (topFaceLabel)
   at (0.93,1.48) {$\mathcal F\colon z{=}L$};
 \draw[gray!75,thin,->] (topFaceLabel.south)--(0.93,1.24);
 \node[font=\footnotesize,red!60!black,align=left,anchor=west] (rightBoundaryLabel)
   at (1.73,0.60) {$\partial_x Q$\\$x{=}L$};
 \draw[red!60!black,thin,->] (rightBoundaryLabel.west)--(1.57,0.60)--(1.27,0.66);
 \node[font=\footnotesize,red!60!black,align=right,anchor=east] (leftBoundaryLabel)
   at (-0.34,0.85) {$\partial_x Q$\\$x{=}0$};
 \draw[red!60!black,thin,->] (leftBoundaryLabel.east)--(-0.16,0.85)--(0.08,0.90);
 \node[font=\footnotesize,anchor=east] (faceThreeLabel) at (-0.34,0.58) {$\mathcal F_3$};
 \draw[gray!75,thin,->] (faceThreeLabel.east)--(-0.16,0.58)--(0.13,0.62);
 \draw[->,gray] (-0.35,-0.3)--(-0.1,-0.3) node[right,font=\scriptsize] {$x$};
 \draw[->,gray] (-0.35,-0.3)--(-0.2,-0.2) node[above right=-2pt,font=\scriptsize] {$y$};
 \draw[->,gray] (-0.35,-0.3)--(-0.35,-0.05) node[above,font=\scriptsize] {$z$};
\end{tikzpicture}
\caption{The tetrahedron ${\mathcal D}=\{0\le x\le y\le z\le L\}$ of the cube
$Q=[0,L]^3$, with the faces $\mathcal F=\{z=L\}$, $\mathcal F_2=\{x=y\}$ and $\mathcal F_3=\{x=0\}$
shaded and the faces $x=0,L$ forming $\partial_x Q$
hatched. The coordinate permutation $\rho$ maps $\eta=[p_0p_1]$ to $\eta_x$.}
\label{fig:cube-chamber}
\end{figure}

Now, assume that we start with $C(Q,\partial_xQ; \Z/2)$ with $\partial_x Q = \{x=0\}\cup\{x=L\}$. While the example on the tetrahedron has three input cochains, we will obtain these by folding the cube. Subdivide each unit cube into its six coordinate-order tetrahedra and order vertices lexicographically by $(x+y+z,x,y,z)$. The coordinate permutations
\begin{equation*}
 r_{12}(x,y,z)=(y,x,z),\qquad r_{23}(x,y,z)=(x,z,y),\qquad \rho=r_{12}r_{23}
\end{equation*}
preserve this triangulation and its simplex orders. Use the tetrahedron ${\mathcal D}=\{0\leq x\leq y\leq z\leq L\}$, with vertices
\begin{equation*}
 p_0=(0,0,0),\quad p_1=(0,0,L),\quad p_2=(0,L,L),\quad p_3=(L,L,L),
\end{equation*}
oriented in this order. In particular, $\mathcal F=\{z=L\}\cap {\mathcal D}$, $\mathcal F_1=\{y=z\}\cap {\mathcal D}$, $\mathcal F_2=\{x=y\}\cap {\mathcal D}$, and $\mathcal F_3=\{x=0\}\cap {\mathcal D}$.

Consider the three tetrahedra $\mathcal D$, $r_{12}\mathcal D$ and $\rho \mathcal D$ that form half of the cube: $\{(x,y,z) \in Q:  \ y \leq z\}$. For a cochain $c \in C^1(Q, \partial_x Q; \Z/2)$, folding $r_{12}\mathcal D$ and $\rho \mathcal D$ onto $\mathcal D$ gives the pullbacks $r_{12}^*c$ and $\rho^*c$, respectively. Now consider the following change of basis, restricted to ${\mathcal D}$:
\begin{equation}\label{eq:cube-reflection-inputs}
 b_1=r_{12}^*c+\rho^*c,\qquad b_2=c+r_{12}^*c,\qquad b_3=c.
\end{equation}
The first two of them vanish on the mirror planes $\mathcal F_1$ and $\mathcal F_2$, and the third vanishes on $\mathcal F_3\subset {\partial_x Q}$. Their sum is $\rho^*c$, which vanishes on $\mathcal F$ because $\rho(\mathcal F)\subset\{x=L\}$. Namely, for the new cochain $b = (b_1,b_2,b_3)$ defined above, the tetrahedral boundary conditions
\begin{equation}\label{eq:cube-tetrahedral-conditions}
 b_i|_{{\mathcal F}_i}=0\quad(i=1,2,3),\qquad (b_1+b_2+b_3)|_{\mathcal F}=0.
\end{equation}
are satisfied.
Thus~\Cref{eq:cube-reflection-inputs} is a cochain map into $\mathcal A^*({\mathcal D})$ that maps $c \mapsto b(c)$. These are pullbacks of absolute cochains followed by restriction: the individual reflections need not preserve ${\partial_x Q}$. In particular, these cochains do not need to be extended to the entire cube because they already satisfy the correct boundary conditions and vanish where appropriate.

Define 
\begin{equation}
\Phi_Q(c)=\Phi_{\mathcal D}(b(c)).
\end{equation}
We provide the full expression for completeness, with binary sums taken
before the coefficientwise $0/1$-lift:
\begin{align}
  \Phi_Q(c) &= 4\int_{\mathcal D}
  \iota(r_{12}^*c+\rho^*c)\cup\widetilde c\cup\iota(c+r_{12}^*c)\notag\\
  &\quad+2\int_{\mathcal F}
  \iota(r_{12}^*c+\rho^*c)\cup\iota(c+r_{12}^*c)
  -\int_{\mathcal E}\iota(r_{12}^*c+\rho^*c)\pmod8.
\end{align}

Since $r_{12}(\eta)\subset {\partial_x Q}$ and $\rho(\eta)=\eta_x$, where $\eta_x$ runs from $(0,0,0)$ to $(L,0,0)$, \Cref{prop:tetrahedral-relative-t} gives
\begin{equation}\label{eq:relative-cube-t-action}
 \Phi_Q(c)=\iota\!\left(\int_{\eta_x}c\right)\quad\text{for }c\in Z^1(Q,{\partial_x Q};\Z/2).
\end{equation}
The integral identifies $H^1(Q,{\partial_x Q};\Z/2)$ with $\Z/2$. A cocycle is $dt$, with $t$ constant on each of the two faces, and the integral is the difference of those constants. Hence, $U_{\Phi_Q}$ gives a $T$ gate on the relative cube code.

\subsection{An addressable \texorpdfstring{$T$}{T} gate on the three-torus}
\label{sec:relative-t-three-torus}

For even $L$, fold the first circle as in
\Cref{sec:circle-fold-projections} and put
$A_x=I_+\times T_L^2$.\footnote{For $L=2h+1$, use $I_+=[0,h]$ and set
$s(t)=t$ on $[0,h]$, $s(t)=h$ on $[h,h+1]$, and $s(t)=L-t$ on
$[h+1,L]$. This collapses one edge and gives the same bounded-weight
retraction and cohomological projection. The circle constructions below
therefore also work for odd $L$.}
The map $1+s_x^*$ takes values in $C^*(T^3_L,A_x;\Z/2)$. The
quotient $p:Q\to T^3_L$ sends both faces of $\partial_x Q$ into $A_x$, and it is a map that sends $(Q,\partial_x Q)\to(T^3_L,A_x)$ in \Cref{eq:pair-maps}.

On cochains, set
\begin{equation}\label{eq:relative-t-torus-formula}
 F_x=p^*(1+s_x^*),\qquad \Phi_{T^3}(a)=\Phi_Q(F_xa).
\end{equation}
The sum defining $F_xa$ is binary, before the lift.
\Cref{eq:relative-cube-t-action} gives
\begin{equation}\label{eq:relative-t-torus-action}
 \Phi_{T^3}(a)=\iota(\epsilon_x),\qquad U_{\Phi_{T^3}}\big|_{\mathcal C}
 =\overline{T}_x\otimes\mathrm{id}_y\otimes\mathrm{id}_z.
\end{equation}
Permuting coordinates gives $\overline{T}_y$ and
$\overline{T}_z$. \Cref{lem:precomposition-by-cochain-maps}
supplies representative independence and bounded depth. This is a gate
on one block of the cubic toric code. The cube $Q$ and the tetrahedron $\mathcal D$ with its boundary conditions are just the domain on which
the phase is evaluated.

\paragraph{Remark: other gates at 3rd level of Clifford hierarchy on a 3-torus.} We can also use the expression $\Phi_{Q_d}(\sum_{i} F_i a)$ is $\iota(\sum_{i}\epsilon_i)$, giving $\op{R}(2\pi/2^d)$ on the total parity. For $d=2$, selecting both directions gives $\overline{\op{S}}_1\overline{\op{S}}_2\overline{\op{CZ}}_{12}$. For $d=3$, selecting all three gives
\begin{equation}\label{eq:relative-t-parity-gate}
 \overline{T}_1\overline{T}_2\overline{T}_3\,
 \overline{CS}_{12}^{\dagger}\overline{CS}_{13}^{\dagger}\overline{CS}_{23}^{\dagger}\overline{CCZ}_{123}.
\end{equation}
This follows by expanding the binary parity as in \Cref{lem:phase-polynomial}. This expression can be made invariant under the cyclic coordinate permutation $\sigma$ by using $F_1(1+\sigma^*+\cdots+(\sigma^*)^{d-1})$; this sum is unchanged by $\sigma^*$ on all cochains. 

Similarly, the combinations of cochain maps $F_i$ discussed in \Cref{sec:selecting-cohomology-operations} give individually addressable $CS$ and $CCZ$ gates here as well.

\subsection{Higher-dimensional cubes, tori, and further examples}
\label{sec:relative-t-further-examples}

This approach can be extended to every dimension $d\geq2$. 
Given a cohomology operation $\Phi_{\mathcal D_d}$ on a $d$-simplex, combining it with the coordinate permutation and the circle fold, one obtains the logical phase gate
\begin{equation}\label{eq:relative-higher-root}
 \op{R}(2\pi/2^d),
\end{equation}
where $\op{R}(\varphi) = \op{diag}(1,e^{i\varphi})$ is the single-qubit phase gate.
Explicit operations for $d=2,3,4,5$ are given in~\cite{HsinKobayashiZhu2024}. Let us now discuss how this extends to addressable gates on $d$-dimensional tori.

\paragraph{The cohomology invariant on the simplex.} Let ${\mathcal D}_d=[p_0\cdots p_d]$ be a subdivided $d$-simplex, with $\mathcal F_i$ opposite $p_i$ and $\eta=[p_0p_1]$. The $d$ inputs satisfy
\begin{equation}\label{eq:simplex-relative-conditions}
 b_i|_{\mathcal F_i}=0\quad(1\leq i\leq d),\qquad \left.\sum_{i=1}^d b_i\right|_{\mathcal F_0}=0.
\end{equation}
Write $\mathcal A_d^*(\mathcal D_d)$ for this subcomplex of $C^*(\mathcal D_d;\Z/2)^d$. The argument with $b_i=du_i$ in \Cref{prop:tetrahedral-relative-t} identifies $H^1(\mathcal A_d^*(\mathcal D_d))$ with $\Z/2$ via $\lambda(b)=\int_\eta b_1$. The required phase is a formula on all constrained one-cochains with
\begin{equation}\label{eq:simplex-relative-root-phase}
 \Phi_{\mathcal D_d}:\mathcal A_d^1(\mathcal D_d)\longrightarrow\Z/2^d,
 \qquad \Phi_{\mathcal D_d}(b)=\iota(\lambda(b))\quad\text{on cocycles}.
\end{equation}

The simplest example of this is the $d=2$ gate on a folded triangle that gives the $S$-gate:
\begin{equation}\label{eq:relative-s-triangle}
 \Phi_{\mathcal D_2}(b)=2\int_{\mathcal D_2}\widetilde b_1\cup\widetilde b_2-\int_{[p_1p_2]}\widetilde b_1\pmod4.
\end{equation}

\paragraph{$d$-dimensional cube.} Give $Q_d=[0,L]^d$ the coordinate-order triangulation used in \Cref{sec:relative-t-cube}, put $\partial_x Q_d=\{x_1=0,L\}$, and take the simplex
\begin{equation*}
 \mathcal D_d=\{0\leq x_1\leq\cdots\leq x_d\leq L\},\qquad
 p_i=(\underbrace{0,\ldots,0}_{d-i},\underbrace{L,\ldots,L}_{i}).
\end{equation*}
Let $r_{k,k+1}$ interchange coordinates $k$ and $k+1$, and set $\tau_1=1$ and $\tau_k=r_{12}r_{23}\cdots r_{k-1,k}$, with the rightmost map applied first. For $c\in C^*(Q_d,\partial_x Q_d;\Z/2)$, define the following cochains on $\mathcal D_d$, taking all sums in $\Z/2$:
\begin{equation}\label{eq:general-chamber-inputs}
 a_k=\tau_k^*c,\qquad b_d=a_1,\qquad b_{d-k}=a_k+a_{k+1}\quad(1\leq k<d).
\end{equation}
Each pair $a_k,a_{k+1}$ agrees on the mirror $x_k=x_{k+1}$, so its binary sum vanishes on $\mathcal F_{d-k}$. Also $b_d=0$ on $\mathcal F_d=\{x_1=0\}$, and $\sum_i b_i=a_d$ vanishes on $\mathcal F_0=\{x_d=L\}$ because $\tau_d(\mathcal F_0)\subset\{x_1=L\}$. Thus~\Cref{eq:general-chamber-inputs} is a cochain map into $\mathcal A_d^*(\mathcal D_d)$.

\begin{proposition}\label{prop:relative-roots-cubes-tori}
Suppose the simplex phase in \Cref{eq:simplex-relative-root-phase} is a sum of terms of bounded size, with each cochain coordinate occurring in boundedly many terms, uniformly in $L$ for fixed $d$. Then
\begin{equation}\label{eq:general-cube-phase}
 \Phi_{Q_d}(c)=\Phi_{{\mathcal D}_d}(b(c))
 =\iota\!\left(\int_{\eta_1}c\right)\quad\text{on relative cocycles},
\end{equation}
where $\eta_1$ is the first coordinate segment from $0$ to $L$. It gives a bounded-depth logical phase gate $\op{R}(2\pi/2^d)$ on the relative cube code and, after a circle fold, an individually addressable $\op{R}(2\pi/2^d)$ on each of the $d$ logical qubits of the toric code on $T_L^d$ with qubits on edges.
\end{proposition}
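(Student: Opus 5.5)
The plan is to reduce everything to the simplex statement \Cref{eq:simplex-relative-root-phase} by precomposition, exactly as in the $d=3$ case of \Cref{sec:relative-t-cube,sec:relative-t-three-torus}. First I will verify that $c\mapsto b(c)$ of \Cref{eq:general-chamber-inputs} is a cochain map $C^*(Q_d,\partial_xQ_d;\Z/2)\to\mathcal A_d^*(\mathcal D_d)$. Each $\tau_k$ preserves the coordinate-order triangulation and its simplex orders, so $\tau_k^*$ commutes with $d$, and so does restriction to $\mathcal D_d$; the boundary conditions were already checked in the text preceding the statement. Then \Cref{lem:precomposition-by-cochain-maps} shows that $\Phi_{Q_d}=\Phi_{\mathcal D_d}\circ b$ descends to cohomology. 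Moreover, on relative cocycles it equals $\iota(\lambda(b(c)))=\iota(\int_\eta b_1(c))$.

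The key computation is $\int_\eta b_1=\int_\eta(a_{d-1}+a_d)=\int_{\tau_{d-1}\eta}c+\int_{\tau_d\eta}c$. Here $\eta=[p_0p_1]$ is the segment from $0$ to $(0,\dots,0,L)$ along the $x_d$-axis. The map $\tau_k$ moves coordinate $k$ into position $1$ and shifts coordinates $1,\dots,k-1$ up by one. Hence $\tau_d\eta$ is the $x_1$-axis segment $\eta_1$, while $\tau_{d-1}\eta$ stays in the $x_d$-axis. Because $x_1=0$ along that segment, it lies in $\partial_xQ_d$, where $c$ vanishes. This gives \Cref{eq:general-cube-phase}; I will double-check the composition convention, since a misreading here could swap which segment lands on the boundary.

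Independently, $H^1(Q_d,\partial_xQ_d;\Z/2)\cong\Z/2$, detected by $\int_{\eta_1}$, by the same argument used for $d=3$: cocycles are $dt$ with $t$ constant on each face. The logical gate on the relative cube code is therefore $\op R(2\pi/2^d)$.

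For bounded depth, each $\tau_k^*$ and the restriction have weight one in both directions, and the $d$ maps $\tau_k$ are fixed, so $b$ has bounded weight uniformly in $L$. Together with the hypothesis on $\Phi_{\mathcal D_d}$, the circuit statement of \Cref{lem:precomposition-by-cochain-maps} applies.

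For the torus, I will precompose with $F_i=p^*(1+s_i^*)$, after a coordinate permutation that moves direction $i$ to direction $1$. By \Cref{eq:torus-relative-cube} this sends $\sum_j\epsilon_j\bar\gamma_j$ to $\epsilon_i$ times the generator. The fold and quotient have bounded weight (each cell has at most two preimages), so composing gives $\iota(\epsilon_i)$ with bounded depth, that is, an addressable $\op R(2\pi/2^d)$.

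The main obstacle is the careful verification of the boundary conditions and of the image of $\eta$ under the $\tau_k$, specifically that $\tau_d(\mathcal F_0)\subset\{x_1=L\}$ and that the mirror faces are matched correctly. I will check this by tracking vertices $p_i\mapsto\tau_k p_i$ explicitly.
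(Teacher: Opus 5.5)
Your proposal is correct and matches the paper's proof. The paper's key step is the same as yours: $b_1=a_{d-1}+a_d$, with $\tau_d\eta=\eta_1$ and $\tau_{d-1}\eta\subset\partial_xQ_d$, so $\lambda(b(c))=\int_{\eta_1}c$; it then precomposes with the folded and relabelled maps $F_i$ and finishes with the bounded-weight argument. Your additional checks fill in details the paper leaves implicit: that $b$ is a cochain map, the explicit formula $\tau_k(x)=(x_k,x_1,\ldots,x_{k-1},x_{k+1},\ldots,x_d)$, and $H^1(Q_d,\partial_xQ_d;\Z/2)\cong\Z/2$.
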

\begin{proof}
The map $\tau_d$ sends $\eta$ to $\eta_1$, while $\tau_{d-1}(\eta)\subset \partial_x Q_d$. Consequently $\lambda(b(c))=\int_{\eta_1}c$, proving~\Cref{eq:general-cube-phase}. On the torus, fold circle $i$ and relabel coordinates to obtain $F_i:C^*(T_L^d;\Z/2)\to C^*(Q_d,\partial_x Q_d;\Z/2)$ as earlier. For $[a]=\sum_j\epsilon_j\bar\gamma_j$,
\begin{equation}\label{eq:general-torus-root}
 Q_i^{(d)}(a)=\Phi_{Q_d}(F_i a)=\iota(\epsilon_i),\qquad
 U_{Q_i^{(d)}}\big|_{\mathcal C}
 =\overline{\op{R}}_i(2\pi/2^d)\otimes\bigotimes_{j\ne i}\mathrm{id}_j.
\end{equation}
Every map used here has bounded weight in both directions for fixed $d$.
\end{proof}
Thus, this construction gives an addressable $\op{R}(2\pi/2^d)$ in $d$ dimensions.

\paragraph{Products with a circle.} Let $K$ be a finite connected regular cubical complex containing a square, and let $K_L$ be its subdivision into cubes of side $1/L$, for $L\geq3$. A chosen square gives a cellular inclusion $j:S_L^1\times[0,L]^2\hookrightarrow S_L^1\times K_L$. Restrict to this product subcomplex, fold its circle, and pull back to $Q$:
\begin{equation}\label{eq:relative-t-circle-product}
 F=p^*(1+s^*)j^*,\qquad \Phi_Q(Fa)=\iota\!\left(\int_{S^1\times\{v\}}a\right)\quad\text{on cocycles}.
\end{equation}
Here $p:Q\to S_L^1\times[0,L]^2$ identifies the $x$-faces, and $v$ is any vertex of the chosen square. By the K\"unneth splitting in \Cref{sec:circle-fold-projections}, this is $T$ on the circle qubit and the identity on the qubits from $K$. It applies to $S^1\times S^2$, $S^1\times\Sigma_g$, and $S^1\times S^3$, with respectively $1$, $1+2g$, and $1$ logical qubits. The depth is independent of $L$ and, for the surface example, of $g$, because only one square is used.

\paragraph{More general manifolds.} The same construction applies around an embedded circle in a manifold.
\begin{theorem}\label{thm:relative-t-manifold}
Let $M$ be a closed connected oriented smooth manifold of dimension $n\geq3$, and let $0\neq\zeta\in H_1(M;\Z/2)$. For $2\leq d\leq n$, assume a simplex phase as in \Cref{prop:relative-roots-cubes-tori}. There is a compatible cellulation for which the code with qubits on edges admits a bounded-depth logical $\op{R}(2\pi/2^d)$ on the bit $\langle[a],\zeta\rangle$ and the identity on the remaining bits in an adapted basis.
\end{theorem}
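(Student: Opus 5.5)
The plan is to reduce to the "products with a circle" construction just above, replacing the global product structure with a tubular neighbourhood of an embedded circle.

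\emph{Step 1: the circle and its neighbourhood.} Since $n\geq3$, every class $\zeta\in H_1(M;\Z/2)$ is represented by a smoothly embedded circle $\gamma\subset M$ (general position). Orientability of $M$ only makes the normal bundle of $\gamma$ orientable when $w_1$ restricted to $\gamma$ vanishes; for $\Z/2$ purposes I would use instead the $w_1$-free route: choose $\gamma$ and a closed tubular neighbourhood $N\cong S^1\times D^{n-1}$. Since $M$ is oriented, $w_1(M)|_\gamma=0$, and the normal bundle is trivial because orientable bundles over $S^1$ are trivial. Pick a small $2$-disc $D^2\subset D^{n-1}$ so that $S^1\times D^2\cong S^1\times[0,1]^2$ embeds in $N\subset M$.

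\emph{Step 2: cellulation.} Choose a cellulation (regular cubical, or a triangulation refining it) of $M$ whose restriction to this embedded solid torus is the standard product $S^1_L\times[0,L]^2$ subdivided into unit cubes, with the triangulation on it being the coordinate-order one of \Cref{sec:relative-t-cube}. This is possible by first cellulating $N$ as a product and extending to $M\setminus\operatorname{int}N$ relative to $\partial N$ (smooth manifolds admit triangulations extending a given one on a collared boundary); then subdivide as needed. The code is the edge code of this complex.

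\emph{Step 3: the operation.} Define exactly as in \Cref{eq:relative-t-circle-product}: $F=p^*(1+s^*)j^*$, with $j$ the inclusion of the solid torus, $s$ the circle fold, $p\colon Q\to S^1_L\times[0,L]^2$; for $d<3$ or $d>3$ replace $Q$ by $Q_d$ and, since only two transverse coordinates exist in the chosen $S^1\times D^2$ when $d=3$, in general use $S^1\times[0,L]^{d-1}\subset N$, which fits because $d-1\leq n-1$. Then the phase is $\Phi_{Q_d}(Fa)$. By \Cref{lem:precomposition-by-cochain-maps} and \Cref{prop:relative-roots-cubes-tori}, it descends to cohomology, has bounded depth (all maps bounded weight, the simplex phase bounded by hypothesis), and on cocycles equals $\iota(\int_{\eta_1}(1+s^*)j^*a)=\iota(\int_{S^1\times\{v\}}a)=\iota(\langle[a],\zeta\rangle)$, using that on $S^1\times[0,L]^{d-1}$ (which deformation retracts to $S^1$) the fold $1+s^*$ kills the degree-zero Künneth part and preserves the circle part.

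\emph{Step 4: adapted basis.} The linear form $[a]\mapsto\langle[a],\zeta\rangle$ on $H^1(M;\Z/2)$ is nonzero since $\zeta\neq0$ (universal coefficients over a field). Choose a basis $\alpha_1,\dots,\alpha_k$ of $H^1$ with $\langle\alpha_1,\zeta\rangle=1$ and $\langle\alpha_i,\zeta\rangle=0$ for $i\geq2$. In this basis the phase is $\iota(\epsilon_1)$, i.e.\ $\overline{\op R}_1(2\pi/2^d)$ and identity elsewhere. The main obstacle I expect is Step 2, arranging a cellulation that is globally valid, locally ordered as required for the cochain operations, and standard on the product region; I would handle it by triangulating $M\setminus\operatorname{int}(S^1\times[0,L]^{d-1})$ relative to its boundary and, if local orders clash, passing to the barycentric subdivision away from the product region, which does not affect the formula's bounded weight.
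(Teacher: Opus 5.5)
Your proposal is correct and follows the paper's proof essentially step for step. You represent $\zeta$ by an embedded circle, use triviality of its orientable normal bundle to fit $S^1\times[0,L]^{d-1}$ into a tubular neighbourhood, take a product cellulation there extended over $M$, and evaluate $\Phi_{Q_d}$ on $p^*(1+s^*)j^*a$, whose value on cocycles is $\iota(\langle[a],\zeta\rangle)$. Along the way you spell out details the paper leaves implicit (general position for $n\geq3$, the $w_1$ argument for the normal bundle, and the adapted basis coming from the nonzero functional $\langle\,\cdot\,,\zeta\rangle$), and these are all fine.
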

\begin{proof}
Represent $\zeta$ by an embedded circle. Its oriented normal bundle is trivial, so a tubular neighborhood contains $S^1\times {\mathcal D}^{d-1}$. Choose a compatible product cellulation and extend it over $M$. Restriction to this subspace, followed by its circle fold, gives a relative input to $\Phi_{Q_d}$ with logical value $\iota(\langle[a],\zeta\rangle)$. Product subdivisions preserve the required bounds.
\end{proof}
For example, this will give $T$ gates on $S^1\times S^2$, $S^1\times\Sigma_g$, and $S^1\times S^3$.

\subsection{Tricycle and tile codes}
\label{sec:relative-t-beyond-manifolds}

We now discuss more general three-dimensional codes with
translation-invariant bulk checks, including trivariate tricycle (TT) codes
with periodic boundary conditions and three-dimensional tile codes with
open boundary conditions. Both can be described using Koszul complexes.

Let $R$ be a commutative ring and $f,g,h\in R$. Their Koszul complex is
\begin{equation}\label{eq:relative-koszul-complex}
 K^\bullet(f,g,h)=
 \left[
 R\xrightarrow{\left(\begin{smallmatrix}f\\g\\h\end{smallmatrix}\right)}
 R^3\xrightarrow{\left(\begin{smallmatrix}
 0&h&-g\\
 -h&0&f\\
 g&-f&0
 \end{smallmatrix}\right)}
 R^3\xrightarrow{(f\;\;g\;\;h)}R
 \right],
\end{equation}
placed in degrees $-3,-2,-1,0$. Let $X$ be the common zero locus of
$f,g,h$. If they form a regular sequence, the associated complex of
sheaves resolves the structure sheaf $\mathcal O_X$. For the corresponding
infinite-lattice codes, this exactness gives topological order, as discussed
in~\cite{Haah2013Modules,BreuckmannChoeEberhardtPereiraSteffan2026}.
More generally, the Koszul complex models the functions on the derived intersection of the zero sets of $f$, $g$ and $h$.

For polynomials in $x,y,z$, a point $P=(p_x,p_y,p_z)\in X$ gives
\begin{equation*}
 (f,g,h)\subseteq(x-p_x,y-p_y,z-p_z)
\end{equation*}
after extending the coefficient field if necessary. We can therefore
choose a polynomial matrix $M_P$ with
\begin{equation}\label{eq:relative-koszul-comparison}
 (f,g,h)=(x-p_x,y-p_y,z-p_z)M_P.
\end{equation}
Its exterior powers define a map
\begin{equation*}
 K^\bullet(f,g,h)\longrightarrow
 K^\bullet(x-p_x,y-p_y,z-p_z),
\end{equation*}
whose component in degree $-j$ is $\bigwedge^j M_P$. On $H^0$, this is
restriction of functions to $P$,
\begin{equation*}
 R/(f,g,h)\longrightarrow R/(x-p_x,y-p_y,z-p_z).
\end{equation*}
At $P=(1,1,1)$, the target is $K^\bullet(x-1,y-1,z-1)$. With the periodic
or relative boundary conditions used below, this gives the reference
torus or relative cube complex, respectively. Realizing these comparisons
as bounded-weight cochain maps allows us to pull back the gates
constructed above.

In particular, suppose that a binary cochain map
\begin{equation*}
 F:C^\bullet\longrightarrow C^\bullet(Q,\partial_xQ;\F_2)
\end{equation*}
satisfies $\int_{\eta_x}Fa=\epsilon([a])$ on cocycles, where $\epsilon$
is a logical coordinate. Then \Cref{eq:relative-cube-t-action} gives
\begin{equation*}
 \Phi_Q(Fa)=\iota\bigl(\epsilon([a])\bigr),
\end{equation*}
which implements $T$ on that logical qubit. Bounded row and column
weights of $F$ give bounded circuit depth by
\Cref{lem:precomposition-by-cochain-maps}.

\paragraph{Trivariate tricycle codes.}
\label{sec:addressable-tt-example}
We return to the $[[108,15,(12,6)]]$ magic-tricycle code from
\Cref{sec:folded-nontopological-codes}, with defining polynomials $a,b,c$ over
\begin{equation*}
 R=\F_2[x,y,z]/(x^3-1,y^3-1,z^4-1)
 =\F_2[\mathcal G],
 \qquad
 \mathcal G=C_3\times C_3\times C_4.
\end{equation*}
Here we place the Koszul complex in degrees $0,1,2,3$, with qubits in
degree one, and write $u=(u_a,u_b,u_c)\in C^1$. The maximum $X$- and
$Z$-check weights are $12$ and $8$.

We construct five binary cochain maps
\begin{equation*}
 \Psi_1,\dots,\Psi_5:C^\bullet\longrightarrow
 C^\bullet(\mathcal G;\F_2)
\end{equation*}
to the cubic toric complex on the same group. Each map reads three
logical coordinates through the coordinate-circle integrals. Together
these give all fifteen logical coordinates.

Let $\omega^2+\omega+1=0$ and $t=1+z$. The Chinese remainder decomposition
in $x,y$ splits $C$ into five summands. Only the summands obtained by
evaluating at
\begin{equation*}
 (x,y)=(1,1),\qquad (x,y)=(\omega,\omega)
\end{equation*}
have nonzero cohomology. Denote these evaluations by $\rho_1,\rho_\omega$,
and put
\begin{equation*}
 S_1=\F_2[t]/(t^4),\qquad S_\omega=\F_4[t]/(t^4).
\end{equation*}
In the first summand, all three defining polynomials become $t^3$, and
replacing $(a,b,c)$ by $(a+b,a+c,a)$ gives $(0,0,t^3)$. In the second,
replacing them by $(a+Ac,b+Bc,Uc)$ gives $(0,0,t)$, where
\begin{equation}\label{eq:tt-field-change}
 \begin{aligned}
 U&=1+\omega t^2,\\
 A&=t+\omega^2t^2+\omega t^3,\\
 B&=\omega+\omega t+\omega^2t^3.
 \end{aligned}
\end{equation}
Consequently,
\begin{equation}\label{eq:tt-logical-sectors}
 H^1(C)\cong\bigl[(t)\oplus(t)\oplus S_1/(t^3)\bigr]\oplus
 \bigl[\F_4t^3\oplus\F_4t^3\oplus\F_4\bigr],
\end{equation}
of binary dimension $9+6=15$.

To extract the first nine coordinates, write $w=\sum_{j=0}^3w_jz^j$ and
$|w|=\sum_jw_j$. For $k=1,2,3$, define
\begin{equation}\label{eq:tt-layer-maps}
 \begin{aligned}
 \beta_k^0(w)&=|w|(1+z+\cdots+z^{k-1})+w_kt^3,\\
 \beta_k^1(w)&=w_0+w_kz^k.
 \end{aligned}
\end{equation}
The identity
\begin{equation*}
 t\beta_k^0(w)=\beta_k^1(t^3w)
\end{equation*}
makes these cochain maps from the two-term complex
$S_1\xrightarrow{t^3}S_1$ to $S_1\xrightarrow{t}S_1$. On the kernel $(t)$,
they read the coordinates $w_k$. On the cokernel $S_1/(t^3)$, their outputs $w_0+w_k$. These maps are binary-linear but not
$S_1$-linear, allowing them to distinguish the nilpotent components.

Let $
 e=(1+x+x^2)(1+y+y^2)$
and let $\theta_0,\theta_1:\F_4\to\F_2$ be the coordinate functionals
for the basis $1,\omega$, applied coefficientwise. Since
$(1+x)e=(1+y)e=0$, the required maps in degree one are
\begin{equation}\label{eq:tt-five-maps}
\begin{aligned}
 \Psi_k(u)&=e\bigl(\beta_k^0(v_a+v_b),\ \beta_k^0(v_a+v_c),\ \beta_k^1(v_a)\bigr),
 && v=\rho_1(u),\quad k=1,2,3,\\
 \Psi_{4+s}(u)&=e\,\theta_s\bigl(v_a+Av_c,\ v_b+Bv_c,\ Uv_c\bigr),
 && v=\rho_\omega(u),\quad s=0,1,
\end{aligned}
\end{equation}
The three output components correspond to the $x$-, $y$- and $z$-edges
of the torus.

Writing $\zeta_r$ for its coordinate circles, the fifteen integrals
\begin{equation}\label{eq:tt-logical-coordinates}
 \epsilon_{k,r}([u])=\int_{\zeta_r}\Psi_ku,
 \qquad
 1\leq k\leq5,\quad r\in\{x,y,z\},
\end{equation}
form a basis of $\operatorname{Hom}(H^1(C),\F_2)$. For the first three
maps, this follows from the kernel and cokernel coordinates above. The
remaining two maps extract the two binary coordinates of each of the
three $\F_4$-components.

To apply the torus phase, let $q:T_4^3\to\mathcal G$ collapse one edge in
each of the first two circles. Its pullback preserves the coordinate-circle
integrals. Let $\kappa_r$ exchange the $x$- and $r$-circles of $T_4^3$.
Then \Cref{eq:relative-t-torus-action} gives
\begin{equation}\label{eq:tt-addressable-t}
 \Phi_{k,r}(u)=\Phi_{T^3}\bigl(\kappa_r^*q^*\Psi_ku\bigr)
 =\iota\bigl(\epsilon_{k,r}([u])\bigr)
 \qquad(u\in Z^1(C)).
\end{equation}
Thus the code admits individually addressable $T$ gates on all fifteen
logical qubits within one block. Summing the binary maps
$\kappa_r^*q^*\Psi_k$ before applying $\Phi_{T^3}$ gives $T$ on the
corresponding logical parity, and therefore on each qubit of any
prescribed logical basis.

\paragraph{Three-dimensional tile codes.}
For open boundary conditions, consider the three polynomials
\begin{equation}\label{eq:relative-t-tile-polynomials}
 \begin{aligned}
 f&=y+yz+xz+x^2y,\\
 g&=1+xz+y^2z+xy^2z,\\
 h&=x+z^2+xy+xyz^2.
 \end{aligned}
\end{equation}
Homogenize them to multidegrees $(2,1,1)$, $(1,2,1)$, and $(1,1,2)$ on
$Y=(\mathbb P^1_{\F_2})^3$, and set
\begin{equation}\label{eq:relative-t-tile-complex}
 C^q=H^2\!\left(Y,K^{q-3}(f,g,h)\otimes\mathcal O(10,-3,-3)\right).
\end{equation}
In the monomial bases,
\begin{equation*}
 \dim C^1=545,\qquad
 \operatorname{rank}d^0=252,\qquad
 \operatorname{rank}d^1=277.
\end{equation*}
This gives a $[[545,16]]$ code with maximum $X$- and $Z$-check weights
$12$ and $8$. A numerical search found logical operators of weight $19$
on both sides, giving $d_X,d_Z\leq19$.

With the same twist, the tile complex for $x-1,y-1,z-1$ is the cochain
complex of a cuboid relative to its two opposite $x$-faces. Its $x$-factor
is an eleven-edge interval relative to both endpoints, while the other
factors are ordinary two-edge intervals. Their tensor product therefore
has the required relative boundary conditions.

Let $E$ be a finite extension of $\F_2$. For a common zero
$P\in(E^\times)^3$, the comparison defined by $M_P$
extends to the homogenized Koszul complexes. Applying the same twist
and taking $H^2$ gives a cochain map respecting these boundary
conditions. Normalizing the nonzero coordinates of $P$ and pulling
back from the cuboid to the reference cube gives
\begin{equation*}
 F_P:C^\bullet\longrightarrow C^\bullet(Q,\partial_xQ;E).
\end{equation*}
On cohomology, this reads the component associated with $P$.

For this example,
\begin{equation}\label{eq:relative-t-tile-algebra}
 \F_2[x,y,z]/(f,g,h)
 \cong\F_2\times\F_{16}\times\F_{2048}.
\end{equation}
The factors correspond to three Frobenius orbits of common zeros, of
sizes $1,4,11$. Choose one point in each orbit and a binary basis of its
residue field. Applying the dual coordinate functionals coefficientwise
to $F_P$ gives sixteen binary cochain maps
\begin{equation*}
 F_i:C^\bullet\longrightarrow C^\bullet(Q,\partial_xQ;\F_2).
\end{equation*}
Their integrals define logical coordinates $\epsilon_i$, and
\begin{equation}\label{eq:relative-t-tile-action}
 \Phi_Q(F_i a)
 =\iota\!\left(\int_{\eta_x}F_i a\right)
 =\iota\bigl(\epsilon_i([a])\bigr)
 \qquad(a\in Z^1(C)).
\end{equation}
Each $\Phi_Q\circ F_i$ therefore implements $T$ on one logical qubit and
the identity on the other fifteen. The maps have bounded weight in
both directions for fixed defining polynomials and bounded cuboid
aspect ratios, giving bounded-depth circuits.

\bibliographystyle{alphaurl}
\bibliography{references}

@article{Steenrod1947,
  author  = {Steenrod, Norman E.},
  title   = {Products of Cocycles and Extensions of Mappings},
  journal = {Annals of Mathematics},
  volume  = {48},
  number  = {2},
  pages   = {290--320},
  year    = {1947},
  doi     = {10.2307/1969172}
}

@article{Steenrod1953,
  author  = {Steenrod, Norman E.},
  title   = {Cyclic Reduced Powers of Cohomology Classes},
  journal = {Proceedings of the National Academy of Sciences of the United States of America},
  volume  = {39},
  number  = {3},
  pages   = {217--223},
  year    = {1953},
  doi     = {10.1073/pnas.39.3.217}
}

@article{Steenrod1957,
  author  = {Steenrod, Norman E.},
  title   = {Cohomology Operations Derived from the Symmetric Group},
  journal = {Commentarii Mathematici Helvetici},
  volume  = {31},
  pages   = {195--218},
  year    = {1957},
  doi     = {10.1007/BF02564359}
}

@article{McClureSmith2003,
  author  = {McClure, James E. and Smith, Jeffrey H.},
  title   = {Multivariable Cochain Operations and Little {$n$}-Cubes},
  journal = {Journal of the American Mathematical Society},
  volume  = {16},
  number  = {3},
  pages   = {681--704},
  year    = {2003},
  doi     = {10.1090/S0894-0347-03-00419-3}
}

@incollection{May1970,
  author    = {May, J. Peter},
  title     = {A General Algebraic Approach to {S}teenrod Operations},
  editor    = {Peterson, F. P.},
  booktitle = {The Steenrod Algebra and Its Applications},
  series    = {Lecture Notes in Mathematics},
  volume    = {168},
  pages     = {153--231},
  publisher = {Springer},
  address   = {Berlin},
  year      = {1970},
  doi       = {10.1007/BFb0058524}
}

@article{KMM2021,
  author  = {Kaufmann, Ralph M. and Medina-Mardones, Anibal M.},
  title   = {Cochain Level {M}ay--{S}teenrod Operations},
  journal = {Forum Mathematicum},
  volume  = {33},
  number  = {6},
  pages   = {1507--1526},
  year    = {2021},
  doi     = {10.1515/forum-2020-0296}
}

@book{May1972,
  author    = {May, J. Peter},
  title     = {The Geometry of Iterated Loop Spaces},
  series    = {Lecture Notes in Mathematics},
  volume    = {271},
  publisher = {Springer},
  address   = {Berlin},
  year      = {1972},
  doi       = {10.1007/BFb0067491}
}

@article{BergerFresse2004,
  author  = {Berger, Clemens and Fresse, Benoit},
  title   = {Combinatorial Operad Actions on Cochains},
  journal = {Mathematical Proceedings of the Cambridge Philosophical Society},
  volume  = {137},
  number  = {1},
  pages   = {135--174},
  year    = {2004},
  doi     = {10.1017/S0305004103007138}
}

@book{Thomas1957,
  author    = {Thomas, Emery},
  title     = {The Generalized {Pontrjagin} Cohomology Operations and Rings with Divided Powers},
  series    = {Memoirs of the American Mathematical Society},
  number    = {27},
  publisher = {American Mathematical Society},
  address   = {Providence, RI},
  year      = {1957},
  doi       = {10.1090/memo/0027}
}

@article{GottesmanChuang99,
  author  = {Gottesman, Daniel and Chuang, Isaac L.},
  title   = {Demonstrating the Viability of Universal Quantum Computation Using Teleportation and Single-Qubit Operations},
  journal = {Nature},
  volume  = {402},
  pages   = {390--393},
  year    = {1999},
  doi     = {10.1038/46503}
}

@article{CuiGottesmanKrishna2017,
  author  = {Cui, Shawn X. and Gottesman, Daniel and Krishna, Anirudh},
  title   = {Diagonal Gates in the {Clifford} Hierarchy},
  journal = {Physical Review A},
  volume  = {95},
  number  = {1},
  pages   = {012329},
  year    = {2017},
  doi     = {10.1103/PhysRevA.95.012329}
}

@article{CupsAndGates,
  author        = {Breuckmann, Nikolas P. and Davydova, Margarita and Eberhardt, Jens N. and Tantivasadakarn, Nathanan},
  title         = {{Cups and Gates I}: Cohomology Invariants and Logical Quantum Operations},
  journal       = {Communications in Mathematical Physics},
  volume        = {407},
  pages         = {86},
  year          = {2026},
  doi           = {10.1007/s00220-026-05570-z},
  eprint        = {2410.16250},
  archivePrefix = {arXiv},
  primaryClass  = {quant-ph}
}

@article{balancedproductcodes,
  author  = {Breuckmann, Nikolas P. and Eberhardt, Jens N.},
  title   = {Balanced Product Quantum Codes},
  journal = {IEEE Transactions on Information Theory},
  volume  = {67},
  number  = {10},
  pages   = {6653--6674},
  year    = {2021},
  doi     = {10.1109/TIT.2021.3097347}
}

@article{BreuckmannEberhardt2021,
  author        = {Breuckmann, Nikolas P. and Eberhardt, Jens Niklas},
  title         = {Quantum Low-Density Parity-Check Codes},
  journal       = {PRX Quantum},
  volume        = {2},
  number        = {4},
  pages         = {040101},
  year          = {2021},
  doi           = {10.1103/PRXQuantum.2.040101},
  eprint        = {2103.06309},
  archivePrefix = {arXiv},
  primaryClass  = {quant-ph}
}

@article{foldtransversalgates,
  author    = {Breuckmann, Nikolas P. and Burton, Simon},
  title     = {Fold-Transversal {Clifford} Gates for Quantum Codes},
  journal   = {Quantum},
  volume    = {8},
  pages     = {1372},
  year      = {2024},
  doi       = {10.22331/q-2024-06-13-1372}
}

@article{LinPryadko2024,
  author  = {Lin, Hsiang-Ku and Pryadko, Leonid P.},
  title   = {Quantum Two-Block Group Algebra Codes},
  journal = {Physical Review A},
  volume  = {109},
  number  = {2},
  pages   = {022407},
  year    = {2024},
  doi     = {10.1103/PhysRevA.109.022407}
}

@article{EberhardtSteffan2024,
  author        = {Eberhardt, Jens Niklas and Steffan, Vincent},
  title         = {Logical Operators and Fold-Transversal Gates of Bivariate Bicycle Codes},
  journal       = {IEEE Transactions on Information Theory},
  volume        = {71},
  number        = {2},
  pages         = {1140--1152},
  year          = {2025},
  doi           = {10.1109/TIT.2024.3521638},
  eprint        = {2407.03973},
  archivePrefix = {arXiv},
  primaryClass  = {quant-ph}
}

@article{Haah2013Modules,
  author        = {Haah, Jeongwan},
  title         = {Commuting {Pauli} Hamiltonians as Maps Between Free Modules},
  journal       = {Communications in Mathematical Physics},
  volume        = {324},
  number        = {2},
  pages         = {351--399},
  year          = {2013},
  doi           = {10.1007/s00220-013-1810-2},
  eprint        = {1204.1063},
  archivePrefix = {arXiv},
  primaryClass  = {quant-ph}
}

@article{BreuckmannChoeEberhardtPereiraSteffan2026,
  author        = {Breuckmann, Nikolas P. and Choe, Shin Ho and Eberhardt, Jens Niklas and Pereira, Francisco Revson Fernandes and Steffan, Vincent},
  title         = {Logical Operators and Derived Automorphisms of Tile Codes},
  journal       = {Communications in Mathematical Physics},
  volume        = {407},
  pages         = {205},
  year          = {2026},
  doi           = {10.1007/s00220-026-05672-8},
  eprint        = {2511.14589},
  archivePrefix = {arXiv},
  primaryClass  = {quant-ph}
}

@misc{TiewBreuckmann2026,
  author        = {Tiew, Ryan and Breuckmann, Nikolas P.},
  title         = {Copy-Cup Gates in Tensor Products of Group Algebra Codes},
  year          = {2026},
  eprint        = {2602.23307},
  archivePrefix = {arXiv},
  primaryClass  = {quant-ph}
}

@misc{JacobMcLauchlanBrowne2025,
  author        = {Jacob, Abraham and McLauchlan, Campbell and Browne, Dan E.},
  title         = {Single-Shot Decoding and Fault-Tolerant Gates with Trivariate Tricycle Codes},
  year          = {2025},
  eprint        = {2508.08191},
  archivePrefix = {arXiv},
  primaryClass  = {quant-ph}
}

@article{MenonMagicTricycles2026,
  author  = {Menon, Varun and Bonilla Ataides, J. Pablo and Mehta, Rohan and Gu, Andi and Tan, Daniel Bochen and Lukin, Mikhail D.},
  title   = {Magic Tricycles: Efficient Magic-State Generation with Finite Block-Length Quantum {LDPC} Codes},
  journal = {Physical Review X},
  volume  = {16},
  number  = {2},
  pages   = {021014},
  year    = {2026},
  doi     = {10.1103/ghhp-cytl},
  eprint  = {2508.10714},
  archivePrefix = {arXiv},
  primaryClass  = {quant-ph}
}

@misc{GolowichTamoZhu2026,
  author        = {Golowich, Louis and Tamo, Itzhak and Zhu, Guanyu},
  title         = {Improved Transversal Non-{Clifford} Gates from Cup Products},
  year          = {2026},
  eprint        = {2609.27801},
  archivePrefix = {arXiv},
  primaryClass  = {quant-ph},
  note          = {ECCC Report TR26-160}
}

@article{BravyiKoenig2013,
  author  = {Bravyi, Sergey and K{\"o}nig, Robert},
  title   = {Classification of Topologically Protected Gates for Local Stabilizer Codes},
  journal = {Physical Review Letters},
  volume  = {110},
  number  = {17},
  pages   = {170503},
  year    = {2013},
  doi     = {10.1103/PhysRevLett.110.170503}
}

@misc{HsinKobayashiZhu2024,
  author        = {Hsin, Po-Shen and Kobayashi, Ryohei and Zhu, Guanyu},
  title         = {Classifying Logical Gates in Quantum Codes via Cohomology Operations and Symmetry},
  year          = {2024},
  eprint        = {2411.15848},
  archivePrefix = {arXiv},
  primaryClass  = {quant-ph}
}

@article{Davydova_2026,
   title={Universal Fault-Tolerant Quantum Computation in 2D without Getting Tied in Knots},
   volume={16},
   ISSN={2160-3308},
   url={http://dx.doi.org/10.1103/vrty-qs5h},
   DOI={10.1103/vrty-qs5h},
   number={3},
   journal={Physical Review X},
   publisher={American Physical Society (APS)},
   author={Davydova, Margarita and Bauer, Andreas and Magdalena de la Fuente, Julio C. and Webster, Mark and Williamson, Dominic J. and Brown, Benjamin J.},
   year={2026},
   month=July }

@article{Bauer_2025,
   title={Low-overhead non-Clifford fault-tolerant circuits for all non-chiral abelian topological phases},
   volume={9},
   ISSN={2521-327X},
   url={http://dx.doi.org/10.22331/q-2025-03-25-1673},
   DOI={10.22331/q-2025-03-25-1673},
   journal={Quantum},
   publisher={Verein zur Forderung des Open Access Publizierens in den Quantenwissenschaften},
   author={Bauer, Andreas},
   year={2025},
   month=Mar, pages={1673} }

@article{Bombin_2007,
   title={Topological Computation without Braiding},
   volume={98},
   ISSN={1079-7114},
   url={http://dx.doi.org/10.1103/PhysRevLett.98.160502},
   DOI={10.1103/physrevlett.98.160502},
   number={16},
   journal={Physical Review Letters},
   publisher={American Physical Society (APS)},
   author={Bombin, H. and Martin-Delgado, M. A.},
   year={2007},
   month=Apr }

@article{BravyiKitaev05,
  author  = {Bravyi, Sergey and Kitaev, Alexei},
  title   = {Universal quantum computation with ideal {Clifford} gates and noisy ancillas},
  journal = {Phys. Rev. A},
  volume  = {71},
  pages   = {022316},
  year    = {2005},
  doi     = {10.1103/PhysRevA.71.022316},
  eprint  = {quant-ph/0403025},
  archivePrefix = {arXiv}
}

@misc{KapustinThorngren14anomalies,
  author  = {Kapustin, Anton and Thorngren, Ryan},
  title   = {Anomalies of discrete symmetries in various dimensions and group cohomology},
  year    = {2014},
  eprint  = {1404.3230},
  archivePrefix = {arXiv},
  primaryClass  = {hep-th}
}

@article{GaiottoKapustin16,
  author  = {Gaiotto, Davide and Kapustin, Anton},
  title   = {Spin {TQFTs} and fermionic phases of matter},
  journal = {Int. J. Mod. Phys. A},
  volume  = {31},
  pages   = {1645044},
  year    = {2016},
  doi     = {10.1142/S0217751X16450445},
  eprint  = {1505.05856},
  archivePrefix = {arXiv}
}

@article{ChenKapustinRadicevic18,
  author  = {Chen, Yu-An and Kapustin, Anton and Radi{\v{c}}evi{\'c}, {\DJ}or{\dj}e},
  title   = {Exact bosonization in two spatial dimensions and a new class of lattice gauge theories},
  journal = {Ann. Phys.},
  volume  = {393},
  pages   = {234--253},
  year    = {2018},
  doi     = {10.1016/j.aop.2018.03.024},
  eprint  = {1711.00515},
  archivePrefix = {arXiv}
}

@article{KubicaYoshidaPastawski15,
  author  = {Kubica, Aleksander and Yoshida, Beni and Pastawski, Fernando},
  title   = {Unfolding the color code},
  journal = {New J. Phys.},
  volume  = {17},
  pages   = {083026},
  year    = {2015},
  doi     = {10.1088/1367-2630/17/8/083026},
  eprint  = {1503.02065},
  archivePrefix = {arXiv}
}

@misc{Zhu25topological,
  author  = {Zhu, Guanyu},
  title   = {A topological theory for {qLDPC}: non-{Clifford} gates and magic state fountain on homological product codes with constant rate and beyond the {$N^{1/3}$} distance barrier},
  year    = {2025},
  eprint  = {2501.19375},
  archivePrefix = {arXiv},
  primaryClass  = {quant-ph}
}

@misc{LiLiLiuNguyen25,
  author  = {Li, Yiming and Li, Zimu and Liu, Zi-Wen and Nguyen, Quynh T.},
  title   = {Poincar{\'e} duality and multiplicative structures on quantum codes},
  year    = {2025},
  eprint  = {2512.21922},
  archivePrefix = {arXiv},
  primaryClass  = {quant-ph}
}

@misc{LiShaoWeiLiLiu26,
  author  = {Li, Zimu and Shao, Yuguo and Wei, Fuchuan and Li, Yiming and Liu, Zi-Wen},
  title   = {Theory of (co)homological invariants on quantum {LDPC} codes},
  year    = {2026},
  eprint  = {2603.25831},
  archivePrefix = {arXiv},
  primaryClass  = {quant-ph}
}

@misc{SahayLongKhemani26,
  author  = {Sahay, Rahul and Long, David M. and Khemani, Vedika},
  title   = {Computing with {qLDPC} codes by climbing the chain map hierarchy},
  year    = {2026},
  eprint  = {2609.02999},
  archivePrefix = {arXiv},
  primaryClass  = {quant-ph}
}

@misc{FuZhengLiLiu25,
  author  = {Fu, Esther Xiaozhen and Zheng, Han and Li, Zimu and Liu, Zi-Wen},
  title   = {No-go theorems for logical gates on product quantum codes},
  year    = {2025},
  eprint  = {2507.16797},
  archivePrefix = {arXiv},
  primaryClass  = {quant-ph}
}

@misc{Zhu25systolic,
  author  = {Zhu, Guanyu},
  title   = {Transversal non-{Clifford} gates on {qLDPC} codes breaking the {$\sqrt{N}$} distance barrier and quantum-inspired geometry with {$\mathbb{Z}_2$} systolic freedom},
  year    = {2025},
  eprint  = {2507.15056},
  archivePrefix = {arXiv},
  primaryClass  = {quant-ph}
}

@misc{LiPreskillXu25,
  author  = {Li, Christine and Preskill, John and Xu, Qian},
  title   = {Transversal dimension jump for product {qLDPC} codes},
  year    = {2025},
  eprint  = {2510.07269},
  archivePrefix = {arXiv},
  primaryClass  = {quant-ph}
}

@misc{LiLiLiu26almostgood,
  author  = {Li, Yiming and Li, Zimu and Liu, Zi-Wen},
  title   = {Transversal non-{Clifford} gates on almost-good quantum {LDPC} and quantum locally testable codes},
  year    = {2026},
  eprint  = {2604.01874},
  archivePrefix = {arXiv},
  primaryClass  = {quant-ph}
}

@misc{LiLiLiu26good,
  author  = {Li, Yiming and Li, Zimu and Liu, Zi-Wen},
  title   = {Transversal non-{Clifford} gates on good quantum locally testable codes},
  year    = {2026},
  eprint  = {2609.26691},
  archivePrefix = {arXiv},
  primaryClass  = {quant-ph}
}

@article{DijkgraafWitten,
  author  = {Dijkgraaf, Robbert and Witten, Edward},
  title   = {Topological Gauge Theories and Group Cohomology},
  journal = {Communications in Mathematical Physics},
  volume  = {129},
  number  = {2},
  pages   = {393--429},
  year    = {1990},
  doi     = {10.1007/BF02096988}
}

@article{Chen13,
  author  = {Chen, Xie and Gu, Zheng-Cheng and Liu, Zheng-Xin and Wen, Xiao-Gang},
  title   = {Symmetry Protected Topological Orders and the Group Cohomology of Their Symmetry Group},
  journal = {Physical Review B},
  volume  = {87},
  pages   = {155114},
  year    = {2013},
  doi     = {10.1103/PhysRevB.87.155114},
  eprint  = {1106.4772},
  archivePrefix = {arXiv}
}

@incollection{kapustin2015higher,
  author    = {Kapustin, Anton and Thorngren, Ryan},
  title     = {Higher Symmetry and Gapped Phases of Gauge Theories},
  booktitle = {Algebra, Geometry, and Physics in the 21st Century: Kontsevich Festschrift},
  series    = {Progress in Mathematics},
  volume    = {324},
  pages     = {177--202},
  publisher = {Birkh{\"a}user},
  year      = {2017},
  doi       = {10.1007/978-3-319-59939-7_5},
  eprint    = {1309.4721},
  archivePrefix = {arXiv}
}

@article{Gaiottoetal2015,
  author  = {Gaiotto, Davide and Kapustin, Anton and Seiberg, Nathan and Willett, Brian},
  title   = {Generalized Global Symmetries},
  journal = {Journal of High Energy Physics},
  volume  = {2015},
  number  = {2},
  pages   = {172},
  year    = {2015},
  doi     = {10.1007/JHEP02(2015)172},
  eprint  = {1412.5148},
  archivePrefix = {arXiv}
}

@article{Yoshida15,
  author  = {Yoshida, Beni},
  title   = {Topological Color Code and Symmetry-Protected Topological Phases},
  journal = {Physical Review B},
  volume  = {91},
  pages   = {245131},
  year    = {2015},
  doi     = {10.1103/PhysRevB.91.245131}
}

@article{Yoshida16,
  author  = {Yoshida, Beni},
  title   = {Topological Phases with Generalized Global Symmetries},
  journal = {Physical Review B},
  volume  = {93},
  pages   = {155131},
  year    = {2016},
  doi     = {10.1103/PhysRevB.93.155131}
}

@article{Yoshida17,
  author  = {Yoshida, Beni},
  title   = {Gapped Boundaries, Group Cohomology and Fault-Tolerant Logical Gates},
  journal = {Annals of Physics},
  volume  = {377},
  pages   = {387--413},
  year    = {2017},
  doi     = {10.1016/j.aop.2016.12.014}
}

@article{Barkeshli23,
  author  = {Barkeshli, Maissam and Chen, Yu-An and Huang, Sheng-Jie and Kobayashi, Ryohei and Tantivasadakarn, Nathanan and Zhu, Guanyu},
  title   = {Codimension-2 Defects and Higher Symmetries in (3+1){D} Topological Phases},
  journal = {SciPost Physics},
  volume  = {14},
  pages   = {065},
  year    = {2023},
  doi     = {10.21468/SciPostPhys.14.4.065}
}

@article{Barkeshli24_1,
  author  = {Barkeshli, Maissam and Chen, Yu-An and Hsin, Po-Shen and Kobayashi, Ryohei},
  title   = {Higher-Group Symmetry in Finite Gauge Theory and Stabilizer Codes},
  journal = {SciPost Physics},
  volume  = {16},
  pages   = {089},
  year    = {2024},
  doi     = {10.21468/SciPostPhys.16.4.089}
}

@article{Barkeshli24_2,
  author  = {Barkeshli, Maissam and Hsin, Po-Shen and Kobayashi, Ryohei},
  title   = {Higher-Group Symmetry of (3+1){D} Fermionic {$\mathbb{Z}_2$} Gauge Theory: Logical {CCZ}, {CS}, and {T} Gates from Higher Symmetry},
  journal = {SciPost Physics},
  volume  = {16},
  pages   = {122},
  year    = {2024},
  doi     = {10.21468/SciPostPhys.16.5.122}
}

@article{Chen2023higher,
  author  = {Chen, Yu-An and Tata, Sri},
  title   = {Higher Cup Products on Hypercubic Lattices: Application to Lattice Models of Topological Phases},
  journal = {Journal of Mathematical Physics},
  volume  = {64},
  number  = {9},
  pages   = {091902},
  year    = {2023},
  doi     = {10.1063/5.0095189}
}

@article{Chen23,
  author  = {Chen, Xie and Dua, Arpit and Hsin, Po-Shen and Jian, Chao-Ming and Shirley, Wilbur and Xu, Cenke},
  title   = {Loops in 4+1d Topological Phases},
  journal = {SciPost Physics},
  volume  = {15},
  pages   = {001},
  year    = {2023},
  doi     = {10.21468/SciPostPhys.15.1.001}
}

@misc{Wang2023,
  author  = {Wang, Yi-Fei and Wang, Yixu and Chen, Yu-An and Zhang, Wenjun and Zhang, Tao and Hu, Jiazhong and Chen, Wenlan and Gu, Yingfei and Liu, Zi-Wen},
  title   = {Efficient Fault-Tolerant Implementations of Non-{Clifford} Gates with Reconfigurable Atom Arrays},
  year    = {2023},
  eprint  = {2312.09111},
  archivePrefix = {arXiv},
  primaryClass  = {quant-ph}
}

@misc{Zhu2023,
  author  = {Zhu, Guanyu and Sikander, Shehryar and Portnoy, Elia and Cross, Andrew W. and Brown, Benjamin J.},
  title   = {Non-{Clifford} and Parallelizable Fault-Tolerant Logical Gates on Constant and Almost-Constant Rate Homological Quantum {LDPC} Codes via Higher Symmetries},
  year    = {2023},
  eprint  = {2310.16982},
  archivePrefix = {arXiv},
  primaryClass  = {quant-ph}
}

@article{PaetznickReichardt2013,
  author  = {Paetznick, Adam and Reichardt, Ben W.},
  title   = {Universal Fault-Tolerant Quantum Computation with Only Transversal Gates and Error Correction},
  journal = {Physical Review Letters},
  volume  = {111},
  pages   = {090505},
  year    = {2013},
  doi     = {10.1103/PhysRevLett.111.090505},
  eprint  = {1304.3709},
  archivePrefix = {arXiv}
}

@article{AndersonDuclosCianciPoulin2014,
  author  = {Anderson, Jonas T. and Duclos-Cianci, Guillaume and Poulin, David},
  title   = {Fault-Tolerant Conversion between the {Steane} and {Reed--Muller} Quantum Codes},
  journal = {Physical Review Letters},
  volume  = {113},
  pages   = {080501},
  year    = {2014},
  doi     = {10.1103/PhysRevLett.113.080501},
  eprint  = {1403.2734},
  archivePrefix = {arXiv}
}

@article{BombinGaugeCC,
  author  = {Bomb{\'i}n, H{\'e}ctor},
  title   = {Gauge Color Codes: Optimal Transversal Gates and Gauge Fixing in Topological Stabilizer Codes},
  journal = {New Journal of Physics},
  volume  = {17},
  number  = {8},
  pages   = {083002},
  year    = {2015},
  doi     = {10.1088/1367-2630/17/8/083002}
}

@article{HorsmanSurgery,
  author  = {Horsman, Dominic and Fowler, Austin G. and Devitt, Simon and Van Meter, Rodney},
  title   = {Surface Code Quantum Computing by Lattice Surgery},
  journal = {New Journal of Physics},
  volume  = {14},
  number  = {12},
  pages   = {123011},
  year    = {2012},
  doi     = {10.1088/1367-2630/14/12/123011}
}

@article{CohenSurgery,
  author  = {Cohen, Lawrence Z. and Kim, Isaac H. and Bartlett, Stephen D. and Brown, Benjamin J.},
  title   = {Low-Overhead Fault-Tolerant Quantum Computing Using Long-Range Connectivity},
  journal = {Science Advances},
  volume  = {8},
  number  = {20},
  pages   = {eabn1717},
  year    = {2022},
  doi     = {10.1126/sciadv.abn1717}
}

@article{CowtanSurgery,
  author  = {Cowtan, Alexander and Burton, Simon},
  title   = {{CSS} Code Surgery as a Universal Construction},
  journal = {Quantum},
  volume  = {8},
  pages   = {1344},
  year    = {2024},
  doi     = {10.22331/q-2024-05-14-1344}
}

@article{YoderPartitioning,
  author  = {Yoder, Theodore J. and Takagi, Ryuji and Chuang, Isaac L.},
  title   = {Universal Fault-Tolerant Gates on Concatenated Stabilizer Codes},
  journal = {Physical Review X},
  volume  = {6},
  pages   = {031039},
  year    = {2016},
  doi     = {10.1103/PhysRevX.6.031039}
}

@inproceedings{GrasslRoetteler2013,
  author    = {Grassl, Markus and Roetteler, Martin},
  title     = {Leveraging Automorphisms of Quantum Codes for Fault-Tolerant Quantum Computation},
  booktitle = {2013 IEEE International Symposium on Information Theory (ISIT)},
  pages     = {534--538},
  year      = {2013},
  doi       = {10.1109/ISIT.2013.6620283},
  eprint    = {1302.1035},
  archivePrefix = {arXiv}
}

@article{Moussa2016,
  author  = {Moussa, Jonathan E.},
  title   = {Transversal {Clifford} Gates on Folded Surface Codes},
  journal = {Physical Review A},
  volume  = {94},
  pages   = {042316},
  year    = {2016},
  doi     = {10.1103/PhysRevA.94.042316},
  eprint  = {1603.02286},
  archivePrefix = {arXiv}
}

@article{Quintavalle_2023,
  author  = {Quintavalle, Armanda O. and Webster, Paul and Vasmer, Michael},
  title   = {Partitioning Qubits in Hypergraph Product Codes to Implement Logical Gates},
  journal = {Quantum},
  volume  = {7},
  pages   = {1153},
  year    = {2023},
  doi     = {10.22331/q-2023-10-24-1153}
}

@misc{SayginelEtAl2024,
  author  = {Sayginel, Hasan and Koutsioumpas, Stergios and Webster, Mark and Rajput, Abhishek and Browne, Dan E.},
  title   = {Fault-Tolerant Logical {Clifford} Gates from Code Automorphisms},
  year    = {2024},
  eprint  = {2409.18175},
  archivePrefix = {arXiv},
  primaryClass  = {quant-ph}
}

@article{WebsterSearch,
  author  = {Webster, Mark A. and Quintavalle, Armanda O. and Bartlett, Stephen D.},
  title   = {Transversal Diagonal Logical Operators for Stabiliser Codes},
  journal = {New Journal of Physics},
  volume  = {25},
  number  = {10},
  pages   = {103018},
  year    = {2023},
  doi     = {10.1088/1367-2630/acfc5f}
}

@article{VasmerBrown3D,
  author  = {Vasmer, Michael and Browne, Dan E.},
  title   = {Three-Dimensional Surface Codes: Transversal Gates and Fault-Tolerant Architectures},
  journal = {Physical Review A},
  volume  = {100},
  pages   = {012312},
  year    = {2019},
  doi     = {10.1103/PhysRevA.100.012312}
}

@article{4DTCnonClifford,
  author  = {Jochym-O'Connor, Tomas and Yoder, Theodore J.},
  title   = {Four-Dimensional Toric Code with Non-{Clifford} Transversal Gates},
  journal = {Physical Review Research},
  volume  = {3},
  pages   = {013118},
  year    = {2021},
  doi     = {10.1103/PhysRevResearch.3.013118}
}

@misc{burton2020limitations,
  author  = {Burton, Simon and Browne, Dan},
  title   = {Limitations on Transversal Gates for Hypergraph Product Codes},
  year    = {2020},
  eprint  = {2012.05842},
  archivePrefix = {arXiv},
  primaryClass  = {quant-ph}
}

@article{BravyiHaah2012,
  author  = {Bravyi, Sergey and Haah, Jeongwan},
  title   = {Magic-State Distillation with Low Overhead},
  journal = {Physical Review A},
  volume  = {86},
  pages   = {052329},
  year    = {2012},
  doi     = {10.1103/PhysRevA.86.052329},
  eprint  = {1209.2426},
  archivePrefix = {arXiv}
}

@misc{lin2024transversal,
  author  = {Lin, Ting-Chun},
  title   = {Transversal Non-{Clifford} Gates for Quantum {LDPC} Codes on Sheaves},
  year    = {2024},
  eprint  = {2410.14631},
  archivePrefix = {arXiv},
  primaryClass  = {quant-ph}
}

@misc{golowich2024quantum,
  author  = {Golowich, Louis and Lin, Ting-Chun},
  title   = {Quantum {LDPC} Codes with Transversal Non-{Clifford} Gates via Products of Algebraic Codes},
  year    = {2024},
  eprint  = {2410.14662},
  archivePrefix = {arXiv},
  primaryClass  = {quant-ph}
}

@article{HsinKobayashiZhu25,
  title = {Non-Abelian Self-Correcting Quantum Memory and Transversal Non-Clifford Gate Beyond the $n^{1/3}$ Distance Barrier},
  author = {Hsin, Po-Shen and Kobayashi, Ryohei and Zhu, Guanyu},
  journal = {PRX Quantum},
  volume = {6},
  number = {4},
  pages = {040360},
  year = {2025},
  month = dec,
  publisher = {American Physical Society},
  doi = {10.1103/hkgn-hhqx}
}

@article{KobayashiZhuHsin26,
  title = {Clifford Hierarchy Stabilizer Codes: Transversal Non-Clifford Gates and Magic States},
  author = {Kobayashi, Ryohei and Zhu, Guanyu and Hsin, Po-Shen},
  journal = {Phys. Rev. Lett.},
  volume = {136},
  number = {25},
  pages = {250802},
  year = {2026},
  month = jun,
  publisher = {American Physical Society},
  doi = {10.1103/j2q4-bnc6}
}

@article{WarmanSchaferNameki26,
  title = {Constant-Depth Clifford-Hierarchy Gates via Non-Abelian Surface Codes},
  author = {Warman, Alison and Sch{\"a}fer-Nameki, Sakura},
  journal = {PRX Quantum},
  volume = {7},
  number = {3},
  pages = {033029},
  year = {2026},
  month = aug,
  publisher = {American Physical Society},
  doi = {10.1103/kfxl-rv7c}
}

@misc{HuangWarmanSchaferNamekiChen25,
  title        = {Hybrid Lattice Surgery: Non-Clifford Gates via Non-Abelian Surface Codes},
  author       = {Huang, Sheng-Jie and Warman, Alison and Sch{\"a}fer-Nameki, Sakura and Chen, Yanzhu},
  year         = {2025},
  eprint       = {2510.20890},
  archivePrefix = {arXiv},
  primaryClass = {quant-ph}
}

\end{document}